\theoremstyle{definition}
\newtheorem{definition}{Definition}[section]
\newtheorem{theorem}{Theorem}[section]
\newtheorem{lemma}[theorem]{Lemma}
\theoremstyle{remark}
\newtheorem*{remark}{Remark}
\newtheorem*{example}{Example}
\definecolor{mygreen}{rgb}{0,0.6,0}
\definecolor{mygray}{rgb}{0.5,0.5,0.5}
\definecolor{mymauve}{rgb}{0.58,0,0.82}
\def\result#1{\emph{\color{blue} #1}}
\newcommand{\tuple}[1]{\ensuremath{\left \langle #1 \right \rangle }}
\def\sideNote#1{\marginpar{\indent \color{gray}{\footnotesize{\emph{#1}}}}}
\newcommand{\silence}[1]{}
\newif\ifdraft
\def\inlineRem#1{{\noindent\bf\fcolorbox{red}{white}{\color{blue}{$\|$~{#1}~$\|$}}}}
\def\sticky#1{\colorbox{yellow}{\scriptsize\color{blue} @@} \marginpar{\scriptsize\color{red} #1}}
\newcommand{\rem}[1]{}
\newcommand{\diffRem}[1]{}
\title{A Cube Algebra with Comparative Operations: Containment, Overlap, Distance and Usability}
\author{Panos Vassiliadis\\
Dept. of Computer Science and Engineering\\
University of Ioannina\\
Ioannina 45110, Hellas\\
pvassil@cs.uoi.gr
}
\begin{document}

\maketitle

\begin{abstract}
In this paper, we provide a comprehensive rigorous modeling for multidimensional  spaces  with  hierarchically  structured  dimensions  in  several layers of abstractions and data cubes that live in such spaces. We model cube queries and their semantics and define typical OLAP operators like Selections, Roll-Up, Drill-Down, etc. The model serves as the basis to offer the main contribution of this paper which includes theorems and algorithms for being able to associate data cube queries via comparative operations that are evaluated only on the syntax of the queries involved. Specifically, these operations include: (a) foundational containment, referring to the coverage of common parts of the most detailed level of aggregation of the multidimensional space, (b/c) same-level containment and intersection, referring to the inclusion/existence of common parts of the multidimensional space in two query results of the same aggregation levels, (d) query distance, referring to being able to assess the similarity of two queries in the same multidimensional space, and, (e) cube usability, i.e., the possibility of computing a new cube from a previous one, defined at a different level of abstraction.
\end{abstract}

\section{Introduction}
Multidimensional spaces with hierarchically structured dimensions over several levels of abstraction, along with data cubes, (i.e., structured collections of data points at the same level of detail in the context of such spaces) -- all bundled under the focused term  On-Line analytical Processing (OLAP), or the broader encompassing term Business Intelligence tools -- provide a paradigm for data management whose simplicity is hard to match. Nowadays, there is a proliferation of data management paradigms, data science and analytics frameworks and tools, and a solid move from the business analyst's needs to the needs of the data scientist and the data journalist. In all these attempts, however, the notion of hierarchically structured dimensions is not actually being used as an inherent part of the modern data visualization tools, notebooks or similar approaches (the need for defining and populating the levels of the multidimensional space being a serious reason for that). Thus, despite the fact that OLAP technology, is practically 30 years old at the time of the writing of these lines, the idea of organizing data in \emph{simple} data cubes that can be manipulated at \emph{multiple levels of detail} has not been replaced or surpassed by the current trends in any way.

In this paper, we start on the assumption that there is merit is using the multidimensional paradigm as a basis for data management by end-users -- see \cite{DBLP:conf/dolap/VassiliadisM18, DBLP:journals/is/VassiliadisMR19} with its extension with information-rich new constructs to address the new requirements of the 2020's. \result{The problem that this paper addresses is, however, quite more fundamental and can be summarized as the definition of a cube algebra extended with comparative operators between cube queries, and in the introduction of theorems and algorithms for checking and performing these operators -- or in a more concise way: "how can we compare the results of two data cube queries?"}. The problem of defining the comparative operations between two cubes boils down to very traditional problems explored by the relational world decades ago, like: can we tell when one query is contained within another, or, which is the not-common part of two (cube) queries based only on their syntax? The particularity of hierarchical dimensions is that containment, intersection, distance and the rest of the comparative operations between constructs in the multidimensional spaces such dimensions form are not only the explicit ones, at the specific levels at which two data cubes are defined, but also implicit ones, at different levels of abstraction. To the best of our knowledge, an explicit framework for handling the comparative operations of two cubes does not exist to the day.

\textbf{Example and Motivation}. The main reason for addressing the problem can be discussed on the grounds of an example, as depicted in Figure~\ref{fig:ex}. Assume a multidimensional space involving data for the customers of a tax office over three dimensions, specifically \emph{Time}, \emph{Education} and \emph{Workclass}, for which \emph{taxes paid} and \emph{hours devoted} are recorded by the tax office. The dimensions are structured in layers of levels, and for clarity reasons, we label the levels $L1$, $L2$, etc., with the higher index for a level indicating a high level of coarseness ($L0$ being the most detailed level of all in a dimension, and $ALL$ being the most coarse one with a single value 'All'). The analysts of the tax office, at the end of a year, fire several analytical queries, to understand the behavior of the monitored population better. An automatic query generator/recommender tool aids the data exploration by automatically generating queries. Assume to queries, $Q1$ and $Q2$ have been issued already by an analyst, as depicted in Figure~\ref{fig:ex}, sharing the same groupers ($Month$, $Workclass.L1$ and $Education.L2$), but with different selection conditions. \emph{Assume the query recommender automatically generates the query $Q3$, depicted on top of the Figure~\ref{fig:ex}. Should it recommend it to the user?}
The decision can be evaluated on the grounds of different dimensions:
\begin{itemize}
    \item Is $Q3$ \underline{relevant}? Relevance refers to exploring parts of the multidimensional space that are in the focus of the user's interest. One interpretation of relevance is that the more visited a certain subset of the space, the more relevant it seems to be for the user. We can see in Figure~\ref{fig:ex}, that the area pertaining to the intersection of $Q1$ and $Q2$ (year 2019 that is) seems to be the most "hot", whereas the area of $Q3$ ls less "hot". So, we need mechanisms for query intersection between new and old queries.
    \item Is $Q3$ \underline{peculiar and diverse} with respect to the current session? As another alternative, one might want to answer the question: how "far" is the new query Q3 from the previous ones? (implying that the farthest a query is from the previous ones, the more novel it is). In this case, a way to infer the distance of two queries based on their syntax is also necessary.
    \item Is $Q3$ \underline{novel}? Novelty refers to the revelation of new facts to the user. Although the observant reader might suggest that since $Q3$ is at a different level of granularity than the previous queries, and thus, necessarily novel, at the same time, it is also true that in terms of the area the query covers in the multidimensional space, the area is a clear subset of the area of $Q2$. We thus need a mechanism to detect inclusion, overlap and non-overlap of queries at the detailed level, at least as a pre-requisite to facilitate the assessment of novelty. This includes both the query as a whole, but also which subset of its cells are (non)overlapping with previous queries. 
\end{itemize}
    
\begin{figure}[tb]
  \centering
    \includegraphics[width=0.9\textwidth]{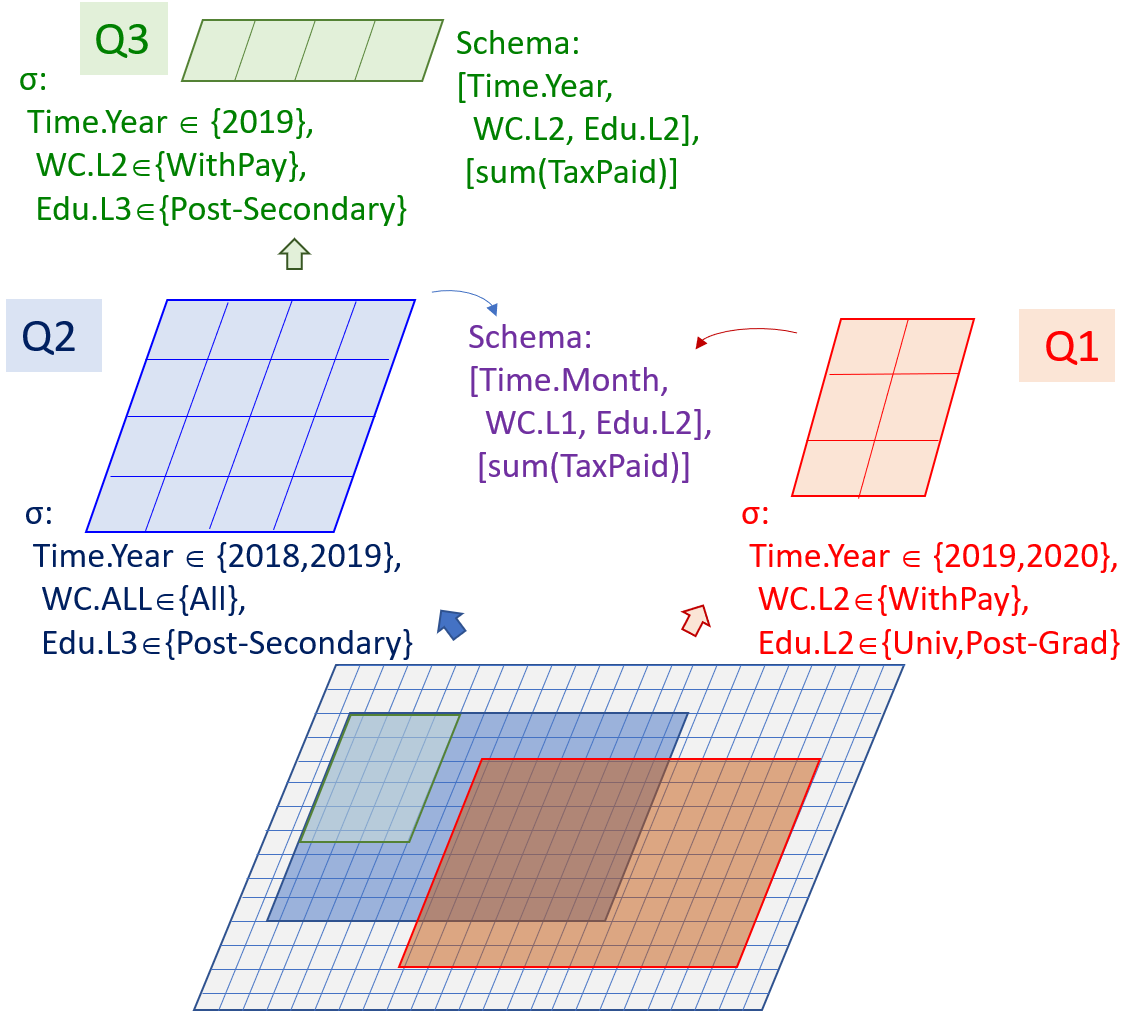}
      \caption{An example of 3 cube queries with different relationships to each other, all defined over the same data set, at the most low level of detail }
      \label{fig:ex}
\end{figure}

Being at different levels of granularity is only one aspect of the overlap of two queries: assume now that someone asks "which part of $Q2$ is novel with respect to $Q1$?". The problem that arises here has several characteristics: (a) the two cubes are at the same level of detail, for all their dimensions, and thus, the question is not necessarily needing an answer at the detailed level; (b) the question is not a Boolean one ("is the query novel?", but a fractional one: "which cells of $Q2$ can be enumerated in a 'novel' subset of the result of $Q2$?"); (c) even if two cells belonging to the two cubes have the same coordinates, how can we guarantee that they have been computed from the same detailed facts at the base level? For all these, we need mechanisms to appropriately enumerate the subset of two cube query results that is indeed common, without computing the results.

An extremely important requirement here is to be able to do this kind of computation without executing the queries, and thus having their cells at hand, but, deciding on these relationships based only on their syntactic definition. This is of uttermost importance as actually executing the queries and comparing the cells of the results is a lot more costly than comparing the syntax of the queries only.

Bear in mind, also, that most of the above problems can come in two variants: (a) an existential variant that is answered via a Boolean answer (e.g., "is the query $Q3$ novel with respect to the detailed area it covers?"), or (b) a quantitative variant that is answered with a score (e.g., "can we give a relevance score to $Q3$?").

\textbf{Contribution}. Traditionally, related work has handled the problem of query containment and view usability for the relational case (see Section 2 for a discussion, and \cite{Hale01}, \cite{Cohe05}, \cite{DBLP:reference/db/Cohen09} and \cite{DBLP:reference/db/Vassalos09} as reference pointers to the related work). The existence of hierarchically structured dimensions in the case of multidimensional spaces with different possible levels of aggregations, as a context for the determination of cube usability has not been extensively dealt with by the database community, however. \result{We attempt to fill this gap by providing a comprehensive rigorous modeling and the respective theorems and algorithms for being able to associate data cube queries via comparative operations}.

The contributions of this paper can be listed as follows.
\begin{itemize}
    \item In Section~\ref{sec:bckgr}, we provide a comprehensive model for multidimensional hierarchical spaces, cubes and cube queries, that can facilitate a rich query language. We categorize selection conditions with respect to their complexity. Moreover, we also show how the most typical operations in OLAP, like Roll-Up, Drill-Down, Slice and Drill-Across can all be modeled as queries of our model.
    \item Based on the intrinsic property of the model that all query semantics are defined with respect to the most detailed level of aggregation in the hierarchical space, and in contrast to all previous models of multidimensional hierarchical spaces, in Section~\ref{sec:DescSign}, we accompany the proposed model with definitions of equivalent expressions at different levels of granularity. We introduce the necessary terminology and notation, too, to solidify these concepts in the vocabulary of multidimensional modeling. Specifically, we introduce (a) proxies, i.e., equivalent expressions at different levels of abstraction, (b) signatures, i.e., sets of coordinates specifying a "border" in the multidimensional space that specifies a sub-space pertaining to a model's construct, and, (c) areas, i.e., set of cells enclosed within a signature.
    \item We introduce the problem of whether a cube $c^n$ is fundamentally contained within another cube $c^b$, i.e., whether the area at the lowest level of aggregation that pertains to it is a subset of the respective area that pertains to $c^b$, in Section~\ref{sec:foundCont}. We introduce theorems for both the decision (i.e., Boolean) and the enumeration variant (i.e., which cells are outside the jointly covered area) of the problem.
    \item Having introduced the containment problem at the most detailed level, in Section \ref{sec:SameLevelCont}, we move on to tackle the problem of containment for cubes sharing the exact same schema, which we call the same-level containment problem, under the constraint of not computing the result of the queries, but using only their syntactic expression. To address the problem, which comes with the complexity of having to deal with grouper dimensions where selections have also been posed, we introduce the notion of rollability which refers to the property of the combination of a filter and a grouper level at the same dimension to produce result coordinates that are fully covering the respective subspace at the most detailed level. Then, we introduce the decision and the enumeration problem and provide checks and algorithms for both of them.
    \item In Section~\ref{sec:intersect}, we deal with the problem of (syntactic) query intersection, i.e., deciding whether, and to what extent, the results of two queries overlap, given their syntactic expression only. Again, we introduce the decision and enumeration problems, as well as the enumeration problem of a query being tested for intersection with the members of a query set.
    \item Departing from containment and intersection problems, in Section~\ref{sec:queryDistance}, we enrich the methods discussed by the paper by introducing a set of formulae for the evaluation of the distance of two queries.
    \item Usability: In Section~\ref{sec:cubeUsa}, we discuss the possibility of computing a new cube from a previous one, defined at a different level of abstraction; we introduce the respective test as well as a rewriting algorithm. 
    
\end{itemize}

Section \ref{sec:rw} provides a discussion of the related work. 
Section \ref{sec:concl} provides conclusions and open issues for future work.
\newpage

\section{Related Work}\label{sec:rw} 

\subsection{Models for hierarchical multidimensional spaces and cubes }
There is an abundance of \emph{models for multidimensional hierarchical databases, cubes, and OLAP} that is very well surveyed in \cite{DBLP:conf/dawak/RomeroA07}. \cite{DBLP:conf/dawak/RomeroA07} evaluates a number of formal models on the support of operations like: 
(a) Set operations like union, difference and intersection between cubes,
(b) Selection, i.e., the application of a filter over a cube, 
(c) Projection, i.e., the filtering-out of some unwanted measures from a cube, 
(d) Drill-Across, by joining a new measure to the cube (hiding possibly the join of different fact tables, at the physical layer),
(e) Roll-Up and (f) Drill-Down by changing the coarseness of the aggregation to more coarse, or finer levels of detail, respectively, 
(g) ChangeBase by re-ordering the sequence of levels in the schema of a cube (which is mostly a presentational, rather than a logical operation -- e.g., to perform a pivot), or assigning the same cells to a different multidimensional space.

We refer the reader to \cite{DBLP:conf/dawak/RomeroA07} for the details of the different models that have been proposed in the literature. In this paper, we extend a more restricted version of the model \cite{DBLP:journals/is/VassiliadisMR19} and formalize rigorously the entire domain of  multidimensional spaces with hierarchical dimensions, data cubes and cube queries, show our support for the most fundamental operations and use this model as the basis for the rest of the material concerning the comparison operations between cubes, like containment, intersection, etc. We refer the reader to Section~\ref{sec:paperNovelty} for a discussion of the differences with \cite{DBLP:journals/is/VassiliadisMR19} with respect to the modeling part (as well as for the completely new parts on comparative operations).

\subsection{Relationships between views: usability and containment}
The literature around the answering of queries via previously
answered query results (i.e., views or cached queries) is
typically organized around three main themes, which we present in
an order of increasing difficulty :
\begin{itemize}
  \item the \emph{query containment problem} is a decision problem where the goal is to determine whether the set of
  tuples computed by a query $Q$ is always a subset of the set of
  tuples  produced by a query $Q'$ independently of the contents
  of the database over which both queries are defined
  \item the \emph{view usability problem} is a similar problem that concerns the decision on whether a query $Q$
defined over a set of relations $S^{Q}$ \emph{can be} answered via
a view $V$ (and possibly a set of auxiliary relations $S^{V}$
$\subseteq$ $S^{Q}$) such that the resulting set of tuples is
identical, independently of the contents of the underlying
database
  \item the \emph{query rewriting problem} concerns \emph{how} the
query $Q$ must be rewritten in order to be answered via $V$
\end{itemize}

Excellent surveys and lemmas exist that summarize these areas. Halevy \cite{Hale01} addresses the
general problem of answering queries using views.  A dedicated survey of the special problem
of aggregate query containment by Sara Cohen is \cite{Cohe05}. Two lemmas on the topic are \cite{DBLP:reference/db/Cohen09} and \cite{DBLP:reference/db/Vassalos09}.
We refer the
interested reader to all the aforementioned surveys and lemmas for a broader coverage of the topic.


\subsubsection{Query Containment}
As typically happens in the database literature, simple
conjunctive queries are the basis of the research efforts in the
area of view usability. The problem of query containment for conjunctive queries (without any form of aggregation) has been extensively studied since 1977, when Chandra and Merlin \cite{CM77} provided their famous results on the NP-completeness of finding homomorphisms between conjunctive queries. We refer the interested reader to \cite{DBLP:conf/pods/NuttSS98} and \cite{DBLP:journals/tods/CohenNS06} for a discussion of a large set of papers dealing with different aspects of conjunctive query containment.

In \cite{DBLP:conf/pods/NuttSS98} (and its long version \cite{DBLP:journals/jacm/CohenNS07}), Nutt et al., are concerned with the problem of equivalences between aggregate queries. The paper explores the case of conjunctive queries with simple selection conditions and typical aggregate functions. The selection conditions involve simple comparisons between attributes or attributes and values. The aggregate functions involve $min$, $max$, $sum$, $count$ and $count~distinct$. Assuming two queries $q$ and $q'$, the goal of the approach is (a) to check whether the heads of the queries are compatible (in other words, the grouping attributes and the aggregate function are compatible), and, (b) to find homomorphisms between the variables of $q$ and $q'$. Due to the problem of the $distinct~count$ the paper discriminates between set and bag semantics for its aggregate queries. In \cite{CoNS99}, Cohen et al., extend the results of \cite{DBLP:conf/pods/NuttSS98} by handling disjunctive selections too.
However, in all the above works, the usage of multidimensional data with dimensions including hierarchies is absent: all the methods operate on simple relational data and Datalog queries extended with aggregations.

\subsubsection{View usability}
Introducing views into the definition of a query in order to replace 
some relations of the original query definition is not as straightforward
as one would typically expect. Following \cite{Hale01}, we can informally
say that in the simple case where both the view $V$ and the query
$Q$ are conjunctive queries, view usability requires that there is
a mapping of the relations involved in the view to the relations
involved in the query; the view must provide looser selection
conditions than the query; and, finally, the view must contain all
the necessary fields that are needed in order to (a) apply all the
necessary extra selection conditions to compensate for the looser
selection of the view and (b) retrieve the final result of the
query (practically the SELECT clause of the query). When aggregate
views are involved, the situation becomes more complicated, since
we must guarantee that the "conjunctive part" of the view and the
query must produce the same "base" over which equivalent
aggregations are performed (keep in mind that aggregations have
the inherent difficulty of having to deal with the problem of
producing the same number of tuples correctly -- a.k.a. the
notorious 'count' problem). 

Larson and Yang in \cite{DBLP:conf/vldb/LarsonY85} provide a solution to the problem of view usability for views and queries that are simple Select-Project-Join (SPJ) queries.

Das et al \cite{DJLS96} have provided
a paper handling view usability for a large number of SQL query
classes. A particular feature of the paper is that the method
handles both the case where the view is not an aggregate view and
the case where the view is also performing an aggregation over the
underlying data. The paper is also accompanied by algorithms to rewrite the queries over the views.

\subsubsection{Query rewriting}
The rewriting problem has received attention from both a theoretical and a practical perspective; the former deals with theoretical establishment of equivalences whereas the second follows an optimizer-oriented approach.

Levy et al in \cite{DBLP:conf/pods/LevyMSS95} provide a first simple algorithm for rewriting conjunctive queries. This is done by determining the relations that can be removed from a query once a view is used (instead of them). The paper investigates also the cases of $minimal$ and $complete$ rewritings. Minimal rewritings are the ones where literals cannot be further removed and complete rewritings are the ones where only views participate in the new query.

Chaudhuri et al in \cite{DBLP:conf/icde/ChaudhuriKPS95} deal with the optimization of SPJ queries (without aggregations) by extending the join enumeration part of a traditional System-R optimizer with the possibility of considering materialized SPJ views, too. In a similar fashion, Gupta et al in \cite{DBLP:conf/vldb/GuptaHQ95} consider the problem for the case of aggregate queries and views by introducing generalized projections as part of the query plan and pulling them upwards or pushing them downwards in it. Similarly, Chaudhuri and Shim \cite{DBLP:conf/edbt/ChaudhuriS96} explore the problem of pulling up or pushing down aggregations in a query tree.

Returning back to the theoretical perspective, Cohen et al \cite{CoNS99} deal with the case of aggregate query
rewriting for the cases where the aggregate function is sum or
count. Specifically, the paper deals with the problem of replacing 
a query $q$ defined over the database $D$ with a new, equivalent 
query $q'$ that also includes views from a set $V$. The main idea of 
the paper is to unfold the definitions of the views, so that the comparison 
is done in terms of query containment. 

Grumbach and Tininini \cite{GT03} explore the
rewriting problem for aggregate views for the case of views and
queries without any selection conditions at all. In \cite{DBLP:journals/acta/GrumbachRT04} the authors introduce a new syntactic equivalence relation between conjunctive queries, called isomorphism modulo a product to capture the multiplicity of duplicates, and discuss the problem of obtaining isomorphisms, which proves to be NP-complete. The authors discuss the view usability problem for bag-views.

In \cite{DBLP:conf/icdt/CohenNS03} the authors provide containment characterizations and rewritings for count queries. In \cite{DBLP:journals/tods/CohenNS06}, the authors provide a discussion of aggregation functions, the identification of rewriting candidates and the problem of existence of a rewriting.

\subsubsection{Multidimensional hierarchical space of data}
All the aforementioned approaches work with plain relational data, whose attributes are plain relational attributes. What happens though when we need to work in multidimensional \emph{hierarchical} spaces, i.e., with dimensions involving hierarchies?

Concerning the OLAP field, the case of cube usability resolves in queries and views having joins between a fact table and its dimension tables. However, although implications between an atom of the view and an atom of the query can be handled if they are
defined over the same attribute, to the best of our knowledge, the only work where it is possible to handle the implications between atoms defined at different levels (i.e., attributes) is \cite{DBLP:conf/caise/VassiliadisS00} (long v., at \cite{VassiliadisPhdthesis}). This has to do both with the case where the atoms are of the form $L$ $\theta$ $value$ (along with the marginal constraints for the values involved) and with the case where the atoms are of the form $L$ $\theta$ $L'$ (where implications among different levels have to be defined via a principled reasoning mechanism), with $\theta \in \{=, <, >, \leq, \geq \}$. 

Theodoratos and Sellis \cite{DBLP:conf/ssdbm/TheodoratosS00} also propose a reasoner-based approach. To the best of our understanding, the mechanism of performing the reasoning between different levels is unclear. Moreover, the handling of the combination of selections and aggregation is not explicit, thus requiring additional constraints for the method to work.  

\subsection{Comparative Discussion}\label{sec:paperNovelty}

Overall we would like to stress that our approach is one of the first attempts to comprehensively introduce comparative operations that are particularly tailored for the context of hierarchical multidimensional data (i.e., in the presence of hierarchies) and cube (i.e., query) expressions defined over them. Specifically, compared to previous works, the current papers produces the following novel aspects:
\begin{enumerate}

\item The paper comes with a comprehensive model for hierarchical multidimensional spaces and query expressions in them. Compared to the multidimensional model of \cite{DBLP:journals/is/VassiliadisMR19}, we provide the following extensions:
\begin{itemize}
    \item We slightly improve notation.
    \item We discriminate different classes of selection conditions and discuss proxies, areas and signatures.
    \item We extend the working type of selection atoms to set-valued atoms (which practically poses all the subsequent issues on a new basis). All the following contributions are completely novel with respect to \cite{DBLP:journals/is/VassiliadisMR19}.
\end{itemize}
\item The paper comes with a principled set of tests for the decision problem of testing containment at various level of detail (specifically: foundational, same-level, and, different level containment), query intersection (at various levels of detail), and query distance as well as for the enumeration problem of reporting on the specific cells that fall within/exceed the boundaries of containment and intersection.
\item Compared to previous work on query rewriting for hierarchical multidimensional spaces, we work with set-valued rather than single-valued selections (although we do not cover comparisons between levels, or with arbitrary comparators other than equality).

\end{enumerate}

\newpage
\section{Formalizing data, dimension hierarchies cubes and cube queries}\label{sec:bckgr}

In this Section, we give the formal background of our modeling concerning multidimensional databases, hierarchies and queries. 



As typically happens with multidimensional models, we assume that \emph{dimensions} provide a \emph{context} for facts
\cite{DBLP:series/synthesis/2010Jensen}. This is especially
important considering that dimension values come in
\emph{hierarchies}; every single fact can be
simultaneously placed in multiple hierarchically-structured
contexts, thus giving users the possibility of analyzing sets of facts from
different perspectives. The underlying data sets include \emph{measures} that are characterized with respect to these dimensions. \emph{Cube queries} involve measure aggregations at specific levels of granularity per dimension, along with filtering of data for specific values of interest. 

\ifdraft

\textbf{NOTATION}\rem{make a 5-col table}

\textit{D}\ \ \ \ a Dimension

\textit{L}\ \ \ \ a Level

\textit{M}\ \ \ \ a Measure

\textit{A}\ \ \ \ an attribute (e.g., level, measure or other)

\textit{C}\ \ \ \ a Cube with a schema and a set of cells

\textit{C}$^{0}$\ \ \ \ a detailed cube

\textit{c}\ \ \ \ a cell

\textit{q}\ \ \ \ a cube query (whose result is also a cube)

\textit{q}$^{0}$\ \ \ \ the detailed proxy of a cube query q

\textit{Q}\ \ \ \  a list of (past) queries

\textit{$\cdot^{+}$}\ \ \ \ coordinates of a cell, or selection condition, or query

\fi


\subsection{Domains, dimensions and underlying data}

\textbf{Domains}. We assume the following infinitely countable and
pairwise disjoint sets: a set of \emph{level names} (or simply
\result{levels}) $\mathcal{U_{L}}$, a set of \emph{measure names}
(or simply \result{measures}) $\mathcal{U_{M}}$, a set of
\emph{regular~data~columns} $\mathcal{U_{A}}$, a set of
\emph{dimension} names (or simply \emph{dimensions})
$\mathcal{U_{D}}$ and a set of \emph{cube} names (or simply
\emph{cubes}) $\mathcal{U_{C}}$. The set of \emph{data~columns}
 $\mathcal{U}$ is defined as $\mathcal{U}$ =
$\mathcal{U_{L}}$  $\cup$ $\mathcal{U_{M}}$ $\cup$
$\mathcal{U_{A}}$. For each $L$ $\in$ $\mathcal{U_{L}}$, we define
a countable totally ordered set $dom(L)$, the domain of $L$, which
is isomorphic to the integers. Similarly, for each $M$ $\in$
$\mathcal{U_{M}}$, we define an infinite set $dom(M)$, the domain
of $M$, which is isomorphic either to the real numbers or to the
integers. The domain for the regular data columns of
$\mathcal{U_{A}}$ is defined in a similar fashion to the one of
measures. We can impose the usual comparison operators to all the
values participating to totally ordered domains ${\{}<,
>, \le, \ge {\}}$.\\

\textbf{Dimensions and levels}.A \result{dimension} $D$ is a lattice ($\mathbf{L}$,$\preceq$) such that:
\begin{itemize}
	\item $\mathbf{L}$ = $\{$$L_{1}$,$\ldots$ ,$L_{n}$$\}$, is a finite subset of $\mathcal{U_{L}}$.
	\item $dom(L_{i})$ $\cap$ $dom(L_{j})$= $\emptyset$ for every $i$ $\ne$ $j$.
	
	\item $\preceq$ is a non-strict partial order defined among the levels of $\mathbf{L}$.\diffRem{symbol update} 
	
	\item With $D$ being a lattice, it follows that there is a highest and a lowest level in the hierarchy. The highest level of the hierarchy is the level $D$.$ALL$ with a domain of a single value, namely '$D$.$all$', for which it holds that $L$ $\preceq$ $ALL$ for all other levels $L$ in $\mathbf{L}$. Moreover, there is also the \emph{lowest~level} in the dimension, $D.L^0$, for which it holds that $L^0$ $\preceq$ $L$ for all other levels $L$ in $\mathbf{L}$. Whenever two levels are related via the partial order, say $L_{low}$ $\preceq$ $L_{high}$, we refer to $L_{low}$ as the descendant and to $L_{high}$ as the ancestor. 
\end{itemize}

Each path in the dimension lattice, beginning from its upper bound and ending in its lower bound is called a \emph{dimension path}. The values that belong to the domains of the levels are called \result{dimension members}, or simply \emph{members} (e.g., the values $Paris$, $Rome$, $Athens$ are members of the domain of level $City$, and, subsequently, of dimension $Geography$).\rem{must sync if definition of level is updated with a different treatment of properties}

\hrulefill

\begin{remark}
The reader is reminded, that a \textit{non-strict partial order} is reflexive (i.e., $L$ $\preceq$ $L$), antisymmetric (i.e., $L_1$ $\preceq$ $L_2$ and $L_2$ $\preceq$ $L_1$ means that $L_1$ = $L_2$), and transitive (i.e., $L_1$ $\preceq$ $L_2$ $\preceq$ $L_3$ means that $L_1$ $\preceq$ $L_3$).  As usually, we have to make two orthogonal choices: (a) whether the order is partial or total, and (b) whether the order is strict or non-strict. 
\begin{itemize}
	\item A partial order differs from a \textit{total order}, or \textit{chain}, in the part that it is possible that two elements of the domain can be non-comparable in the former, but not in the latter. Thus, we can have lattices, like the one for time, where $Week$ is not comparable to $Month$, but both precede $Year$ and follow $Day$. When we have to deal with chains (practically: instead of a lattice, the dimension levels form a linear chain), we will explicitly say so. 
    \item The intuition behind a non-strict order is "not higher than". A strict order, frequently denoted via the symbol $\prec$ on the other hand, revokes the reflexive property (i.e., $L$ $\nprec$ $L$, with the meaning "lower than"). The rationale for choosing a non-strict order, instead of a strict one, is convenience in the uniformity of notation. As we shall see later, we will introduce the notation $anc_{L_1}^{L_2}()$, and, we would like to be able to use the notation $anc_{L}^{L}()$ (effectively meaning $L$), without having to treat it as a special case.
\end{itemize}

\end{remark}

\hrulefill

To ensure the consistency of the hierarchies, a family of \textit{ancestor} functions $anc_{L_{1}}^{L_{2}}$ is defined, satisfying the following conditions:\sideNote{Constraints for ancestors and descendants}
\begin{enumerate}
	\item For each pair of levels $L_{1}$ and $L_{2}$ such that $L_{1}$ $\preceq$ $L_{2}$,
	the function $anc_{L_{1}}^{L_{2}}$ maps each element of $dom(L_{1})$ to an element of
	$dom(L_{2})$.
	
	\item Given levels $L_{1}$, $L_{2}$ and $L_{3}$ such that $L_{1}$ $\preceq$ $L_{2}$ $\preceq$
	$L_{3}$, the function $anc_{L_{1}}^{L_{3}}$ equals to the composition $anc_{L_{1}}^{L_{2}}$ $\circ$ $anc_{L_{2}}^{L_{3}}$.
	This implies that:
	\begin{itemize}
		\item $anc_{L_{1}}^{L_{1}}(x)$ = $x$.
		
		\item if $y$ = $anc_{L_{1}}^{L_{2}}(x)$ and $z$ = $anc_{L_{2}}^{L_{3}}(y)$, then $z$ = $anc_{L_{1}}^{L_{3}}(x)$.
		
		\item for each pair of levels $L_1$ and $L_2$ such that $L_1$ $\preceq$ $L_2$, the function $anc_{L_{1}}^{L_{2}}$ is monotone (preserves the ordering of values). In other words:
		
		$\forall$ $x$,$y$ $\in$ $dom(L_{1})$: $x$ $<$ $y$ $\Rightarrow$ $anc_{L_{1}}^{L_{2}}(x)$ $\le$
		$anc_{L_{1}}^{L_{2}}(y)$, $L_{1}$ $\preceq$ $L_{2}$
	\end{itemize}
	
	\item
	For each pair of levels $L_1$ and $L_2$ such that $L_1\preceq L_2$ the
	$anc_{L_{1}}^{L_{2}}$ function determines a set of finite equivalence classes $X_i$
	such that:
	\[(\forall x,y \in dom(L_1))\
	(anc_{L_{1}}^{L_{2}}(x)= anc_{L_{1}}^{L_{2}}(y)
	\Rightarrow x~and~y \mbox{ belong to the same } X_i).\]
	
	\item
	The relation $desc^{L_{low}}_{L_{high}}$ is the inverse of the $anc_{L_{low}}^{L_{high}}$
	function, i.e., 
	\[desc^{L_{low}}_{L_{high}}(v_h)= \{v_l \in dom(L_{low}):anc_{L_{low}}^{L_{high}}(v_l)=v_h\}.\]
\end{enumerate}

%

Observe that $desc(\cdot)$ is not a function, but a relation. With $anc(\cdot)$ and $desc(\cdot)$ we can compute the corresponding values of a dimension path at different levels of granularity in o(1).

\textbf{Level properties}. Levels can also also annotated with properties. For each level $L$, we define a finite set of functions, which we call properties, that annotate the members of the level. So, for each level $L$, we define a finite set of functions $\mathcal{F}^L$ = $\{F^L_1, ~\ldots,~F^L_k\}$, with each such function $F^L_i$ mapping the domain of $L$ to a regular data column $A_i$, s.t., $A_i$ $\in$  $\mathcal{U_{A}}$, i.e., $F^L_i$: $dom(L)$ $\rightarrow$ $dom(A_i)$.
So, for example, for the level $City$, we can define the functions $population()$ and $area()$. Then, for the value $Paris$ of the the level $City$, one can obtain the value $2M$ for  $population(Paris)$ and $100Km^2$ for $area(Paris)$.\\

\textbf{Schemata}. First, we define what a \textbf{schema} is in a multidimensional space.\\

A \emph{schema} $\mathbf{S}$ is a finite subset of
$\mathcal{U}$.\\

A \emph{multidimensional schema} is divided in two parts:
$\mathbf{S}$ = [$D_{1}$.$L_{1}$, $\ldots$, $D_{n}$.$L_{n}$,
$M_{1}$, $\ldots$, $M_{m}$], where:
\begin{itemize}
	\item $\{$$L_{1}$,$\ldots$ ,$L_{n}$$\}$ are levels from a dimension set
	$\mathbf{D}$ = $\{$$D_{1}$,$\ldots$, $D_{n}$$\}$ and level $L_{i}$ comes from dimension
	$D_{i}$, for 1 $\le$ $i$ $\le$ $n$.
	
	\item  $\{$$M_{1}$,$\ldots$, $M_{m}$$\}$  are measures.
\end{itemize}

A \emph{detailed multidimensional schema} $\mathbf{S}^{0}$ is a
schema whose levels are the lowest in the respective dimensions.\\

\textbf{Facts and cubes}. Now we are ready to define what a fact is, expressed as a \textbf{cell}, or multidimensional tuple  in the multidimensional space.\\

A \emph{tuple} under a schema $\mathbf{S}$ = [$A_{1}$, $\ldots$,
$A_{n}$] is a point in the space formed by the Cartesian Product 
of the domains of the attributes $A_i$,
$dom(A_{1})$ $\times$ $\ldots$ $\times$ $dom(A_{n})$, such that
$t[A]$ $\in$ $dom(A)$ for each $A$ $\in$ $\mathbf{S}$.\\


A \emph{multidimensional tuple}, or equivalently, a \result{cell} or a \emph{fact}, $t$ is a tuple under a multidimensional schema $\mathbf{S}$ =
[$D_{1}$.$L_{1}$, $\ldots$, $D_{n}$.$L_{n}$, $M_{1}$, $\ldots$,
$M_{m}$].\\


Having expressed what individual pieces of data, or facts, are, we are now ready to define \textbf{data sets} and \textbf{cubes}.\\

A \emph{data set} $\mathbf{DS}$ under a schema $\mathbf{S}$ =
[$A_{1}$, $\ldots$, $A_{n}$] is a finite set of tuples under
$\mathbf{S}$.\\

A \emph{multidimensional data set} $\mathbf{DS}$, also referred to as a \result{cube}, under a schema
$\mathbf{S}$ = [$D_{1}$.$L_{1}$, $\ldots$, $D_{n}$.$L_{n}$,
$M_{1}$, $\ldots$, $M_{m}$] is a finite set of cells under
$\mathbf{S}$ such that:

\begin{itemize}
	\item $\forall$ $t_{1}$, $t_{2}$ $\in$ $\mathbf{DS}$, $t_{1}$[$L_{1}$,$\ldots$, $L_{n}$] = $t_{2}$[$L_{1}$, $\ldots$, $L_{n}$] $\Rightarrow$ $t_{1}$ = $t_{2}$.
	\item for no strict subset $X$ $\subset$ ${\{}L_{1},\ldots ,L_{n}{\}}$, the previous also holds.
\end{itemize}


\noindent In other words, $M_{1}$, $\ldots$, $M_{m}$ are functionally
dependent (in the relational sense) on levels
$\{$$L_{1}$,$\ldots$ ,$L_{n}$$\}$ of schema $\mathbf{S}$.
Notation-wise, we use the expression $c$ $\in$ $\mathbf{DS}$ when a cell belongs to a multidimensional data set, and the expression $\mathbf{DS}.cells$ to refer to the set of tuples of a multidimensional data set.\\

A \emph{detailed multidimensional data set} $\mathbf{DS}^{0}$, also referred to as a \result{basic cube}, is a data set under a detailed schema $\mathbf{S}^{0}$.\\

A \emph{star schema} ($\mathbf{D}$,$\mathbf{S}^{0})$ is a couple
comprising a finite set of dimensions $\mathbf{D}$ and a detailed
multidimensional schema $\mathbf{S}^{0}$ defined over (a subset
of) these dimensions.

\subsection{Selections}\diffRem{Significant updates}

\textbf{Selection filters}. An \emph{atom} is an expression that takes one of the following forms: 
\begin{itemize}
    \item a Boolean value, i.e., $true$ or $false$ (with obvious semantics),
    \item $anc_{L_{0}}^{L}(L)$ $\theta$ $v$, or in shorthand, $L$ $\theta$ $v$,
with $v$ $\in$ $dom(L)$ and $\theta$ is an operator from the set $\{ >, <, =, \ge, \le, \ne \}$; equivalently, this expression can also be written as $L$ $\theta$ $v$.
    \item $anc_{L_{0}}^{L}(L)$ $\in$ $V$, $V$ being a finite set of values, $V$ = $\{ v_1, \ldots, v_k \}$, $v_i$ $\in$ $dom(L)$; equivalently, this expression can also be written as $L$ $\in$ $V$.
\end{itemize}

A \emph{conjunctive expression} is a finite set of atoms connected via the logical connectives $\wedge$.\\

A \emph{selection condition} $\phi$ is a formula involving atoms and the logical connectives $\wedge$, $\vee$ and $\neg$. The following subclasses of selection conditions are of interest:
\begin{itemize}
    \item A \emph{selection condition in disjunctive normal form} is a selection condition connecting conjunctive expressions via the logical connective $\vee$. 
    \item A \emph{multidimensional conjunctive selection condition} $\phi$ applied over a multidimensional data set $\mathbf{DS}$ is a selection condition with the following constraints: (a) it involves a single composite conjunctive expression, and, (b) there is exactly one atom per dimension of the schema of $\mathbf{DS}$.
    \item A \emph{dicing selection condition} $\phi$ applied over a multidimensional data set $\mathbf{DS}$ is a multidimensional conjunctive selection condition whose atoms are all of the form $anc_{L_{0}}^{L}(L)$ = $v$, or in shorthand, $L$ = $v$, (this also includes the special case of $anc_{L_{0}}^{ALL}(ALL)$ = $all$ for dimensions that would otherwise come with a $true$ atom). The term `dicing' is a typical term for such selection conditions in the OLAP domain.
    \item A \emph{simple selection condition} $\phi$ applied over a multidimensional data set $\mathbf{DS}$ is a multidimensional conjunctive selection condition whose atoms are all of the form $anc_{L_{0}}^{L}(L)$ $\in$ $V$ (this also includes the special cases of (a) a single-member set $V$, when an atom is of the form $L$ = $v$, and, (b) $anc_{L_{0}}^{ALL}(ALL)$ $\in$ $\{all\}$ for dimensions that would otherwise come with a $true$ atom).
\end{itemize}

The intuition behind the introduction of the above classes of selection conditions is that queries (see next) with selection conditions in disjunctive normal form can be handled as unions of queries with conjunctive expressions as selection conditions. Multidimensional selection conditions  come with a requirement for an atom per dimension, which is a convenience that will allow the homogeneous treatment of all dimensions in the sequel. Remember that $true$ is also an atom; practically equivalently, $D.ALL$ = $D.all$ includes the entire domain of a dimension's members. Thus, requiring an atom per dimension is easily achievable. \result{In the rest of our deliberations, wherever not explicitly mentioned for a certain dimension, an atom $D.ALL$ = $D.all$ is assumed}. \\

The \textit{semantics} of a selection condition is as follows: the expression $\phi(\mathbf{DS})$ produces a set of tuples $\mathbf{X}$ belonging to $\mathbf{DS}$ such that when, for all the occurrences of level names in $\phi$, we substitute the respective level values of every $x$ $\in$ $\mathbf{X}$, the formula $\phi$ becomes true.\\

A \emph{well-formed selection condition} is defined as a selection condition that is applied to a data set with all the level names that occur in it belonging to the schema of the data set and all the values of an atom pertaining to the domain of the respective level. \result{In the rest of our deliberations, unless specifically mentioned otherwise, we assume that all the selection conditions are simple, well-formed selection conditions}.\\

A \emph{detailed selection condition} $\phi^{0}$ is a selection condition where all participating levels are the detailed levels of their dimensions.\\


A multidimensional conjunctive selection condition $\phi$ produces an equivalent detailed selection condition, $\phi^0$, via the following mapping of atoms of $\phi$ to atoms of $\phi^0$ (remember that there is a single atom per dimension). \rem{examples of variants of $\phi$?} 
\begin{enumerate}
    \item Boolean atoms of $\phi$ are mapped to themselves in $\phi^0$
    \item Atoms of the form $anc_{L_0}^{L}(L)$ $\theta$ $v$, or in shorthand, $L$ $\theta$ $v$, are mapped to their detailed equivalents as follows, by exploiting the $desc$ mapping and the order-preserving monotonicity of the domains of all the levels: 
    \begin{itemize}
        \item $L$ = $v$ is mapped to $L_0$ $\in$ $desc_{L}^{L_0}(v)$
        \item $L$ $\ne$ $v$ is mapped to $L_0$ $\ne$ $desc_{L}^{L_0}(v)$
        \item $L$ $<$ $v$ is mapped to $L_0$ $<$ $min$($desc_{L}^{L_0}(v)$)
        \item $L$ $\le$ $v$ is mapped to $L_0$ $\le$ $max$($desc_{L}^{L_0}(v)$)
        \item $L$ $>$ $v$ is mapped to $L_0$ $>$ $max$($desc_{L}^{L_0}(v)$)
        \item $L$ $\ge$ $v$ is mapped to $L_0$ $\ge$ $min$($desc_{L}^{L_0}(v)$)
    \end{itemize}
    \item Atoms of the form $anc_{L_{0}}^{L}(L)$ $\in$ $V$, $V$ = $\{ v_1, \ldots, v_k \}$, $v_i$ $\in$ $dom(L)$ are mapped to  $anc_{L_{0}}^{L_{0}}(L_0)$ $\in$ $U$, or in shorthand $L_0$ $\in$ $U$, with $U$ = $\bigcup\limits_{i=1}^{k} desc_{L}^{L_0}(v_i)$    
\end{enumerate}

\hrulefill

\begin{remark}
Clearly, the above transformations require (and take advantage of) the monotonicity of the domains of the levels within a hierarchy (the second property of the ancestor family of functions). To forestall any possible criticism, here we discuss the feasibility, importance and consequences of this property.

First of all, \emph{feasibility}. With the exception of time-related dimensions, the vast majority of dimension levels are of nominal nature. For the time-related dimensions, the monotonicity property is inherent and not further elaborated. The nominal levels come with a finite set of discrete values, that do not necessarily hide any ordering, or any other isomorphism to the integers. Practically, these levels are internally represented via attributes in a Dimension table, and identified by Surrogate Keys, i.e., artificially generated integers, that allow the sorting of the values (although without any intuition of the sorting per se).  So, sorting and in fact, sorting with a respect of monotonicity between levels is feasible.

Second, \emph{importance}. The presence of a total ordering of the values, facilitates the direct rewriting of the expressions concerning high level intervals, to expressions also involving intervals at lower levels. So, any interval queries can be immediately translated to selection conditions at the most detailed level. Other than this, the model can work without the monotonicity property anyway. Also, in the absence of the monotone ordering, higher-level intervals can be translated to expressions involving set participation for the case of finite domains of dimensions (as typically happens in dimension tables).
So overall: monotonicity is feasible, useful for fast rewritings of range queries and its absence is amendable.

\end{remark}
\hrulefill

\subsection{Cube Queries and Sessions}
\textbf{Cube queries}. The user can submit \emph{cube queries} to
the system. A cube query specifies (a) the detailed data set over which
it is imposed, (b) the selection condition that isolates the
records that qualify for further processing, (c) the aggregator
levels, that determine the level of coarseness for the result, and
(d) an aggregation over the measures of the underlying cube
that accompanies the aggregator levels in the final result. More
formally, a \result{cube query}, is an expression of the form:


\[q = \tuple{\ \mathbf{DS}^{0},\ \phi,\ [L_1,\ldots,L_n,M_1,\ldots,M_m],\
[agg_1(M^0_1),\ldots,agg_m(M^0_m)]\ }
\]
where
\begin{enumerate}
	\item
	$\mathbf{DS}^{0}$ is a detailed data set over the schema $\mathbf{S}$ =[$L_1^0$,
	$\ldots$, $L_n^0$, $M_1^0$, $\ldots$ ,$M_k^0$], $m$ $\le$ $k$.
	\item
	$\phi$ is a multidimensional conjunctive selection condition,\rem{not necessarily dicing or simple} \diffRem{used to be $\phi_{0}$}
	\item
	$L_1,\ldots,L_n$ are \textit{grouper} levels such that $L^0_i\preceq L_i$, $1\le i\le n$,
	\item
	$M_1,\ldots,M_m$, $m\le k$, are aggregated measures (without loss of generality we assume that aggregation takes place over the first $m$ measures -- easily achievable by rearranging the order of the measures in the schema),
	\item
	$agg_1,\ldots,agg_m$ are aggregate functions from the set
	$\{sum,$ $min,max,count, avg\}$.
\end{enumerate}

The semantics of a cube query in terms of SQL over a star schema are:

\vspace{6pt}\begin{tabular}{l}
	\small\texttt{SELECT $L_{1}$,\ldots ,$L_{n}$, $agg_{1}(M_1^0)$ AS $M_1$,\ldots,$agg_{1}(M_m^0)$  AS $M_m$}\\
	\small\texttt{FROM DS$^{0}$ NATURAL JOIN D$_{1}$ \ldots\ NATURAL JOIN D$_{n}$}\\
	\small\texttt{WHERE $\phi^{0}$}\\
	\small\texttt{GROUP BY $L_{1}$,\ldots ,$L_{n}$}\\
\end{tabular}

\vspace{6pt}
\noindent where $\phi_0$ is the detailed equivalent of $\phi$, $D_1$, $\ldots$, $D_n$ are the dimension tables of the underlying star schema and the natural joins are performed on the respective surrogate keys. \footnote{This assumes identical names for the surrogate keys; in practice, we use INNER joins along with the appropriate columns of the underlying tables, which might have arbitrary names.}

The expression characterizing a cube query has the following formal semantics\footnote{With the kind help of Spiros Skiadopoulos}:
\[\begin{array}{l}
q = \mbox\{\ x\ |
\mbox(\exists y \in \phi^{0}(DS^0))\ (x=(l_1 = anc^{L_1}_{L^0_1}(y[L^0_1]), \ldots, l_n = anc^{L_n}_{L^0_n}(y[L^0_n]),\ \\ agg_1\{G_1(l_1~\ldots~l_n)\}, \ldots, agg_m\{G_m(l_1~\ldots~l_n)\})\mbox)\mbox\ \}
\end{array}\]
where for every $i$ ($1$ $\leq$ $i$ $\leq$ $m$) the set $G_i$ is defined as follows:
\[\begin{array}{l}
G_i(l_1~\ldots~l_n) = \mbox\{\ m^*\ | \mbox(\exists z \in \phi^{0}(DS^0))\ (l_1=anc^{L_1}_{L^0_1}(z[L^0_1]), \ldots, l_n=anc^{L_n}_{L^0_n}(z[L^0_n]),
       m^*=z[M^0_i]\mbox)\ \mbox\}
\end{array}\]


A cube query specifies (a) the cube over which it is imposed, (b) a selection condition that isolates the
facts that qualify for further processing, (c) the grouping levels, which determine the coarseness of the result, and
(d) an aggregation over some or all measures of the cube that accompanies the grouping levels in the final result.  \\

\result{Interestingly, a cube query carries the typical duality of views: it is, at the same time, both a query, as it involves a query expression imposed over the underlying data, but, also a cube, as it computes a set of cells as a result that obey the constraints we have imposed for cubes.}   \\

Notation-wise, since a query result is also a cube, we use the expression $c$ $\in$ $q$ when a cell belongs to the result of a query, and the expression $q.cells$ to refer to the set of tuples of the result of a query.\\

\result{In the rest of our deliberations, and unless otherwise specified, the  selection conditions of the queries are simple: i.e., they involve a single set-valued equality atom per dimension.}

A note here is due, for the existence of a single atom per dimension in the selection condition of the cube query.
As already mentioned, both $true$ and $D.ALL~=~all$ are both selection conditions, and in fact, in any valid query posed on a specific detailed cube, the result is the same: all the members of the dimension are eligible for the subsequent processing in the query. Despite this, the two expressions are not identical in terms of semantics, and in fact, their automatic translation to a relational query would be different (whereas $true$ implies no atom in the respective SQL query, $D.ALL~=~all$ would induce an extra, unnecessary join); however, a simple cube-to-sql translator would easily take care of the matter. Unless otherwise specified, we assume $D.ALL~\in`~\{all\}$ to be the expression of choice, for reasons of uniformity: this trick allows us to assume simple selection conditions without exceptions, and, with exactly one atom per dimension.





\silence{
\begin{example}\label{ex:cubeAdult2}
The following cube query produces the cube of Table \ref{tab:cubeAdult2}: 
$C^N=\tuple{DS,$

education.L3='Post-secondary' and work\_class.L2='With-Pay',

$\tuple{$ALL,ALL,L2,ALL,L0,ALL,ALL$},$

$Avg(Hours~per~Week)}$


\noindent where $DS$ is the detailed data set, the selection condition fixes Education to ’Post-Secondary’ (at level L3), and Work to ’With-Pay’ (at level L2), data is grouped by Education at level 2, and Work at level 1, and the Avg of Hours per Week is requested.

\begin{table}[htbp]
\begin{center}
\begin{tabular}{|l||cccc|}
\hline
Hrs per Week&	Assoc&	Post-grad&	Some-college&	University\\
\hline
\hline
Federal-gov&	41.15&	43.86	&40.31&	43.38\\
Local-gov&	41.33&	43.96&	40.14&	42.34\\
State-gov&	39.09& 42.96&	34.73&	40.82\\
Private& 41.06&	45.19&	38.73&	43.06\\
Self-emp-inc& 48.68&	53.05&	49.31&	49.91\\
Self-emp-not-inc& 45.88&	43.39&	44.03&	44.44\\
\hline
\end{tabular}
\end{center}
\caption{A new cube $C^N$ as the output of the cube query of Example~\ref{ex:cubeAdult2} \label{tab:cubeAdult2}}
\end{table}

For the reader familiar with OLAP terminology, the new cube $C^N$ resulting from the query, is practically the result of a Drill-Down operation over the old cube $C^O$ of Example~\ref{ex:cubeAdult1}.\sticky{IMPORT EXAMPLE? }
\end{example}
}
\textbf{Sessions}. A \textit{session} $Q^{S}$ is a list of cube queries $Q^{S}$ = \{$q_{1}$, ..., $q_{n}$\} that have been recorded. We assume the knowledge of the syntactic definition of the queries, and possibly, but not obligatorily, their result cells.

\textbf{History}. A \textit{session history of a user} is a list of sessions. The linear concatenation of these sessions results in a derived session, i.e., a list of queries, following the order of their sessions. The transformation is useful, in order to be able to collectively refer to the \textit{history} of a user as list of queries.


\subsection{Example}\label{sec:example}
Assume a tax office has a cube on the income tax collected and the effort invested to collect it on its allocated citizens. Due to anonymization, the tax office analyst is presented with a detailed cube without the identity of the citizens and has some (pre-aggregated) information along the following dimensions: $Date$, $Work Class$, and $Education$, and two measures $TaxPaid$ by the citizens in thousands of Euros and $HoursSpent$. Each dimension is accompanied by hierarchies of dimension levels. Figure~\ref{fig:BC} depicts the detailed cube data.

\begin{figure}
  \centering
    \includegraphics[width=0.9\textwidth]{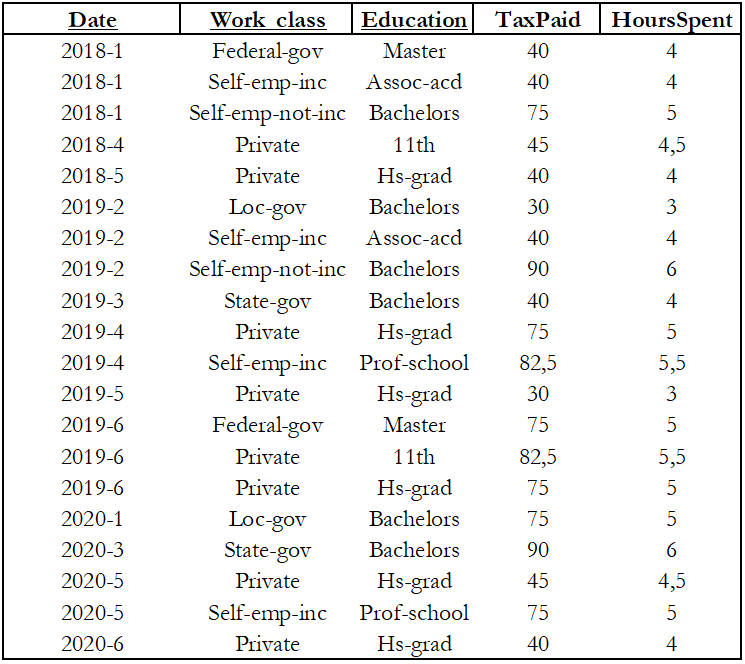}
      \caption{A basic cube}
      \label{fig:BC}
\end{figure}

\textit{Date} is organized in \textit{Months}, \textit{Quarters}, \textit{Years} and \textit{ALL}. \textit{Education} has 5 levels, and \textit{Workclass} 4 levels, and their values, along with their ancestor relationships are depicted in Figure~\ref{fig:dim}. Note that wherever the dimension levels are depicted without values, a surrogate value identical to their ancestor is patched to the dimension, which means that all the dimensions and the value hierarchies are fully defined at all levels at the instance level. 

\begin{figure}
  \centering
    \includegraphics[width=0.9\textwidth]{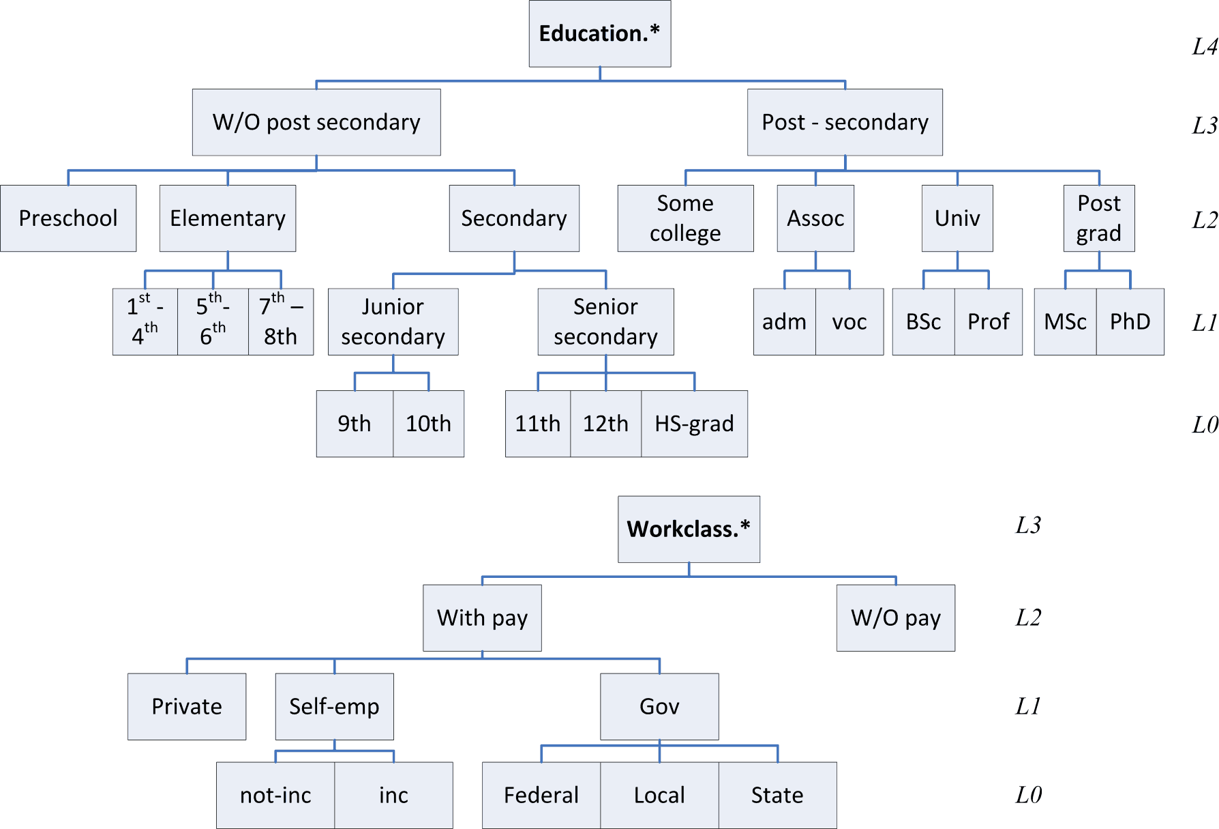}
      \caption{Dimensions}
      \label{fig:dim}
\end{figure}

The idea of ancestor and descendent values is depicted also in the structure of the dimension. So, for example, the \textit{Education} of a group of persons who have attended school till the 11th grade, is characterized with respect to different levels of abstraction as (a) \textit{Detailed}: 11th-grade, (b) \textit{Level 1}: Senior secondary, (c) \textit{Level 2}: Secondary, and (d) \textit{Level 3}: Without Post Secondary. \\

The detailed dataset $\mathbf{DS}^{0}$ defined over these dimensions is, and with a schema $\mathbf{DS}^{0}[\underline{D.L_0, W.L_0, E.L_0}, TaxPaid, HoursSpent]$.\\

A query that can be posed to the aforementioned detailed data set can be:
\[q = 
\tuple{
    \mathbf{DS}^{0},\ \phi,\ [Month, W.L_1, E.ALL, sumTaxPaid],\ [sum(TaxPaid)]\ 
}
\]

with $\phi$ expressed as 
\[\phi = Year \in \{2019,2020\} \wedge W.L_2 \in \{with-pay\}
\]

and actually implying an expression with a single atom per dimension in the form:
\[\phi = Year \in \{2019,2020\} \wedge W.L_2 \in \{with-pay\} \wedge Education.ALL \in \{all\}
\]

\begin{figure}
  \centering
    \includegraphics[width=0.75\textwidth]{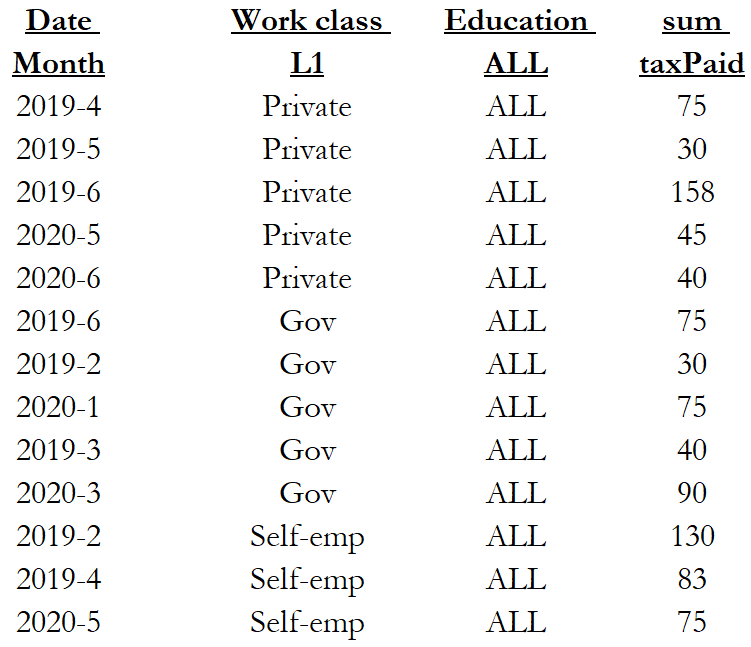}
      \caption{Result of query $q$}
      \label{fig:queryResult}
\end{figure}

\subsection{All the typical OLAP operations are possible}
A key contribution of defining a cube query with the duality of a view, as an expression over a basic cube is that all the typical {OLAP} operations are possible via simple cube queries. The following list gives a set of important examples. In the rest of the deliberations of this subsection, we will assume the existence of the following constructs:

\begin{itemize}
    \item Let $\mathcal{Q}$ be the domain of all (well-formed) query expressions. All operators introduced in this part will be of the form $op: \mathcal{Q} \rightarrow \mathcal{Q}$, i.e., the return a new query expression as the result.
    \item A detailed data set $\mathbf{DS}^{0}$ under the schema $[L_1^0,\ldots,L_n^0, M_1^0,\ldots,M_M^0]$
    \item The most recent query that has been executed, resulting in a query cube, specifically:

\[q = \tuple{\ \mathbf{DS}^{0},\ \phi,\ [D_1.L_1,\ldots, D.L, \ldots, D_n.L_n, M_1,\ldots,M_m],\
[agg_1(M^0_1),\ldots,agg_m(M^0_m)]\ }
\]

\end{itemize}

\textbf{Roll-up}. Assume that for a certain dimension, say $D$, we want to change the level of aggregation to higher level, say $L'$, s.t., $L \preceq L'$. Then, the operator $\mathsf{R\_U}(q, D, L')$ returns the query

\begin{multline*}
\mathsf{R\_U}(q, D, L') = \langle\ \mathbf{DS}^{0},\ \phi,\ [D_1.L_1,\ldots, D.L', \ldots, D_n.L_n, M_1,\ldots,M_m], \\
    [agg_1(M^0_1),\ldots,agg_m(M^0_m)] \rangle 
\end{multline*}


Intuitively, we specified which dimension requires an increase at the level of coarseness, and which level this might be, and the operator  operator $\mathsf{R\_U}(q, D, L')$ returns the respective query expression for obtaining it.\\

\textbf{Drill-down}. This is exactly the symmetric operator of roll-up, where the new level $L'$ is at a lower level than $L$, i.e.,  $L' \preceq L$. Again, the operator $\mathsf{D\_D}(q, D, L')$ produces the query
\[\mathsf{D\_D}(q, D, L') = \tuple{\ \mathbf{DS}^{0},\ \phi,\ [D_1.L_1,\ldots, D.L', \ldots, D_n.L_n, M_1,\ldots,M_m],\
[agg_1(M^0_1),\ldots,agg_m(M^0_m)]\ }
\]
The operator is simply lowering the level of detail for the specified dimension, at the specified level. \result{Observe that the definition of cube queries as expressions over the detailed space is the feature of the model that allows this smooth definition of drill-down} (contrasted to other approaches that avoid retaining the link of a cube to the detailed data that form it).\\

\textbf{Slice (selection)}. Assume we want to apply an extra filter, say $\phi^a$ to the resulting cube $q$. Then, we the operator $\mathsf{Slice}(q, \phi^a)$ returns the query
\[\mathsf{Slice}(q, \phi^a) = \tuple{\ \mathbf{DS}^{0},\ \phi \wedge \phi^a,\ [D_1.L_1,\ldots, D.L, \ldots, D_n.L_n, M_1,\ldots,M_m],\
[agg_1(M^0_1),\ldots,agg_m(M^0_m)]\ }
\]
This allows the introduction of and extra selection condition over the existing cube.\\

\textbf{Projection of Measures}. Assume one wants to change the set of measures of the cube to a new one, retaining some of the previous measures, removing some others, and adding some new ones.  Assume we want to add a new measure $M$ with an aggregate function $agg$ to the cube $q$. This is done via the operator:

\begin{multline*}
\mathsf{AddMeasure}(q, M, agg(M^0)) = \\
\langle\ \mathbf{DS}^{0},\ \phi,\ [D_1.L_1,\ldots, D.L, \ldots, D_n.L_n, M_1,\ldots,M_m,M],\\
[agg_1(M^0_1),\ldots,agg_m(M^0_m), agg(M^0)]\ \rangle 
\end{multline*}

Assume now we want to remove an arbitrary measure (for simplicity, here: $M$) from $q'$. Then, we need to issue $q$ and get the new result.

The operators \textsf{AddMeasure($q$, $M$, $agg(M^0)$)} and 
\textsf{RemoveMeasure($q$, $\{M\}$)} add a new measure and its aggregate function, and remove an old measure from the the schema of q, respectively.\\ 

\textbf{Drill-Across (frequently referred to as cube join)}. Assume now that we have two cubes, defined at the same level of abstraction over the same detailed data set and we want to combine the two cubes in a single result. So, assume the existence of two cubes, which, for simplicity of notation, we will assume with a single measure each (this is directly extensible to multiple measures)

\[q^a = \tuple{\ \mathbf{DS}^{0},\ \phi^a,\ [D_1.L_1,\ldots, D.L, \ldots, D_n.L_n, M_a],\
[agg_a(M^0_a)]\ }
\]
and 

\[q^b = \tuple{\ \mathbf{DS}^{0},\ \phi^b,\ [D_1.L_1,\ldots, D.L, \ldots, D_n.L_n, M_b],\
[agg_b(M^0_b)]\ }
\]

Then, the operator $\mathsf{DrillAcross}_{\bowtie}(q^a, q^b)$ constructs their join, producing a single cube is obtained as
\[\mathsf{DrillAcross}_{\bowtie}(q^a, q^b) = \tuple{\ \mathbf{DS}^{0},\ \phi^a \wedge \phi^b,\ [D_1.L_1,\ldots, D.L, \ldots, D_n.L_n, M_a, M_b],\
[agg_a(M^0_a), agg_b(M^0_b)]\ }
\]

We refer to the above result as the \textsf{Common-Base Inner Join} variant of the drill-across and allows the introduction of the operator $\mathsf{DrillAcross}_{\bowtie}(q^a, q^b)$.  \\

\textit{Drill-Across Variants}. The \textsf{Common-Base Inner Join} variant practically produces the common subset of results of the two cubes, and acts muck like a relational inner join. A most commonly encountered application of this version of drill-across is the case with identical selection conditions for the two cubes.

There are other variants, where the join of the two cubes comes with "outer-join" variants (that require the merging of the two selection conditions on a per-dimension basis) that we do not discuss here. Similarly, the case of different-base drill-across, where the two cubes are defined over different detailed data sets, requires that the two data sets are defined over the exact same multidimensional space (otherwise, we have semantic discrepancies) and then, we need to create a relational view that joins them (i.e, the view has the same dimensions and the union of detailed measures), in order to rebase the result on top of it. Although, this is completely doable by extending the operations applicable at the data set level, this variant is also completely scope of this paper. \\

\textbf{Set operations}. For set operations between two cube queries to be valid, the only difference they can practically have is on their selection condition -- the grouping levels and the aggregate measures have to be the same for the set operations to have any meaning in the first place. This is also very much in line with the relational tradition, where again, set operations are applicable to relations with the same schema.
Let us assume now, we have two cube queries $q^a$ and $q^b$, both under the same schema, along the lines of

\[q = \tuple{\ \mathbf{DS}^{0},\ \phi,\ [D_1.L_1,\ldots, D.L, \ldots, D_n.L_n, M],\
[agg(M^0)]\ }
\]
\noindent and the only difference is that the two queries $q^a$ and $q^b$ come with the selection conditions $\phi^a$ and $\phi^b$, respectively.

For union, the new cube query expression must have a new selection condition $\phi^{new}$ = $\phi^a \lor \phi^b$. Practically, if all atoms are in the form $\alpha: L \in V$, each new atom must be in the form  $\alpha^{new}: L \in V^{a} \cup V^{b}$, if the two atoms of $\phi^a$ and $\phi^b$ are at the same level, or $\alpha^{new}: L^0 \in V^{a^{0}} \cup V^{b^{0}}$, otherwise, with $V^{x^0}$ referring to the set of descendants of the values of $V$  at the lowest possible level of detail.

For intersection, (a) instead of the disjunction of the two selection conditions, we would employ the conjunction, and, (b) at the level of set-valued atoms, instead of the union of the value-sets, we would take the intersection.

For difference, (a) $\phi^{new}$ = $\phi^a \land \lnot \phi^b$, and, (b) the difference of the set-valued atoms  produce the expression for the new cube query.
\newpage

\section{Equivalent expressions for referring to subsets of the multidimensional space}\label{sec:DescSign}
In this Section, we deal with two fundamental characteristics of the multidimensional space: (a) the fact that the same data can be viewed from different levels of detail, and (b) the fact that each query in the multidimensional space applies a border of values of the dimensions, thus "framing" a subset of the space. In this section, we define the terminology and equivalences, that will facilitate the discussion and proofs in subsequent sections.

In a nutshell, an intuitive summary of the ideas and terminology used here can be delineated as follows:
\begin{enumerate}
    \item \result{A proxy is an equivalent expression at a different level of detail that by construction covers exactly the same subset of the multidimensional space, albeit at different level of coarseness}. For example, the detailed proxies of an aggregated cell at the most detailed level are all these cells whose dimension members belong to the most detailed level of the respective dimensions, and which are actually aggregated to produce the aggregate cell of reference. The detailed proxy of a query expression is an expression whose schema is at the most detailed level for each of the dimensions participating in the schema of the query, and whose selection condition is equivalent to the one of the query, but at the most detailed level. Moreover, apart from 'the most detailed level', proxies are definable at arbitrary levels of coarseness. Observe also that proxies are of the same type as their "arguments": the proxy of a cell is a set of cells, the proxy of an expression is an expression, and so on.
    \item \result{The signature of a construct is a set of coordinates that characterize the subset of the multidimensional space "framed" by the construct}.  For example, the signature of a cell are its coordinates at the level of coarseness that the cell is defined, whereas the signature of its detailed proxy are the coordinates produced by the Cartesian product of the descendant values of these coordinates at the most detailed level. Similarly, the signature of a selection condition is the set of coordinates of the multidimensional space for which the selection condition evaluates to true.
    \item \result{Areas are sets of cells within the bounds of a signature}. For example, for a given query $q$, the expression $q.cells$ refers to the cells belonging to the result of the query and $q^0.cells$ is the detailed area of the query, referring to the cells of the most detail level that produce the query result.
\end{enumerate}
In the rest of this section, we will define the above notions rigorously and address algorithmic challenges related to them. Specifically, in section~\ref{sec:detAreas} we define proxies, signatures and related concepts rigorously, and in section~\ref{sec:sign} we present the computation of signatures for various constructs of the model.

\subsection{Transformations: Descendant Proxies, Signatures, Coordinates and Areas}\label{sec:detAreas}
In this section, we define some necessary transformations of expressions, as well as the notation that we will employ, that produce their equivalent expressions at lower levels of the involved dimensions, including the most detailed ones.
To the extent that all computations base their semantics to a query posed at the most detailed levels of a detailed cube $C^0$, providing equivalent transformations is necessary to guarantee correctness.

\subsubsection{Descendant/Detailed proxies of a value}
Assume a value $v$, $v$ $\in$ $dom(L^{h})$. The descendant proxies of $v$ at a level $L^{l}$ are the values of the set $v^{@L^l}$ = $desc_{L^h}^{L^l}(v)$. 

The detailed proxies of value $v$ at level $L^0$, denoted as $v^0$, is  $v^0$ = $v^{@L^0}$ = $desc_{L^h}^{L^0}(v)$.

\subsubsection{Descendant/Detailed proxies of a cell}
Assume a cell $c$ with the values, $c$ = $[v_1, \ldots, v_n, m_1, \ldots, m_m]$ under the schema $[L_1^h, \ldots, L_n^h, M_1, \ldots M_m]$.

The \textit{coordinates}, or \textit{coordinate signature} of a cell $c$, denoted as $c^+$, is the set of level values $[v_1, \ldots, v_n]$. 

The \textit{descendant signature} of the cell $c$ at levels $[L_1^l, \ldots, L_n^l]$, s.t. $L^l_i$ $\preceq$ $L^h_i$ for all $i$, are the values of the Cartesian Product 
$desc_{L^h}^{L^l}(v_1)$~$\times$~$\ldots$~$\times$~ $desc_{L^h}^{L^l}(v_n)$. This set of coordinates is denoted as $\mathbf{c}^{@[L_1^l, \ldots, L_n^l]+}$.

The \textit{descendant proxies} of the cell at levels $[L_1^l, \ldots, L_n^l]$, s.t. $L^l_i$ $\preceq$ $L^h_i$ for all $i$, are denoted as $\mathbf{c}^{@[L_1^l, \ldots, L_n^l]}$, and are the cells in $dom(L_1^l)~\times~\ldots~\times~dom(L_n^l)$ whose coordinates belong to the descendant signature of the cell $v$ at levels $[L_1^l, \ldots, L_n^l]$.

When all $L_i$ are at the most detailed level, we have the detailed proxies of a cell. Equivalently:  the \result{detailed proxies of the cell} $c$, also known as \result{the detailed area of the cell} is the set of detailed cells $\mathbf{c}^0$ = $\{c_1^0, \ldots, c_w^0\}$, where the \textit{coordinates} of each such cell $c_i^0$  are defined as $[\gamma_1, \ldots, \gamma_n]$ over the levels $[L_1^0, \ldots, L_n^0]$, with each $\gamma_j$ $\in$ $\{desc_{L_i^h}^{L^0_i}(v_i)\}$.

\subsubsection{Descendant/Detailed proxies of an atom}

Assume an atom $\alpha$ over a dimension level $D.L$.The \textit{descendant proxy} of atom $\alpha$ at level $L^l$, $L^l$ $\preceq$ $L$, denoted as $\alpha^{L^l}$, is an expression defined as follows, depending on the definition of $\alpha$: 
\begin{itemize}
    \item $\alpha$ is a Boolean atom, or is of the form $D.ALL$ = $all$, or, $D.ALL$ $\in$ $\{all\}$; in this case, the atom remains as is
  \item $\alpha$: $L$ = $v$, $v \in dom(L)$ is transformed to  $\alpha^{@L^{l}}$: $L^l$ $\in$ $V^l$, $V^l$ = $desc^{L^l}_{L}(v)$
    \item $\alpha$: $L$ $\in$ $V$, $V$ =\{$v_1$, \ldots, $v_k$\}, $v_j \in dom(L)$, is transformed to $\alpha^{L^l}$: $L^l$ $\in$ $V^{L^l}$, $V^{L^l}$ = $\bigcup\limits_{j=1}^{k} desc^{L^l}_{L}(v_j)$
\end{itemize}
When level $L^l$ is $L^0$ we refer to the \textit{detailed proxy} of an atom, denoted as $\alpha^0$. 

\subsubsection{Descendant/Detailed proxies of a selection condition}
Assume $\phi$ is a conjunction of \textit{selection~atoms} which are in one of the aforementioned forms, each atom $\alpha_i$ involving a level $L_i^h$. Unless otherwise stated, assume that all $n$ dimensions of a multidimensional space participate, each with a single level, in the expression of $\phi$.

Then, the \textit{descendant proxy} of $\phi$ at levels $[L_1^l, \ldots, L_n^l]$, s.t. $L^l_i$ $\preceq$ $L^h_i$ for all $i$, is denoted as $\phi^{@L_1^l, \ldots, L_n^l}$ and is a selection condition, whose expression is defined as the conjunction of the different $\alpha^{@L_i^l}$.
Practically, assuming that each atom is of the form $L$ $\in$ $V$, $V$ =\{$v_1$, \ldots, $v_k$\}, $v_j \in dom(L)$, the Cartesian Product of all the $n$ $V_i$ sets, produces a set of coordinates for the respective descendant proxy of a selection condition, which we call \textit{descendant signature} of $\phi$.

The \textit{detailed proxy of a selection condition}, $\phi^0$, is an expression produced by placing the most detailed level of each dimension, say $L^0$, in the role of $L^l$. 

The \result{detailed signature of a selection condition} is, therefore, a set of detailed coordinates that construct a boundary at the most detailed level of the cells of the multidimensional space that pertain to the selection condition $\phi$. Therefore, assuming that $\phi$ = $\bigwedge_{i=1}^{n}~\alpha_i$, $\alpha_i:~L_i~\in~V_i$, then, $\phi^0$ = $\bigwedge_{i=1}^{n}~\alpha_i^0$ (with the $\alpha_i^0$ produced as mentioned two subsubsections ago), and the detailed area of $\phi$ is a set of coordinates $\phi^{0^+}$ = $\{\gamma_1, \ldots, \gamma_l\}$, each $\gamma_i$ = $[v_1, \ldots, v_n]$, with each $v_i$, $v_i~\in~dom(L_i^0)$ and $v_i~\in~V^0_i$.

We refer the reader to the Section~\ref{sec:selectionSignature} for an algorithm to compute the signature of a selection condition.

\subsubsection{Descendant/Detailed proxies of a Cartesian Product of coordinates}
Assume a Cartesian Product of sets of coordinates, each set belonging to a different level, say under the expression
$L_i$ $\in$ $V_i$, $V_i$ =\{$v_1$, \ldots, $v_{k_i}$\}, $v_j \in dom(L_i)$.

Assuming a Cartesian Product $X$ defined at levels $[L_1 \ldots L_n]$, the \textit{descendant proxy of} $X$ at levels $[L_1^l, \ldots, L_n^l]$, $L^l_i$ $\preceq$ $L_i$ for all $i$, which we call $X^{@L_1^l, \ldots, L_n^l}$ is produced by substituting each value-set $V_i$ defined at level $L_i$ to a value-set defined at a level $L^l$, as $V^l$ = $\bigcup\limits_{j=1}^{k_j} desc^{L^l}_{L}(v_j)$ and taking their Cartesian Product.

When referring to the most detailed level, the Cartesian Product $X$: $V_1$ $\times$ $\ldots$ $\times$ $V_n$ produces a set of detailed coordinates, $X^0~=~X^{@L_1^0, \ldots, L_n^0}$ that induces an area of coordinates at the most detailed levels of the multidimensional space. 
\begin{remark}
We extend terminology to cover not only coordinates, but also their cells; hence, we say that a Cartesian Product of coordinate values (and thus, a selection condition, too) induces the set of cells whose coordinates are produced by the Cartesian Product. 

\end{remark}

\subsubsection{Descendant/Detailed proxies of a query}

Assume a query $q$ defined as follows:
\[q = \tuple{\ \mathbf{DS}^{0},\ \phi,\ [L_1,\ldots,L_n,M_1,\ldots,M_m],\
[agg_1(M^0_1),\ldots,agg_m(M^0_m)]\ }
\]
Then, \textit{the descendant proxy of the query}, $q^{@L_1^l,\ldots,L_n^l}$ is defined as follows:

\[q^{@L_1^l,\ldots,L_n^l} = \tuple{ 
  \mathbf{DS}^{0},\ \phi^{@L_1^l,\ldots,L_n^l},\ [L_1^l,\ldots,L_n^l,M_1^l,\ldots,M_m^l],\
   [agg_1(M^l_1),\ldots,agg_m(M^l_m)]\ 
}
\]

The \textit{detailed proxy} of the query, $q^0$, is defined for the case where all levels are defined at the lowest possible level $L^0_i$ for all $i$. In this case, since each cell uniquely identifies a single measure value for each $M_i$, the aggregate function is the simple identity function (or equivalently, $min$ or $max$).\\

The \textit{descendant area of the query}, $q^{@L_1^l,\ldots,L_n^l}.cells$, is the set of cells belonging the result of the query $q^{@L_1^l,\ldots,L_n^l}$.

The \result{detailed area of the query}, $q^{0}.cells$, refers to the cells of the result of the query $q^0$.

We refer the reader to the Section~\ref{sec:querySignature} for an algorithm to compute the signature of a query.

\hrulefill
\begin{example}
Assume a query (the red-lettered atoms for Education can be implied) 


\begin{multline*}
q = \langle \mathbf{DS}^{0}, \\
Date.Year \in \{2019,2020\} \wedge Workclass.L2 \in \{With-pay\}\\
\textcolor{red}{\wedge Education.ALL \in \{All\} } \color{black} ,\\ 
[Month,Workclass.L1,Education.ALL,SumTax], [sum(TaxPaid)] \rangle \\
\end{multline*}

Then, the detailed proxy of the query is
\begin{multline*}
q^0 = \langle \mathbf{DS}^{0},\\ 
Date.Month \in \{2019-01,\ldots,2020-12\} \wedge Workclass.L0 \in \{private, not-inc, inc, federal, local, state\},\\
\textcolor{red}{\wedge Education.L0 \in \{Preschool, \dots, PhD\} } \color{black} ,\\ 
[Month,Workclass.L0,Education.L0,TaxPaid], [sum(TaxPaid)] \rangle \\
\end{multline*}    


Observe how the atom $\alpha$: $Date.Year \in \{2019,2020\}$ produces:
\begin{itemize}
    \item a signature $\alpha^+$: $\{2019,2020\}$
    \item a detailed proxy $\alpha^+$: $Date.Month \in \{2019-01,\ldots,2020-12\}$
    \item a detailed signature $\alpha^{0^+}$: $\{2019-01,\ldots,2020-12\}$
\end{itemize}

\end{example}
\hrulefill


\subsubsection{Summary of notation and concepts}

\begin{table}[h!]
\centering

\begin{tabular}{|p{2cm}|p{2cm}|p{2.5cm}|p{2.5cm}|p{2.5cm}|} 

 \hline
 
& \multicolumn{2}{l|}{\textit{Signature: tuple of dimension values}} & \multicolumn{2}{l|}{\textit{Proxy(x): of the same type as x}} \\
\hline
    &\textit{Coord.~signature} or \textit{coordinates} & \textit{Detailed Signature} & \textit{Descendant Proxy} & \textit{Detailed Proxy}\\ 
\hline
value $v$          
    &   
    & 
    & $v^{@^\mathbf{L}}$ = set of values at desc.~level
    & $v^{0}$: set of values at zero level
    \\ 
\hline
set of coordinates $X$ 
    &                                             
    &                                                                                 
    & $X^{@\mathbf{L}}$: set of coord.~at \textbf{L}             
    & $X^0$: set of coordinates at zero level
    \\ 
\hline
atom $\alpha$               
    &$\alpha^+$: set of dim.~values qualifying the atom, at the level of $\alpha$
    &$\alpha^{0^{+}}$: set of dim.~values qualifying the atom, at the zero level 
    &$\alpha^{@\mathbf{L}}$: equiv.~expression at desc.~levels
    &$\alpha^0$: equiv.~expression at zero level   
    \\ 
\hline
condition $\phi$     
    &$\phi^{+}$: coordinates produced by the Cart.~Prod.~of the $\alpha^+_i$
    &$\phi^{0^{+}}$: coord.~ produced by the Cart.~Prod.~of the $\alpha^{0^{+}}_i$
    &$\phi^{@\mathbf{L}}$: equiv.~expression at desc.~levels                    
    &$\phi^{0}$ equiv.~expression at zero level                           
    \\ 
\hline
cell $c$
    &$c^+$: tuple of cell's dim.~values  
    &$c^{0^{+}}$: set of coord.~of detailed proxy 
    &$c^{@\mathbf{L}}$: desc.~area = set of cells at lower level 
    & $c^0$: detailed area = set of cells at zero level             
    \\ 
\hline
query expression $q$              
    &$q^{+}$: set of coordinates of cube cells
    &$q^{0^{+}}$: set of coord.~of detailed proxy            
    &$q^{@\mathbf{L}}$: equiv.~query expression at \textbf{L}                            
    &$q^0$: equiv.~expression at zero level                                
    \\
\hline
\end{tabular}
\caption{Notation and central notions. Proxies (both descendant and detailed) are of the same type as their subject. Coordinate signatures are (sets of) \emph{tuples of dimension level values}. Areas of cells and queries are sets of \emph{cells}(e.g., for a query $q$, $q.cells$ is the area of the query and $q^0.cells$ is the detailed area of the query )
}
\label{tab:notation}
\end{table}

In Table \ref{tab:notation} we provide a summary of notation as well as a short reminder of the type of each of the important concepts involved so far in our discourse.

\ifdraft

\fcolorbox{black}{gray!005}{
  \begin{minipage}{0.6\linewidth}
    \inlineRem{DRAFT notation for q.cells?} Possibilities:

 Maybe \fbox{$q^0$}, $q^{0^\copyright}$, $q^{0\copyright}$, $q^{0}.\copyright$, $\dddot{q^0}$, $\mathring{q^0}$, $\widetilde{q^0}$.

  \end{minipage}
}

\fi

\clearpage

\subsection{Working with signatures}\label{sec:sign}

\subsubsection{Computing the signature of a selection condition}\label{sec:selectionSignature}
Assume we have a selection condition $\phi$ and we want to compute its signature $\phi^+$. How can we do that?

\begin{algorithm}[H]
\DontPrintSemicolon 
\KwIn{An selection condition $\phi$ having a single atom per dimension}
\KwOut{The signature of the selection condition $\phi^{+}$ and its detailed proxy $\phi^{0^+}$}
\Begin{
    \ForAll {dimensions $D$ without an atom in $\phi$}{
        introduce an atom $D.ALL$ $\in$ $\{D.all\}$\;
    }
    Convert all atoms of $\phi$ in the form $\alpha$: $D.L$ $\in$ $V$, $V$ = $\{v_1, \ldots, v_k\}$ \;
    $\phi^{+}$ $\gets$ $V_1$ $\times$ $V_2$ $\times$ $\ldots$ $\times$ $V_n$ \;
    Convert all $V$ to their detailed proxies $V^0$, with each $v_i~\in~V$ being replaced by its detailed proxy $v_i^0$, $v_i^0=desc_L^{L^0}(v_i)$\; 
    $\phi^{0^+}$ $\gets$ $V_1^0$ $\times$ $V_2^0$ $\times$ $\ldots$ $\times$ $V_n^0$ \;
    \Return{$\phi^{+}, \phi^{0^+}$}
}
\caption{\sf{Compute the Signature of a Selection Condition}}
\label{algo:ComputeSignatureOfSelectionCondition}
\end{algorithm}

\subsubsection{Grouper domains of atoms and selection conditions}\label{sec:grouperDomain}
Assume that we have an atom which is going to be used as a filter of a query, to be posed upon a detailed data set, in order to restrict the range of participation to the query result. Let's assume that the expression of the atom is defined at a certain level $D.L^{\phi}$. Being a part of a query, the detailed cells that fulfil the atom's criterion will then be grouped by a level $D.L^{g}$, which is probably different that the selection level. We do this for every dimension, and we can compute the signature of the query. The question is then: what are exactly the values of each dimension that will appear in the query result?

\textbf{Grouper domain of an atom}. Assume we have an atom of the form $\alpha$: $D.L^{\phi}$ $\in$ $V$, $V$ = $\{v_1, \ldots, v_k\}$ and we want to compute what will be the resulting set of values if a grouper $D.L^{g}$ is applied to them. We define the \emph{grouper domain} of an atom $\alpha$: $D.L^{\phi}$ $\in$ $V$, $V$ = $\{v_1, \ldots, v_k\}$ with respect to a grouper level $D.L^{g}$ of the same dimension as follows:

\begin{equation}
  gdom(\alpha, D.L^{g})=\begin{cases}
    \{desc_{L^{\phi}}^{L^{g}}(v_1), \dots,desc_{L^{\phi}}^{L^{g}}(v_k) \}, & \text{if $L^{g}$ $\preceq$ $L^{\phi}$}\\
    \{anc_{L^{\phi}}^{L^{g}}(v_1), \dots,anc_{L^{\phi}}^{L^{g}}(v_k) \},  & \text{otherwise}.
  \end{cases}
\end{equation}
  
Equivalently, we can also express an atom's grouper domain as:
\begin{equation}
  gdom(\alpha, D.L^{g})=
    \{anc_{L^{0}}^{L^{g}}(desc_{L^{\phi}}^{L^{0}}(v_1)), \dots, anc_{L^{0}}^{L^{g}}(desc_{L^{\phi}}^{L^{0}}(v_k)) \}
\end{equation}

For example, assume $\alpha: Date.Year \in \{2019,2020\}$ and $D.L^{\phi}: Date.Month$ being the grouper level. Then, $gdom(\alpha, D.L^{g})$ = $\{2019-01,\ldots,2020-12\}$

\textbf{Grouper domain of a selection condition}. Assume we have a selection condition expressed as a conjunction of exactly one atom per dimension, for all dimensions involved in a query. Then, \emph{the grouping domain of the selection condition is the Cartesian product of the grouping domains of the individual atoms} and, remarkably, \emph{it is also equivalent to the query signature}.

Assume a query $q$ defined as follows:

\[q = \tuple{ \mathbf{DS}^{0},\ \phi,\ [L_1,\ldots,L_n,M_1,\ldots,M_m],\
[agg_1(M^0_1),\ldots,agg_m(M^0_m)]\ }
\]
with 
$\phi$ = $\alpha_1$ $\wedge$ $\ldots$ $\wedge$ $\alpha_n$.\\

Then,\\

$gdom(\phi, [L_1,\ldots,L_n])$ = $q^{+}$ = $gdom(\alpha_1, D_1.L_1)$ $\times$ $\dots$ $\times$ $gdom(\alpha_n, D_n.L_n)$


\subsubsection{Computing the signature of a query}\label{sec:querySignature}

\silence{
Before producing coordinates for an arbitrary cube, let's start by producing coordinates for a query producing a detailed cube.

\textbf{Detailed level}. Assume a set of dimension levels $S^0$ = $[D_1.L_1^0,\ldots,D_n.L_n^0]$ at the most detailed level. For each dimension, assume a single atom $\alpha_i^0$:  $L_i^0$ $\in$ $V_i^0$, $V_i^0$ =\{$v_1^0$, \ldots, $v_k^0$\}. Then, the Cartesian Product $X^0$: $V_1^0$ $\times$ $\ldots$ $\times$ $V_k^0$ produces a set of detailed coordinates for the selection condition $\phi^0$ which is the conjunction of the individual atoms $\alpha_i^0$.

Assuming a specific dataset $DS^0$, the Cartesian Product $X^0$ \result{produces} a subset of the dataset, as a set of detailed cells, $\Gamma^0$, via the query
\[q^{0\star} = \mathbf{DS}^{0},\ \phi^0,\ [L_1^0,\ldots,L_n^0,M_1,\ldots,M_m],\
[min(M^0_1),\ldots,min(M^0_m)]\ \]

s.t. $\phi^0$ = $\bigwedge_{i=1}^{n}$ $a_i$

\noindent (note that min, here, is used equivalently to the identity function, since the combination of all the $L_i^0$ induces a single detailed cell).\\

\newpage

\textbf{Arbitrary level}. Now, we are ready to define coordinates for a query defined at an arbitrary level of detail. 
}

Assume a query $q$ defined as follows:

\[q = \tuple{ \mathbf{DS}^{0},\ \phi,\ [L_1,\ldots,L_n,M_1,\ldots,M_m],\
[agg_1(M^0_1),\ldots,agg_m(M^0_m)]\ }
\]

with $S$ = $[D_1,L_1,\ldots,D_n.L_n]$ at arbitrary levels of coarseness and $\phi$ a simple selection condition (therefore, for each dimension, assume a single atom  $\alpha_i$:  $L_i$ $\in$ $V_i$, $V_i$ =\{$v_1$, \ldots, $v_k$\}).

\silence{
We can compute the detailed proxy of $\alpha$: $\alpha_i^0$:  $L_i^0$ $\in$ $V_i^0$, $V_i^0$ =\{$v_1^0$, \ldots, $v_k^0$\}, defined at the detailed level (equiv., being the detailed proxy of an atom at an arbitrary level, expressed as a finite set of values). Then, for each level $L$ of $S$ we can pre-determine the set of grouper coordinates produced by $\phi^0$, named $V^{@L}$ as $V^{@L}$ = $\bigcup_{i=1}^{k}$ $anc_{L^0}^{L}(v_i^0)$ (i.e., the members of $V^{@L}$ belong to $dom(L)$).

The Cartesian Product $X$: $V^{@L_1}$ $\times$ $\ldots$ $\times$ $V^{@L_n}$  \result{produces} a set of coordinates $\Gamma^S_{\phi^0}$:$\{ \gamma_1, \ldots, \gamma_\lambda \}$, with each $\gamma_i$ being a tuple of the form $[v_1^i, \ldots, v_n^i]$, with each $v_j^i~\in~dom(L_j)$ (i.e., belonging to the domain of the respective level of the schema $S$ -- attn., not at the detailed level).

The resulting cells will have \emph{exactly} the coordinates of  $\Gamma^S_{\phi^0}$ -- i.e., for each tuple $\gamma~in~X$ there will exist exactly one cell in $q.cells$ with the same coordinates, and vice-versa.

Observe, that as already mentioned, the above setup works for both detailed selection conditions, and selection conditions at arbitrary levels that are translatable to their detailed proxies.\\
}

To produce $q^+$, the coordinates of a query, we can first compute its detailed signature $q^{0^+}$ and then roll-them up to the grouper levels of $q$ -- i.e., we can proceed as follows:
\begin{enumerate}
    \item produce $\phi^0$ from $\phi$;
    \item produce $\phi^{0^+}$ from $\phi^0$ (i.e., the coordinates of $\phi^0$); this is also the detailed area of the query, $q^{0^+}$;
    \item produce $q^+$ as follows: for each detailed $\gamma_i^0$ in $\phi^{0^+}$, for each value $v_j~\in~\gamma_i^0$, replace it with $anc_{L_j^0}^{L_j}(v_j)$ and add the resulting $\gamma_i^{@L_1, \ldots, Ln}$ to the \textit{set} of coordinates $q^+$  
\end{enumerate}

\begin{algorithm}[H]
\DontPrintSemicolon 
\KwIn{A query $q$ having a simple selection condition $\phi$}
\KwOut{The signature of the query $q^{+}$ and its detailed proxy $q^{0^+}$}
\Begin{
    produce $\phi^0$ from $\phi$ \;
    produce $\phi^{0^+}$ from $\phi^0$\;
    $q^{0^+}$ $\gets$ $\phi^{0^+}$ \;
    $q^+$ = $\emptyset$ \;
    \ForAll {$\gamma_i^0$ in $\phi^{0^+}$}{
        \ForAll {value $v_j~\in~\gamma_i^0$}{
            $\gamma_i^{@L_1, \ldots, Ln}$ $\gets$ $anc_{L_j^0}^{L_j}(v_j)$ \;
            $q^{+}.add(\gamma_i^{@L_1, \ldots, Ln})$ \; 
        }
    }

    \Return{$q^{+}, q^{0^+}$}
}
\caption{\sf{Compute Query Signature and Detailed Query Signature}}
\label{algo:ComputeSimplySignatureOfQuery}
\end{algorithm}

Equivalently, Algorithm \ref{algo:SimpleQuerySignature} pursues a different but equivalent transformation, that computes the grouper values per dimension first, and then takes their Cartesian Product. Practically, for each dimension, we compute its grouper domain, and then, we take the Cartesian Product of all grouper domains, resulting in the grouper domain of the selection condition, which is also the signature of the query.

\begin{algorithm}[H]
\DontPrintSemicolon 
\KwIn{A query $q$ at an arbitrary level of detail with a simple selection condition}
\KwOut{The query signature $q^+$}
\Begin{
    \ForAll{ dimensions $D_i$ with atom $\alpha_i:~D_i.L_i^\phi~\in~V$ and grouper $L_i$} 
        {produce the detailed proxy of $\alpha_i$:  $\alpha_i^0:~L_i^0~\in~V^0$ \;
        Let $V^L_i$ be the set of grouper values of $D_i$, $V^L_i$ = $\emptyset$ \;
        \ForAll{$v_j~\in~V_i^0$}{
            $V^L_i$ = $V^L_i$ $\cup$ $anc_{L_i^0}^{L_i}(v_j)$ \;
        }
    }
    $q^{+}$ $\gets$ $V^L_1$ $\times$ $\dots$ $\times$ $V^L_n$ \;
\Return{$q^{+}$ }
}
\caption{\sf {Produce Query Signature}}
\label{algo:SimpleQuerySignature}
\end{algorithm}


\hrulefill
\begin{remark}
Speedups for the above are: (a) if a certain $L_i$ is $ALL$, immediately add $all$ at the respective values; (b) if the selection condition's atom of a dimension is at a lower level than the schema level, there is no reason to first drill down to $L^0$ and then roll-up the values to $L$, but can immediately roll-up the values via $anc_{L_i^\phi}^{L_i}(v)$; (c) on the other hand, if the grouper $L_i$ is lower than the filter $L_i^\phi$, then, we can immediately drill-down the values of $V$ to their $desc_{L_i^\phi}^{L_i}(\cdot)$.

Alternative evaluation plans could include taking $dom(L_i)$ and start disqualifying values that are filtered out due to $\alpha_i$; then taking the the Cartesian Product of the resulting $n$ sets that are now subsets of $dom(L_i)$.
\end{remark}
\hrulefill

\begin{example}
Assume a query 

\begin{multline*}
\noindent q = \langle \mathbf{DS}^{0}, Date.Year \in \{2019,2020\} \wedge Workclass.L2 \in \{With-pay\}, \\
{[Month,Workclass.L1,Education.ALL,SumTax]}, [sum(TaxPaid)] \rangle\\
\end{multline*}

Here, since the atom on Education was not originally specified, it is implied that a 'All' atom applies for Education. We will use it in the sequel to produce signatures. Thus $\phi$ becomes:

\begin{multline*}
\phi: Date.Year \in \{2019,2020\} \wedge Workclass.L2 \in \{With-pay\} \\
\wedge Education.ALL \in \{All\}  \color{black}   
\end{multline*}

Then, the signature, $\phi^+$, of the selection condition $\phi$ is
\[
\phi^{+}: \{2019,2020\} \times \{With-pay\} \times \{All\} = \{\tuple{2019,With-Pay,All},\tuple{2020,With-Pay,All}\}
\]

The detailed selection condition $\phi^0$ is:
\begin{multline*}
\phi^0: Date.Month \in \{2019-01,\ldots,2020-12\} \\
\wedge Workclass.L0 \in \{private, not-inc, inc, federal, local, state\}\\    
\wedge Education.L0 \in \{Preschool, \dots, PhD\} 
\end{multline*}

Then, the respective detailed signature $\phi^{0^+}$ as well as the detailed query signature $q^{0^+}$ is:
\begin{multline*}
\phi^{0^+} = q^{0^+}: \{2019-01,\ldots,2020-12\} \times \{private, not-inc, inc, federal, local, state\} \\
\times \{Preschool, \dots, PhD\} \\
= \{\tuple{2019-01,private,preschool}, \ldots, \tuple{2020-12,state,PhD}\}
\end{multline*}

Coming to the query now, the signature of the query is produced by rolling up the signature of $\phi^0$ to the grouper levels:
\begin{multline*}
q^+: \{2019-01,\ldots,2020-12\} \times \{Private, Self-emp, Gov\} \times \{ALL\}\\    
= \{\tuple{2019-01,Private,ALL}, \ldots, \tuple{2020-12,Gov,ALL}\}
\end{multline*}

Observe that the query signature is expressed as the Cartesian Product of the grouper domains of the individual atoms of the selection condition, i.e., $\{2019-01,\ldots,2020-12\}$ for $Date$,  $\{Private, Self-emp, Gov\}$ for $WorkClass$ and $\{ALL\}$ for $Education$.

Observe also that at the end of the day, all signatures, produced as Cartesian Products of values, are \textit{sets} of \textit{coordinates} (with coordinates being tuples of values with a single value per dimension).
\end{example}
\hrulefill

\subsubsection{Other signature operations}\label{sec:otherSign}
\textbf{Computing the difference/intersection of two signatures}. Given two signatures defined over the same dimensions, both signatures come as sets of coordinates. Then, the well-known set difference computes the difference of the two signatures. Equivalently, set intersection works for the intersection of two signatures.

A simple generic algorithm can take as input (a) a query $q$ being under test, and (b) a benchmark query $q^\star$ against which $q$ is going to be tested and label the signature of $q$ with two characterizations: (i) \emph{covered} coordinates, i.e., coordinates already being part of the signature of $q^\star$, and (ii) \emph{novel} coordinates, i.e., coordinates which are not part of the signature of $q^\star$. The respective sets $q^{cov+}$ and $q^{nov+}$ collect the respective coordinates, and their union produces $q^{+}$.

\begin{algorithm}[H]
\DontPrintSemicolon 
\KwIn{A query $q$ and a benchmark query $q^\star$ defined over the same levels }
\KwOut{The subset of the coordinates of $q$, say $q^{cov+}$ that are already part of the result of $q^{\star}$, and its complement $q^{nov+}$}
\Begin{
produce $q^{+}$ and $q^{\star^{+}}$ \;
$q^{cov+}$ $\gets$ $q^{+}$~$\bigcap$~$q^{\star^{+}}$  \;
$q^{nov+}$ $\gets$ $q^{+}$ - $q^{\star^{+}}$ \;

\Return{$q^{cov+}$, $q^{nov+}$ }\;
}
\caption{\sf{Produce Covered And Novel Query Coordinates}}
\label{algo:QueryIntersectionSignatureDiff}
\end{algorithm}

\hrulefill
\begin{example}
Assume the signature, $\phi^{+}_1$ 
\[
\phi^{+}_1: \{2019,2020\} \times \{With-pay\} \times \{All\} = \{\tuple{2019,With-Pay,All},\tuple{2020,With-Pay,All}\}
\]

and the signature, $\phi^{+}_2$ defined as\\

$\phi^{+}_2: \{2018,2019\} \times \{With-pay, Without-pay\} \times \{All\}$ = 

\begin{multline*}
\{\tuple{2018,With-Pay,All},\tuple{2019,With-Pay,All},\\
\tuple{2018,Without-Pay,All},\tuple{2019,Without-Pay,All}\}
\end{multline*}

The intersection of the two signatures signifies the common part of the multidimensional space they cover: $\phi^{+}_1 \cap \phi^{+}_2: \tuple{2019,With-Pay,All}$. 

The union $\phi^{+}_1 \cup \phi^{+}_2$ of the two signatures signifies the joint subspace the expression
$\phi_1 \lor \phi_2$ covers

\begin{multline*}
\{\tuple{2018,With-Pay,All},\tuple{2019,With-Pay,All},\\
\tuple{2018,Without-Pay,All},\tuple{2019,Without-Pay,All}\\
\tuple{2020,With-Pay,All}
\}
\end{multline*}
Again, observe that signatures are \emph{sets}, specifically, sets of coordinates, and therefore they are treated via set operations.
\end{example}

\hrulefill
\newpage

\newpage
\section{Foundational Containment}\label{sec:foundCont}

\subsection{Preliminaries and Assumptions}
Before proceeding, let us remind the reader of simple selection conditions. Simple selection condition are characterized by the following properties:
\begin{itemize}
    \item a simple conjunction of atoms, $\phi$ = $\bigwedge\limits_{j=1}^{p} a_i$,
    \item all atoms in the selection condition of all the queries are of the form: $D.L$ $\in$ $\{ v_1, \ldots, v_k \}$, $v_i$ $\in$ $dom(L)$
    \item there is exactly one atom per dimension; for the dimensions where no selection atom is defined (equivalently: $true$ is the selection atom), for reasons of the homogeneity we assume the expression $D.ALL~\in~\{all\}$, which effectively incorporates the entire active domain of the dimension.
\end{itemize}

In the rest of all our deliberations, we will assume a query $q^n$ (n for "new" and "narrow") with a simple selection condition $\phi^n$, and a query $q^b$ (b for "broad") with a simple selection condition $\phi^b$. 

\result{The \underline{decision} problem at hand is: given the query $q$ and the query $q^n$, and without using the extent of the cells of the two queries, can we compute whether the cells of the detailed proxy of $q^n$, i.e., the result of $q^{n^{0}}$ is a subset of the result of $q^{0}$, i.e., the detailed proxy of $q$?}

\result{In a similar vein, the respective \underline{inverse enumeration} problem is: can we compute which cells of $q^{n^{0}}$ are not part of $q^{0}$, and which are not?}

\begin{remark}
The aforementioned setup for atoms covers a very large spectrum of commonly encountered cases, like: (a) the case of a point query $L$ = $v$, (b) the case the disjunction of values, expressed via set membership, and, (c) since we assume that dimensions come with finite countable domains (and in fact totally ordered) this setup also covers the case of range-selections, where the atom is of the form $L$ $\in$ $[v_{low} \ldots v_{high}]$.
\end{remark}
\begin{remark}
Observe that the problem is independent of the aggregations and the roll-ups taking place in the queries, and, fundamentally boils down to selection condition comparison.
\end{remark}

\subsection{Foundational Containment}
\begin{definition}
A query $q^b$ \textit{foundationally~contains} a query $q^n$, denoted as $q^n~\sqsubseteq^0~q^b$ if the detailed area of $q^b$ is a superset (i.e., of detailed cells) over the detailed area of $q^n$.

Equivalently: $\forall$ cell $c_i^{0^n}$ in the detailed area of $q^n$, $c_i^{0^n}$ also belongs to the detailed area of $q^b$, too. 
\end{definition}

\begin{remark}
Note that this does not guarantee computability of $q^n$ from $q^b$, due to the intricacies of aggregation; however, it is a necessary condition for assessing computability, as, if the condition fails, there exist detailed cells that pertain to the new query $q^n$ that have not been taken into consideration for the computation of the (potentially pre-existing) $q^b$, and thus computing the former from the cells of the latter is impossible.
\end{remark}

Now, we are ready to give a necessary and sufficient condition for foundational containment to hold.\sideNote{Is my detailed area contained in yours?}
\begin{theorem}
Assume two queries, $q^n$ and $q^b$, having exactly the same dimension levels in their schema and a 1:1 mapping between their measures (obtained via the identity of the respective $agg_i(M_i^0)$ expressions). To simplify notation, we will assume the two queries have the same measure names, and thus, exactly the same schema [$L_1, \ldots L_n, M_1, \ldots, M_m$]. Assume also their respective simple, detailed selection conditions $\phi^{0^n}$ and $\phi^{0^b}$.
Let $\phi^{0^n}$ have atoms of the form $D.L^0$ $\in$ $V^0$, $V^0$ = $\{v_1, \ldots, v_k\}$ and $\phi^{0^b}$ have atoms of the form $D.L$ $\in$ $U^0$, $U^0$ = $\{u_1, \ldots, u_m\}$, for every dimension $D$ pertaining to the two cubes $q^{n}$ and $q^{b}$, respectively. Then, 
$q^{b}$ foundationally contains $q^{n}$ if and only if the following holds:

$\forall$ atom of $\phi^{0^n}$, say $D.L^0$ $\in$ $V^0$: 
    $\forall$ $v_i$ $\in$ $V^0$, $v_i$ $\in$ $U^0$, i.e., $V^0~\subseteq~U^0$

\end{theorem}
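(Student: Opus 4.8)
The plan is to reduce foundational containment to a containment between detailed signatures, and then to the familiar fact that one Cartesian product is contained in another exactly when the inclusion holds componentwise. First I would invoke the machinery of Section~\ref{sec:DescSign}: since both queries share the schema $[L_1,\ldots,L_n,M_1,\ldots,M_m]$, their detailed proxies $q^{n^0}$ and $q^{b^0}$ carry detailed selection conditions $\phi^{0^n}=\bigwedge_i(L_i^0\in V_i^0)$ and $\phi^{0^b}=\bigwedge_i(L_i^0\in U_i^0)$, whose detailed signatures are the Cartesian products $\phi^{0^n+}=V_1^0\times\cdots\times V_n^0$ and $\phi^{0^b+}=U_1^0\times\cdots\times U_n^0$. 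Because a detailed cell lies in the detailed area of a query precisely when its coordinates belong to that query's detailed signature (both areas being carved out of the same detailed space), foundational containment $q^n\sqsubseteq^0 q^b$ is equivalent to the coordinate inclusion $\phi^{0^n+}\subseteq\phi^{0^b+}$. The theorem therefore reduces to showing that $V_1^0\times\cdots\times V_n^0\subseteq U_1^0\times\cdots\times U_n^0$ holds if and only if $V_i^0\subseteq U_i^0$ for every dimension $i$.

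For the sufficiency direction ($\Leftarrow$) I would argue directly: assuming $V_i^0\subseteq U_i^0$ for all $i$, any coordinate tuple $[v_1,\ldots,v_n]\in V_1^0\times\cdots\times V_n^0$ has each $v_i\in V_i^0\subseteq U_i^0$, so it lies in $U_1^0\times\cdots\times U_n^0$; hence every detailed cell of $q^n$, whose coordinates fall in $\phi^{0^n+}$, is also a detailed cell of $q^b$. This direction is immediate and independent of the underlying data.

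The necessity direction ($\Rightarrow$) I would prove by contraposition. Suppose $V_j^0\not\subseteq U_j^0$ for some dimension $j$, and fix a witness $v\in V_j^0\setminus U_j^0$. Using that each $V_i^0$ is nonempty, I would select some $w_i\in V_i^0$ for every $i\neq j$ and assemble the coordinate tuple $\gamma$ with $\gamma[L_j^0]=v$ and $\gamma[L_i^0]=w_i$ otherwise. Then $\gamma\in V_1^0\times\cdots\times V_n^0$ but $\gamma\notin U_1^0\times\cdots\times U_n^0$, since it already fails in the $j$-th component; thus $\phi^{0^n+}\not\subseteq\phi^{0^b+}$ and foundational containment cannot hold. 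The nonemptiness I rely on follows from well-formedness together with the full instantiation of the hierarchies (every member has at least one descendant, and every atom carries a nonempty value set, with the $D.ALL\in\{all\}$ convention supplying the default when no atom is given).

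The step I expect to be the main obstacle is the bridge between this coordinate-level statement and the cell-level definition of the detailed area in the necessity direction: a sparse detailed data set could fail to materialize an actual cell at $\gamma$, so containment of the realized areas might hold vacuously even when $V_j^0\not\subseteq U_j^0$. I would resolve this exactly as classical query containment does, by reading foundational containment in the data-independent sense (``for every detailed data set''), so that the witness coordinate $\gamma$ can always be realized by placing a cell there; this is also consistent with the paper's stated aim of deciding the relationship ``without using the extent of the cells.'' Everything else is the routine Cartesian-product bookkeeping sketched above.
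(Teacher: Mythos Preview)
Your proof is correct and follows essentially the same route as the paper: both directions rest on the componentwise characterization of Cartesian-product inclusion, with necessity handled by exhibiting a witness coordinate outside $U_1^0\times\cdots\times U_n^0$ (the paper phrases this as \emph{reductio ad absurdum}, but the content is identical to your contraposition). Your explicit treatment of the sparsity/data-independence issue is in fact more careful than the paper's own argument, which tacitly works at the signature level and does not address whether the witness coordinate is actually materialized in $\mathbf{DS}^0$.
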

\begin{proof}
Assume the above property holds. Then, the cells that belong to the detailed area of $q^n$, produced by the conjunction of $n$ atoms of the form $D_i.L_i^0$ $\in$ $V_i^0$, are produced by the signature obtained by taking the Cartesian product of the values belonging to the value-sets $V_1^0$ $\times$ $V_2^0$ $\times$ $\ldots$ $\times$ $V_n^0$.
The respective detailed signature for $q^b$ is $U_1^0$ $\times$ $U_2^0$ $\times$ $\ldots$ $\times$ $U_n^0$.
If for every pair of value-sets for the same dimension, say $D_i$, $V_i^0~\subseteq~U_i^0$, the Cartesian product produced for $q^n$ is a subset of the Cartesian product produced for $q^b$, i.e.,  $V_1^0$ $\times$ $V_2^0$ $\times$ $\ldots$ $\times$ $V_n^0$ $\subseteq$ $U_1^0$ $\times$ $U_2^0$ $\times$ $\ldots$ $\times$ $U_n^0$  .  
Then, by definition, $q^n$ $\sqsubseteq^0$ $q^b$. 

Inversely, via reductio ad absurdum, assume that $\exists$ $v_j~\in~V_i$, s.t., there does not exist any $u_{j'}~\in~U_i,~u_{j'} = v_j$. Then, all the cell coordinates generated by the participation of $v_j$ in the Cartesian Product will not belong to the $U_1^0$ $\times$ $U_2^0$ $\times$ $\ldots$ $\times$ $U_n^0$ either. Therefore, there will be cells in the detailed area of $q^n$ that do not belong to the detailed area of $q^b$. Absurd.
\end{proof}

\hrulefill
\begin{remark}
Observe that the above is both an adequate and a necessary condition for foundational containment. Thus, producing the detailed selection condition and from this, the detailed signatures of two queries, we can check for foundational containment. To the extent that we have a single atom per dimension, the complexity of the check implied by the above Theorem is linear to the number of dimensions.
\end{remark}
\hrulefill\newline

\subsection{Foundational containment when expressions are complex}\label{sec:fc-complex}
Assume now that instead of dealing with the detailed selection conditions at the most detailed level for all dimensions, we work with selection conditions defined at arbitrary levels. It is true that we can always transform selection conditions at arbitrary levels to their detailed proxies and perform a precise check for foundational containment. But can we do faster? We introduce a sufficient but not necessary condition to perform a fast check. \sideNote{Is my detailed area contained in yours? (fast)}

\begin{theorem}
 
Assume two queries , $q^n$ and $q^b$, having exactly the same dimension levels in their schema and a 1:1 mapping between their measures (obtained via the identity of the respective $agg_i(M_i^0)$ expressions). To simplify notation we will assume the two queries have the same measure names, and thus, exactly the same schema [$L_1, \ldots L_n, M_1, \ldots, M_m$]. Assume also their respective simple selection conditions $\phi^n$ and $\phi^b$, such that $\phi^{n}$ has atoms of the form $D.L^{n}$ $\in$ $V$, $V$ = $\{v_1, \ldots, v_k\}$ and $\phi^{b}$ has atoms of the form $D.L^{b}$ $\in$ $U$, $U$ = $\{u_1, \ldots, u_m\}$, for every dimension $D$ pertaining to the two cubes' schema ($L$ being an arbitrary level of the dimension, and not obligatorily the most detailed one).

Then, $q^b$ foundationally contains $q^n$, $q^n$ $\sqsubseteq^0$ $q^b$, if the following holds: \\
\indent $\forall$ atom of $q^n$, say for the dimension $D$, $D.L^{n}$ $\in$ $V$, $V$ = $\{v_1, \ldots, v_k\}$\\
\indent \indent $\forall$ $v$ $\in$ $V$, $\exists$ $u$ $\in$ $U$ in the respective atom of $q^b$ for $D$, s.t., $u$ = $anc_{L^n}^{L^b}(v)$

\end{theorem}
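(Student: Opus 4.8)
The plan is to reduce this statement to the detailed-level characterization already established in the previous Theorem, and then show that the stated per-value ancestor condition forces, dimension by dimension, the required containment of detailed value-sets. Concretely, since foundational containment is equivalent (by the first Theorem of this section) to having $V_i^0 \subseteq U_i^0$ for every dimension $D_i$, where $V_i^0$ and $U_i^0$ are the detailed proxies of the value-sets appearing in the atoms of $\phi^n$ and $\phi^b$ on $D_i$, it suffices to prove that the hypothesis implies $V^0 \subseteq U^0$ on each dimension separately. Because the condition is stated as sufficient only, I need to establish just this single direction.

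First I would isolate the key order-theoretic lemma: whenever $u = anc_{L^n}^{L^b}(v)$, the detailed descendants of $v$ are contained in those of $u$, i.e. $desc^{L^0}_{L^n}(v) \subseteq desc^{L^0}_{L^b}(u)$. This is where the composition (transitivity) property of the ancestor family (property 2) does all the work: take any $w \in desc^{L^0}_{L^n}(v)$; since $desc$ is the inverse of $anc$ we have $anc_{L^0}^{L^n}(w) = v$, and therefore $anc_{L^0}^{L^b}(w) = anc_{L^n}^{L^b}(anc_{L^0}^{L^n}(w)) = anc_{L^n}^{L^b}(v) = u$, which says exactly that $w \in desc^{L^0}_{L^b}(u)$. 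Note that this lemma presupposes $L^n \preceq L^b$, so that $anc_{L^n}^{L^b}$ is even defined — a point I would flag explicitly, since it is the reason the condition is only sufficient and not necessary.

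With the lemma in hand, the assembly is routine bookkeeping over unions. Recall that the detailed proxy of the $q^n$-atom on $D$ is $L^0 \in V^0$ with $V^0 = \bigcup_{i=1}^{k} desc^{L^0}_{L^n}(v_i)$, and similarly $U^0 = \bigcup_{j} desc^{L^0}_{L^b}(u_j)$. For each $v_i \in V$ the hypothesis supplies some $u \in U$ with $u = anc_{L^n}^{L^b}(v_i)$; the lemma then gives $desc^{L^0}_{L^n}(v_i) \subseteq desc^{L^0}_{L^b}(u) \subseteq U^0$. Taking the union over $i$ yields $V^0 \subseteq U^0$. Since this argument runs identically for every dimension, the detailed Cartesian product generated by $\phi^{0^n}$ is contained in that generated by $\phi^{0^b}$, and invoking the previous Theorem delivers $q^n \sqsubseteq^0 q^b$.

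I do not expect a serious obstacle: the entire content is carried by the composition property of the $anc$ functions, with everything else being elementary manipulation of unions and Cartesian products. The only subtlety worth stating carefully is the implicit assumption $L^n \preceq L^b$. If the filter level of the broad query sits below that of the narrow query (or is incomparable to it), the expression $anc_{L^n}^{L^b}(v)$ is undefined and the test simply fails to apply, even though $V^0 \subseteq U^0$ may nonetheless hold — which is exactly why the converse can fail and why the theorem is phrased as a one-directional sufficient condition rather than a characterization.
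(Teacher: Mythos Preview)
Your proposal is correct and follows essentially the same approach as the paper: both arguments reduce foundational containment to the per-dimension inclusion $V^0 \subseteq U^0$ by observing that $u = anc_{L^n}^{L^b}(v)$ forces $desc^{L^0}_{L^n}(v) \subseteq desc^{L^0}_{L^b}(u)$, and then take unions. If anything, your treatment is more explicit than the paper's, which simply asserts the key inclusion of detailed proxies, whereas you derive it from the composition property of the ancestor functions; your remark that the implicit hypothesis $L^n \preceq L^b$ is what makes the condition merely sufficient also matches the paper's post-proof discussion.
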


\begin{proof}
Assume the theorem's condition holds and for each $v$ there exists a correspondence to $u$ = $anc_{L^n}^{L^b}(v)$ in $U$. Then, the detailed proxy of $u$ is a superset of the detailed proxy of $v$. The union of the detailed proxies of the $v_j$ values, is therefore, a subset of the union of the detailed proxies of the respective $u_j$ (even if multiple $v$ values are mapped to the same $u$). Therefore, $V^0$ $\subseteq$ $U^0$.

The above hold even if $L^{n}$  and $L^{b}$ are the same level, and thus, we simply want every value of $V$ to be also present in $U$. This involves the level $ALL$ too. Also, the above holds even if multiple $v$ values are mapped to the same $u$, as due to the monotonicity of domains, even if \emph{all} the descendants of $u$ are present in $V$, the union of their detailed proxies is still a subset of the detailed proxy of $u$ (with equality holding, obviously, in the case of all descendants being present).
\end{proof}

\hrulefill
\begin{remark}
Obviously from the requirement of the theorem, every level $D.L^n$ of $\phi^n$ is lower or equal than the respective level $D.L^b$ of $\phi^b$. This is not necessarily reflected in the schemata of the two cubes, as the selection conditions can take place at arbitrary levels, different from the grouper levels that appear in the schema of the query. But, when selection conditions are concerned, all the levels involved in the narrow query are lower or equal than the respective levels in the broader query.

Note also that due to the fact that the order of levels is a \emph{partial order}, the respective levels of the two selection conditions can be the same.

Also, for every valid value of $q^n$, there must exist a value of $q^b$ that covers a broader span of values.

The inverse of the Theorem does not hold. Assume the case where $\phi^n$: $Continent$ = $Oceania$ and $\phi^b$: $Country$ $\in$ $\{ Australia, New~Zealand, ... \}$ (a superset of the countries of Oceania). Then, although the detailed proxy of Oceania is a subset of the union of the detailed proxies of the countries in the set $U$ of $q^b$, and $V^0$ $\subseteq$ $U^0$ holds, the condition of the Theorem is not met.
\end{remark}
\hrulefill\newline

\begin{lemma}
For the case where both queries have dicing selection conditions, i.e., single-member set-values for each atom of their selection condition, we can say that $q^b$ foundationally contains $q^n$, $q^n$ $\sqsubseteq^0$ $q^b$, if the following holds: \\
\indent $\forall$ atom of $q^n$, say $a$: $D.L^{n}$ = $v$, the respective atom of $q^b$, say $a'$: $D.L^{b}$ = $u$, involves a value $u$ s.t., $u$ = $anc_{L_n}^{L_b}(v)$
\end{lemma}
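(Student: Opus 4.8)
The plan is to derive this statement as an immediate specialization of the preceding Theorem, rather than to argue from scratch. By the definitions of Section~\ref{sec:bckgr}, a dicing selection condition is nothing but a simple selection condition in which every per-dimension atom $D.L = v$ is the single-member set-membership atom $D.L \in \{v\}$. Hence both $\phi^n$ and $\phi^b$ already satisfy the hypotheses of that Theorem, with the value sets specialized to $V = \{v\}$ and $U = \{u\}$ for each dimension $D$. The whole task is therefore to check that the Theorem's premise, read under these singleton value sets, coincides with the premise of the present Lemma.

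First I would instantiate the Theorem's sufficient condition under singleton value sets. That condition requires that for every atom $D.L_n \in V$ of $q^n$ and every $v \in V$, there exist some $u \in U$ in the corresponding atom of $q^b$ with $u = anc_{L_n}^{L_b}(v)$. When $V = \{v\}$, the universal quantifier over $V$ ranges over the single element $v$; when $U = \{u\}$, the existential quantifier over $U$ is satisfiable exactly when the one available element equals $anc_{L_n}^{L_b}(v)$. The Theorem's premise thus reduces precisely to the Lemma's premise $u = anc_{L_n}^{L_b}(v)$, asserted for each dimension. Applying the Theorem then delivers the conclusion $q^n \sqsubseteq^0 q^b$ directly.

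I do not expect a genuine obstacle here, since the argument is purely a collapse of quantifiers over singletons onto a single equality per dimension. The only points worth an explicit remark are (a) that dicing atoms are admissible instances of the set-valued atoms for which the Theorem is stated, being finite value sets of size one, and (b) that the well-definedness of $anc_{L_n}^{L_b}(v)$, guaranteed whenever $L_n \preceq L_b$ as the premise tacitly requires, makes the existential clause automatically witnessed by the unique member of $U$. Since the Lemma claims only sufficiency (``if the following holds''), matching the one-directional guarantee of the Theorem is all that is needed, and no converse need be established.
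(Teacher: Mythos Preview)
Your proposal is correct and follows exactly the approach implicit in the paper's own proof, which reads simply ``Obvious.'' You have merely made explicit what the paper leaves to the reader: that dicing atoms are singleton-valued simple atoms, so the quantifiers in the preceding Theorem's premise collapse to the single equality $u = anc_{L_n}^{L_b}(v)$ per dimension, and the Lemma follows immediately.
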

\begin{proof}
Obvious.
\end{proof}

\hrulefill
\begin{example}
Assume the following two queries with the same schema and different selection conditions.

\[q^o = \tuple{ \mathbf{DS}^{0},\ \phi^o,\ [Month, W.L_1, E.ALL, sumTaxPaid],\
[sum(TaxPaid)]\ } 
\]

having
\[\phi^o = Year \in \{2019,2020\} \wedge W.L_2 \in \{with-pay\}
\]

and
\[q^n = \tuple{ \mathbf{DS}^{0},\ \phi^n,\ [Month, W.L_1, E.ALL, sumTaxPaid],\
[sum(TaxPaid)]\ 
}
\]

having
\[\phi^n = Year \in \{2019\} \wedge W.L_1 \in \{private, self-emp\}
\]

Then, we can see that all the conditions of the theorem are held:
\begin{itemize}
    \item both queries have the same schema;
    \item all the atoms of the two selection conditions are in the form requested by the query (both queries imply an atom of the form $E.ALL \in \{All\}$ too);
    \item for every value appearing in the atoms of $q^n$, there is an ancestor in the value-set of $q^o$ -- specifically, for Year, $anc_{Year}^{Year}(2019)$ = 2019 which is part of the value-set for the atom of $\phi^o$, and, both values of $\{private, self-emp\}$ have an ancestor in $L_2$ which is $with-pay$ (also in the value set of the respective atom in $\phi^o$).
\end{itemize}
Observe also how all the levels of the atoms of $\phi^o$ are at higher or equal height than the ones of $\phi^n$. 
\end{example}
\hrulefill

\begin{figure}
  \centering
    \includegraphics[width=0.5\textwidth]{figures/queryResult}
    \includegraphics[width=0.45\textwidth]{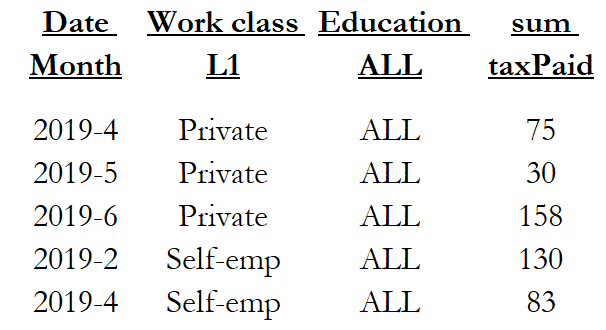}
      \caption{Results of queries $q^o$ and $q^n$}
      \label{fig:f-cont}
\end{figure}

\ifdraft
\subsection{The inverse enumeration problem}
The \textit{direct enumeration problem} is naive: all the cells of the new cube are contained in the previous one, if the Theorem says so. This holds for both the same-level and detailed problem.

The \textit{inverse enumeration problem} is detailed in Section~\ref{sec:derivalbeIntersection}.\sideNote{Which part of the detailed area of $q^b$ is not overlapping with the included part of $q^{n^0}$} The main idea is: which part of the detailed area of the broader cube is left out if we remove the cells of the detailed area of the current cube? 

\fi

\newpage
\section{Same-level Containment}\label{sec:SameLevelCont}
\subsection{Intuition: Checking for direct novelty via containment of cubes defined at the same levels}

Can we affirm that the cells of a certain (new) query, say $q^n$ are always a subset of another (possibly previously pre-computed) query, say $q$? 

A first precondition is that the two queries have exactly the same schema and the same aggregate functions applied to the same detailed measures, to even begin discussing a potential overlap. If this is not met, then no extra check is necessary.

\result{Assume now that the above requirement is met and the schemata and aggregations of two queries $q^n$ and $q$ are identical, and the only difference the two queries have is in their selection conditions. The \underline{decision} problem at hand is: given the query $q$ with condition $\phi$ and the query $q^n$ with condition $\phi^n$, both defined at the same schema $[L_1, \ldots, L_n$, $M_1, \ldots, M_m]$, without using the extent of the cells of the two queries, and by using only the  selection conditions and the common schema of the queries, can we compute whether the result of $q^n$ is a subset of the result of $q$?}

\result{In a similar vein, the respective \underline{enumeration} problem is: can we compute which cells of $q^n$ are already part of $q$, and which are not?}

In the rest of our deliberations, we  call a dimension a \emph{non-grouper}, when it is rolled-up to the level $ALL$ and thus, is practically excluded from the underlying aggregation of values. \emph{Groupers} on the other hand, are the levels of the dimensions that are not rolled-up to $ALL$, and thus, the query result produces coordinates other than $all$ for them. 

To give a concrete example: Assume a cube over $Product, Time, Geography$ with $Sales$ as measure. Assume that we have two queries both of which roll-up $Geography$ at level $ALL$, and report sales per month and product family. $Geography$ is a non-grouper, because it is rolled-up to the level $Geography.ALL$ and thus, is practically excluded from the underlying aggregation of values.The other two dimensions are groupers.

What can make the cells of the two queries be different? Potential reasons are:

\begin{itemize}
\item Different filters in non-grouper levels. Assume that one of the two queries applies the filter $Country$ = $Japan$ and other has the filter $true$. As another example, one query applies the filter $Country$ = $Japan$ and the other one the filter $Country$ = $China$. In either case, the cells of the result of the two cubes will have the same coordinates, but the values will be different, due to the different filters in the non-groupers. A side-effect of this is that we cannot even exploit the case where the old cube has the filter $Country$ $\in$ $\{China, Japan\}$ and the new one $Country$ = $China$, again, because the resulting cells have the same coordinates, but their aggregate values are different.
\item Problematic partial filters in grouper. Assume the above scenario, with the old query selecting months in [$January~2020$ .. $November~2020$] and the new query selecting $Day$ in $[1/1/2020$ .. $15/11/2020]$. The problem here is in November: both queries will roll up at the level of month, and thus will report the month November 2020, but the new query is filtering a subset of this month, and thus the aggregate cells will be different.
\item Different filters in grouper levels. Assume the value-set of the old cube is not a super-set of the value set of the new cube, for a grouper level. For example, again assume that both queries roll-up $Geography$ at level $ALL$, and report sales per month and product family, and the old query selects months in [$March~2020$ .. $September~2020$] and the new query selects months in [$September~2020$ .. $October~2020$].

\end{itemize}

Practically, we need to have identical selections for non-grouper levels, and "rollable" selection subsumption with respect to the grouping levels, for grouper levels. Theorem \ref{theor:Same-level-containment} formalizes the above observation. Before introducing the theorem, however, we need to introduce a few definitions.

\subsection{Terminology} 

\subsubsection{Groupers}
\begin{definition}
Given a query $q$ with a schema comprising a set of levels $[D_1.L_1,\ldots,D_n.L_n]$, over the respective dimensions: 
\begin{itemize}
    \item A dimension $D$ is a \textit{non-grouper}, when it's respective schema level is (rolled-up to) the level $ALL$.
    \item A dimension $D$ is a \textit{grouper}, when its respective level in the schema is not rolled-up to $ALL$.
\end{itemize}

\end{definition}
By extension of the terminology, we will also refer to the respective levels as groupers and non-groupers, too.

\begin{definition}
Given a multidimensional schema $S$ and a simple selection condition $\phi$ to which it participates, a dimension $D$ with a grouper level $D.L^{\gamma}$ at the schema level and a filter level $D.L^{\sigma}$ at $\phi$,  is characterized as follows:
\begin{itemize}
\item \emph{unbound}, if $D.L^{\sigma}$ = $D.ALL$ and the atom of $phi$ is $D.ALL$ $\in$ $\{D.all\}$ (equiv., $true$)
\item \emph{pinned grouper}, if both $D.L^{\gamma}$ and $D.L^{\sigma}$ $\neq$ $D.ALL$ \item \emph{pinned non-grouper}, if $D.L^{\gamma}$ = $D.ALL$ and $D.L^{\sigma}$ $\neq$ $D.ALL$
\end{itemize}

\end{definition}

\hrulefill
\begin{example}
Assume a query

\[q^o = \tuple{ \mathbf{DS}^{0},\ \phi^o,\ [Month, W.L_1, E.ALL, sumTaxPaid],\
[sum(TaxPaid)]\ 
}
\]

Then, Month and $W.L_1$ are groupers and Education is a non-grouper.\\

Concerning Education:
\begin{itemize}
    \item if the atom $E.ALL \in \{All\}$ is part of $\phi$ than the dimension is unbound, i.e., all the members of the education dimension are computed for the final result
    \item if an atom like $E.L_3 \in \{Post-secondary\}$ is part of $\phi$, then the dimension is a pinned non-grouper
\end{itemize}

Concerning Date:
\begin{itemize}
    \item if the atom $Date.ALL \in \{All\}$ is part of $\phi$ than the dimension is unbound
    \item if an atom like $D.Year \in \{2019,2020\}$ is part of $\phi$, then the dimension is a pinned grouper
\end{itemize}
\end{example}
\hrulefill

\subsubsection{Rollable dimensions, schemata and selection conditions}
\begin{definition}[Perfectly Rollable Dimension / Perfectly Rollable atom]
Assume a grouper level $D.L^{\gamma}$ and an atom $\alpha$:$D.L^{\sigma}$ $\in$ $V$, $V$ = $\{v_1, \ldots, v_k\}$.
Then, the dimension $D$ is \emph{perfectly rollable} with respect to the tuple ($L^{\gamma}$, $L^{\sigma}$, $V$), or, equivalently, $\alpha$ is \emph{perfectly rollable} with respect to $L^{\gamma}$, if one of the following  two conditions holds:\\
(a) $L^{\gamma}$ $\preceq$ $L^{\sigma}$ (which implies that every grouper value of $L^{\gamma}$ that qualifies is entirely included, as the selection condition is put at a higher level that the grouping, e.g., group by month, for year = 2020)\\
(b) $L^{\sigma}$ $\prec$ $L^{\gamma}$, and for each value $u_i$ $\in$ $dom(L^{\gamma})$: $u_i$ = $anc_{L^{\sigma}}^{L^{\gamma}}(v_i)$, \emph{all} $desc^{L^{\sigma}}_{L^{\gamma}}(u_i)$ $\in$ $V$ (i.e., the entire set of children of a grouper value $u$ is included in the computation of $u$).
\end{definition}

\result{The intuition behind perfectly rollable atoms, is that whenever a grouper value will appear at the result of a cube query, its entire set of descendants will have been included in the grouping.}

\hrulefill

A practical implication of perfect rollability is that this property propagates all the way to $L^0$, where all the detailed descendants of a value $u$ are qualified by the selection condition to participate in the computation of the aggregate value (Figure \ref{fig:perfectRoll}).
\result{Both conditions guarantee that, given a simple selection condition on a dimension and a grouper level, there are no grouper cells in the result of a cube that could be computed on the basis of only a subset of their detailed descendants, but rather, the entire range of descendant values are taken into consideration for their computation}.\\

\hrulefill

\begin{figure}
  \centering
    \includegraphics[width=\textwidth]{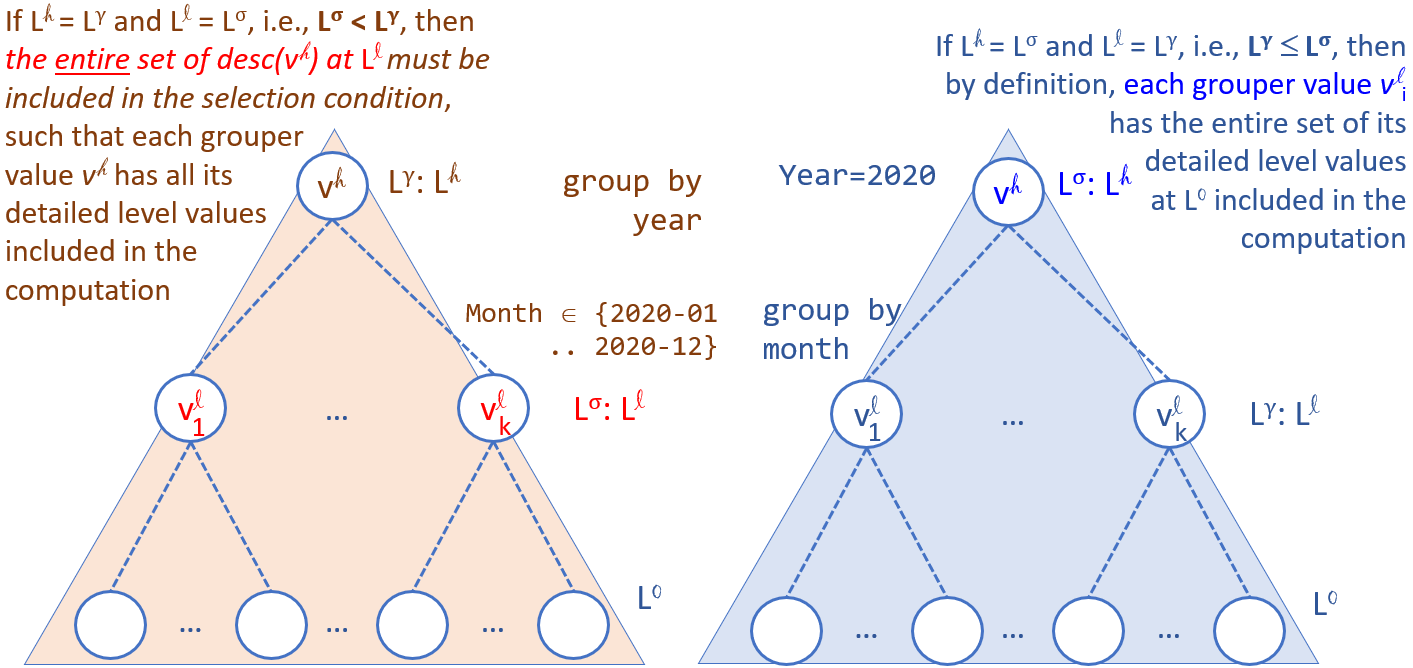}
      \caption{Perfect Rollability}
      \label{fig:perfectRoll}
\end{figure}

\begin{definition}[Perfectly Rollable Schema / Perfectly Rollable simple selection condition]
Assume a schema $S$: [$D_1.L_1$, $\ldots$, $D_n.L_n$] over a set of dimensions [$D_1$, $\ldots$, $D_n$] with each grouper level belonging to a different dimension and a simple selection condition $\phi$: $\bigwedge\limits_{i=1}^{n}$ $\alpha_i$, with each atom $\alpha$ of the form $D.L^{\sigma}$ $\in$ $V$, $V$ = $\{v_1, \ldots, v_k\}$, and exactly one atom per dimension.
Then, the schema $S$ is \emph{perfectly rollable} with respect to the tuple ($S$, $\phi$), or, equivalently, $\phi$ is \emph{perfectly rollable} with respect to $S$, if each atom $\alpha_i$ is perfectly rollable with respect to its respective grouper level $L_i$.
\end{definition}

The perfectly rollable condition is a "clean" characterization stating that if we group by a level $L$ on \emph{any} possible data set, then, the resulting grouper values of $L$ will be produced by the entire population of their descendants at lower levels (in fact, as far as the semantics are concerned: the most detailed one). 

However, perfect rollability is not the only useful situation that can occur in practice: consider a running-year summary of sales, which means that the entire sales of the year up to the current date are summed. It is quite possible that the current date is in the middle of the year, thus, when the year is summed, the entire population of its descendants is simply not there. This has been captured by the L-containment notion in \cite{DBLP:conf/caise/VassiliadisS00, VassiliadisPhdthesis} which is a broader concept than perfect rollability. \sideNote{Perfect Roll.~is a useful but not obligatory property}
However, perfect rollability is quite faster a check in all practical cases.

\newpage
\subsection{The decision problem of query containment for same-level queries}

Are the cells of a query result a subset of the sells of another query result? Can we decide whether this holds without actually ever executing the queries, just by their definition, and independently of the data stored in the database?

To answer these questions we introduce the following theorem. The theorem requires that the two queries have the same schema (otherwise there is no point to even discuss a subset relation). The two selection conditions must have the same filter for non-groupers, otherwise the filtering of the non-groupers is (a) different and (b) not observable at the cells, as the resulting cells will have $All$ as a coordinate, although internally there will be a filter posed to the members of the dimension. For the rest of the dimensions, they have to produce detailed areas where the one is a subset of the other, while both are perfectly rollable with respect to the common schema, such that the values of the coordinates of the result cells correspond to their entire detailed descendants. 

\begin{theorem}[Same-level-containment]\label{theor:Same-level-containment}
The query $q^b$ is a \result{same-level superset} of a query $q^n$ (equiv. \result{containing} a query $q^n$) if the following conditions hold:
\begin{enumerate}
    \item both queries have exactly the same underlying detailed cube $\mathbf{DS}$, exactly the same dimension levels in their schema and the same aggregate measures $agg_i(M_i^0)$, $i$ $\in$ 1 .. $m$ (implying a 1:1 mapping between their measures). To simplify notation, we will assume that the two queries have the same measure names, and thus, the same schema  $[L_1,~\ldots,~L_n,M_1,~\ldots,~M_m]$.
    \item both queries have simple selection conditions $\phi^b$ and $\phi^n$, respectively, with the following characteristics:
    \begin{enumerate}
    \item both queries have the same atoms for non-grouper dimensions
    \item grouper dimensions are (i) perfectly rollable with respect to the combination of their grouper and filter, and (ii) for each atom $\alpha_i^n$, with its detailed descendant being $\alpha_i^{n^0}$: $L^0~\in~V^{n^0}$, and the respective, $\alpha_i^b$, with its detailed descendant being $\alpha_i^{b^0}$: $L^0~\in~V^{b^0}$, the following condition holds: $V^{n^0}~\subseteq~V^{b^0}$.
    \end{enumerate}

\end{enumerate}
\end{theorem}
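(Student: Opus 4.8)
The plan is to push the whole comparison down to the most detailed level, where the semantics of both queries are anchored, and to argue that the two queries aggregate \emph{literally the same} detailed cells for every coordinate that occurs in the result of $q^n$. By the formal semantics of a cube query, the result cell of a query $q$ at a coordinate $\gamma = [u_1,\ldots,u_n]$ aggregates exactly the detailed contributing set
\[
Z_\gamma(q) = \{\, z \in \mathbf{DS}^0 : z \models \phi^0 \text{ and } anc_{L_j^0}^{L_j}(z[L_j^0]) = u_j \text{ for all } j \,\},
\]
and the value of measure $M_i$ is $agg_i$ applied to the collection $\{z[M_i^0] : z \in Z_\gamma(q)\}$. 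Hence, to prove $q^n.cells \subseteq q^b.cells$, it suffices to fix an arbitrary cell $c \in q^n.cells$ with coordinate $\gamma = c^+$ and show that $Z_\gamma(q^n) = Z_\gamma(q^b)$: equality of the contributing sets immediately yields both that the cell at $\gamma$ exists in $q^b$ (the set is non-empty, since $c$ exists) and that it carries identical measure values (the same aggregate function applied to the same collection of detailed cells), so $c \in q^b.cells$.

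First I would establish that the contributing sets agree dimension by dimension. Writing $\phi^{n^0} = \bigwedge_i \alpha_i^{n^0}$ and $\phi^{b^0} = \bigwedge_i \alpha_i^{b^0}$ with $\alpha_i^{n^0}\!: L_i^0 \in V_i^{n^0}$ and $\alpha_i^{b^0}\!: L_i^0 \in V_i^{b^0}$, membership of a detailed cell $z$ in $Z_\gamma$ factors as the conjunction, over $i$, of the condition $z[L_i^0] \in V_i^{\cdot^0} \cap desc_{L_i}^{L_i^0}(u_i)$, where the second factor encodes the roll-up constraint $anc_{L_i^0}^{L_i}(z[L_i^0]) = u_i$. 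Abbreviating $desc_{L_i}^{L_i^0}(u_i)$ as $desc(u_i)$, it therefore suffices to show $V_i^{n^0} \cap desc(u_i) = V_i^{b^0} \cap desc(u_i)$ for every $i$. For a \emph{non-grouper} dimension, $L_i = ALL$ and $u_i = all$, so $desc(u_i)$ is the whole detailed domain and condition (2a) gives $\alpha_i^n = \alpha_i^b$, hence $V_i^{n^0} = V_i^{b^0}$ and the two intersections coincide.

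For a \emph{grouper} dimension I would invoke perfect rollability of $q^n$: since $\gamma$ occurs in the result of $q^n$, the grouper value $u_i$ is actually produced, and perfect rollability guarantees that its entire descendant population participates, i.e. $desc(u_i) \subseteq V_i^{n^0}$; thus $V_i^{n^0} \cap desc(u_i) = desc(u_i)$. Combining this with condition (2b)(ii), $V_i^{n^0} \subseteq V_i^{b^0}$, gives $desc(u_i) \subseteq V_i^{n^0} \subseteq V_i^{b^0}$, whence $V_i^{b^0} \cap desc(u_i) = desc(u_i)$ as well, so the two intersections are again equal. Taking the product over all $i$ yields $Z_\gamma(q^n) = Z_\gamma(q^b)$, which completes the argument.

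The step I expect to be the real obstacle — and the place where the hypotheses earn their keep — is the measure-value equality, \emph{not} the coordinate containment. The inclusion $V^{n^0} \subseteq V^{b^0}$ alone, via monotonicity of the $anc$ functions, only secures that every coordinate of $q^n$ is also a coordinate of $q^b$; that is mere foundational containment and says nothing about \emph{how} an aggregate at a shared coordinate is computed. Perfect rollability is precisely what upgrades the subset relation into the set equality $V_i^{n^0}\cap desc(u_i) = V_i^{b^0}\cap desc(u_i) = desc(u_i)$, by forcing both queries to aggregate the full descendant population beneath each occurring grouper value. I would underline this with the failure mode flagged earlier in the running discussion: a partial filter on a grouper (e.g. a running-year cutoff on months) leaves a grouper value computed from only part of its descendants, so even at an identical coordinate the two cells could carry different measure values, and cell containment would fail although coordinate containment still holds.
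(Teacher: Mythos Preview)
Your argument is correct and follows essentially the same route as the paper's proof: reduce everything to the detailed level and show that, at each coordinate $\gamma$ of $q^n$, both queries aggregate the identical bag of detailed tuples. The paper organizes this as a Part~A (coordinate containment, via nested Cartesian products of the detailed value-sets) followed by a Part~B (measure equality, argued informally by contraposition on the hypotheses); you fold both steps into the single equality $Z_\gamma(q^n)=Z_\gamma(q^b)$, established dimension-by-dimension, which is cleaner and makes the role of perfect rollability more explicit.

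One observation your version surfaces that the paper does not: in your grouper-dimension step you only invoke perfect rollability of $q^n$ to obtain $desc(u_i)\subseteq V_i^{n^0}$, and then the subset hypothesis $V_i^{n^0}\subseteq V_i^{b^0}$ alone yields $desc(u_i)\subseteq V_i^{b^0}$. Perfect rollability of $q^b$ is never used. So your proof in fact establishes the conclusion under a slightly weaker hypothesis than stated in condition~(2b)(i); the paper's Part~B does appeal to rollability of both queries, but only because it reasons less directly.
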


\begin{proof}
Without loss of generality, we assume a single measure $M$, in order to simplify notation.

\sideNote{Prove a broader signature}\textbf{Part A. Proving that $q^b$ has a broader signature than $q^n$}.
Since the two queries have the same groupers, the cells at the result of both queries will be at the same level. Now, we must ensure that for every cell in the results of $q^n$, say $c^n$, there exists a respective cell $c^b$ in $q^b$, with the same coordinates and exactly the same area at the most detailed level at $C^0$.

Due to the property 2, the following holds at the most detailed level

\begin{center}
$V_1^{0^n}$ $\times$ $\ldots$ $V_n^{0^n}$ $\subseteq$ $V_1^{0^b}$ $\times$ $\ldots$ $V_n^{0^b}$
\end{center}

This is due to (a) the identity of the value-sets of the non-grouper dimensions, and (b) the explicit requirement of condition 2b for groupers.

Then, when mapping the detailed values to the ancestors at the levels of the common schema of the two cubes, the value-sets produced will fulfill 
the respective condition at the grouper level:
\begin{center}
$V_1^{L_{1},n}$ $\times$ $\ldots$ $V_n^{L_{n},n}$ $\subseteq$ $V_1^{L_{1},b}$ $\times$ $\ldots$ $V_n^{L_{n},b}$
\end{center}
Thus, with respect to their coordinates,  the cells of the new cube are a subset of the cells of the broader cube.

\hrulefill

\textbf{Part A (Alternative)}. Another way to look at this is as follows:

\begin{table}[h!]
\begin{center}

\begin{tabular}{ c | c c }
  & groupers & non-groupers \\ 
  \hline\\
 pinned & $\phi$ & $\phi$ (must be same) \\  
 ~ & ~ & ~\\
  ~ & ~ & ~\\
 non-pinned & ALL=all & ALL=all (must be same)\\
 ~ & ~ & ~\\
 ~ & ~ & ~\\
\end{tabular}
\caption{Possibilities for query $q^b$}
\label{tab:directSup}
\end{center}
\end{table}

Assume the cell $c^\star$ in the result of $q^n$, $c^\star~\in~q^n.cells$, defined as the tuple $c^\star$ = [$c^\star_1$, $\ldots$, $c^\star_n$, $m$]. For each dimension $D_i$, the respective value $c^\star_i$ is produced as a result of an atom as filter at the most detailed level and the mapping to $L^n$ via an ancestor function:

\begin{enumerate}
    \item $all$, if $D_i$ is an unbound non-grouper dimension; since we have assumed identity for these dimensions, for each such dimension $D_i$, both cubes will have the same $atom_i^0$; 
    \item $all$, also in the case of a pinned non-grouper dimension, i.e., the grouping is done at level $ALL$, but there exists an atom filtering the dimension -- again we have assumed identity for these cases, so, at the detailed level, the two cubes will have the same atom $atom_i^0$ for each such dimension $D_i$ (observe that, here, perfect rollability does not hold, but this is acceptable by the theorem);
    \item a value $v$ in $dom(L_i)$, in the case of a grouper dimension; in this case, since for \emph{both} queries, dimension $D_i$ is rollable, and $V^{n^0}~\subseteq~V^{b^0}$, this means that the values produced for the dimension $D_i$ at $q^n$ will also include $v$ (and in fact, with exactly the same values $desc^{L^0}_{L_i}(v)$ at the detailed level $L^0$. 
    \begin{enumerate}
        \item For the case of unbound groupers, all the domain of the grouper level $L_i$ participates in the result; if $q^b$ has $D_i$ as an unbound grouper, no matter what $q^n$ has as a filtering atom, it is acceptable by definition (remember it is obligatorily perfectly rollable, thus, the common cells will be produced by the same detailed values).
        \item if $q^n$ has $D_i$ as an unbound grouper, then obligatorily by the theorem's condition, $q^b$ has $D_i$ as an unbound grouper, too -- otherwise condition 2b is violated.
    \end{enumerate}
    
\end{enumerate}
 
 \hrulefill

\sideNote{Prove same measures}
\textbf{Part B. Proving that aggregate cells have the same measure values}. Then, the only question that remains is: assume two cells $c^b$ and $c^n $belonging to $q^b.cells$ and $q^n.cells$, respectively and having the same coordinates. Do they have the same measure $m$? The question is reduced to whether the detailed area of a cell $c^b$ is exactly the same with the detailed area of a cell $c^n$ with exactly the same coordinates. 
\begin{itemize}
    \item Due to the fact that $V_1^{0^n}$ $\times$ $\ldots$ $V_n^{0^n}$ $\subseteq$ $V_1^{0^b}$ $\times$ $\ldots$ $V_n^{0^b}$, it is impossible for a detailed cell used for the computation of a cell of $q^n$, not to participate to the production of the respective cell of $q^b$ with exactly the same coordinates.
    \item Inversely, if the cell $c^b$ had even a single detailed cell $c^{b^0}$ not belonging to the respective detailed area of $c^n$, this would mean that $c^n$ would have to be produced by a violation of one of the two conditions of requirement (2): either (a) a non-grouper atom of $q^b$ was broader than the respective one of $q^n$, or, (b) if non-groupers were identical, a grouper dimension's atom $D.\alpha$ producing $c^n$ would not be perfectly rollable to the respective level $D.L$ of the schema (if it is perfectly rollable, then the respective detailed area is identical for the common value of $c^n$ and $c^b$ for $D.L$).

\end{itemize}
 Thus, for the cells with the same coordinates, the two queries have identical detailed areas, and therefore, the resulting measure is the same.


In summary, $q^n.cells$ $\subseteq$ $q^b.cells$, i.e., for each $c^\star$ in $q^n.cells$, there exists exactly the same cell in $q^b.cells$ (with the same coordinates and the same measure values).

\end{proof}

\hrulefill
\begin{example}
Take the two queries of the section~\ref{sec:fc-complex}, specifically:

\[q^o = 
\tuple{
    \mathbf{DS}^{0},\ \phi^o,\ [Month, W.L_1, E.ALL, sumTaxPaid],\ [sum(TaxPaid)]\ 
}
\]

having
\[\phi^o = Year \in \{2019,2020\} \wedge W.L_2 \in \{with-pay\}
\]

and
\[q^n = \tuple{ 
    \mathbf{DS}^{0},\ \phi^n,\ [Month, W.L_1, E.ALL, sumTaxPaid],\ [sum(TaxPaid)]\
}
\]

having
\[\phi^n = Year \in \{2019\} \wedge W.L_1 \in \{private, self-emp\}
\]

\begin{figure}[h]
  \centering
    \includegraphics[width=\textwidth]{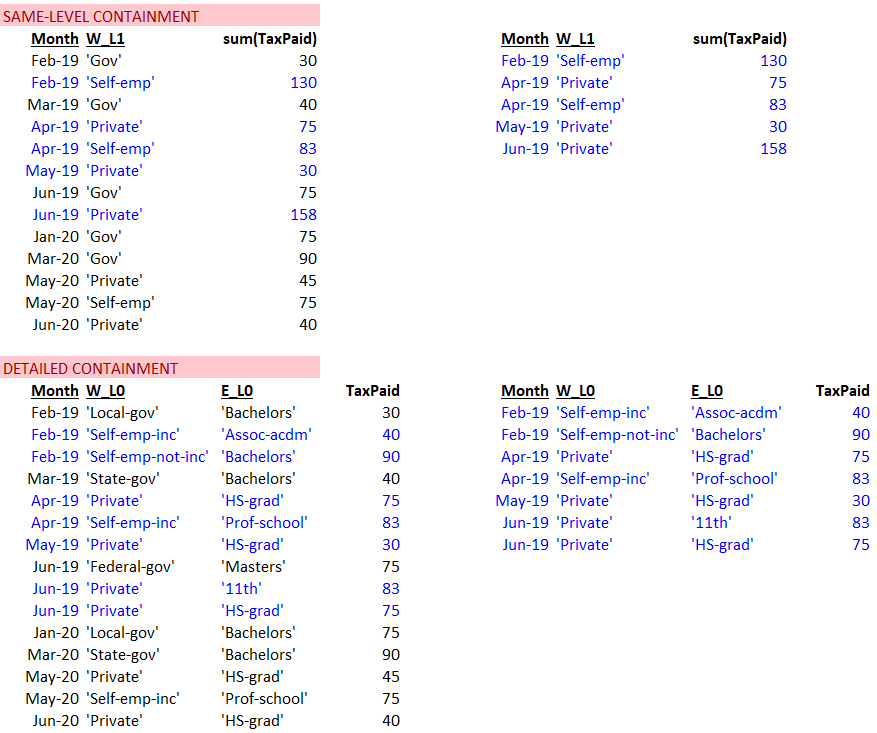}
      \caption{Containment of the two queries, $q^o$ (left) and $q^n$ (right). Same level containment (upper) and foundational containment (lower). }
      \label{fig:sml-containment}
\end{figure}

Then the Theorem holds for them. Specifically:
\begin{itemize}
    \item Both queries share the same schema. Thus condition 1 holds.
    \item Condition 2 also holds.
    \begin{itemize}
    \item Education is the only non-grouper, and the atoms in the two selection conditions are the same (i.e., $true$ or more accurately $E.ALL \in \{All\}$.
    \item The grouper dimension $Date$ is perfectly rollable for both queries (selection at the year level and grouping at the schema at month level).
    \item Concerning the dimension $Date$, the query $q^n$ has a set of detailed members that is a subset of the respective ones of $q^o$ -- specifically, $\{2019-01,\dots,2019-12\}$ $\subseteq$ $\{2019-01,\dots,2020-12\}$.
    \item The grouper dimension $Workclass$ is perfectly rollable for both queries, as in both queries, the grouper level is lower than the filter level at the respective atom. For $q^o$, $L_1 \preceq L_2$; for $q^n$, $L_1 \preceq L_1$.
    \item For the grouper dimension $Workclass$ the detailed members of $q^o$ are a superset of the respective ones of $q^n$ -- specifically, $\{private, not-inc, inc, federal, local, state\}$ $\supseteq$ $\{private, not-inc, inc\}$.
    \end{itemize}
\end{itemize}
Therefore, since both conditions hold, the theorem holds for these two queries.

\end{example}
\hrulefill


\subsection{The enumeration problem of query containment for same-level queries}

\result{Assume we have two queries, say an``old" query $q$ and a ``new'' $q^n$ under the exact same (i) underlying detailed cube $\mathbf{DS}^0$, (ii) schema $[L_1,~\ldots,~L_n,~M_1,~\ldots,~M_m]$, and (iii) aggregate measures $agg_i(M_i^0)$, $i$ $\in$ 1 .. $m$. Assume also that they both have simple selection conditions, albeit different: query $q$ with condition $\phi$ and the query $q^n$ with condition $\phi^n$, with the constraints of Theorem~\ref{theor:Same-level-containment} such that $q$ contains $q^n$. Can we compute which cells of the broader query $q$ are also part of the narrow query $q^n$, and which are not?
}

There are several alternatives for the problem: (a) a cell by cell \textit{after} the execution of both queries, or, (b) a comparison of the signatures of the two queries and the identification of coordinates that create a difference, \textit{without the need to execute the queries}. 


\subsubsection{Cell by cell}
The naive way to assess the enumeration problem is to compute the results and compare pairwise. 

\subsubsection{Answer based on Cartesian Produce Difference} 
The intuitive answer is to compute the Cartesian Product of result coordinates for both queries, say $q^{n^{+}}$ and $q^{+}$ and compute their difference $\delta^{+}$ = $q^{+}$ - $q^{n^{+}}$, containing the coordinates of the cells of the previous query $q$ not contained in the new query $q^n$. 
To the extent that the conditions of Theorem \ref{theor:Same-level-containment} are respected, the cells with the same coordinates will have exactly the same measures; therefore, the result of the set difference, $\delta^{+}$, will indicate exactly which cells are not already part of the results of the other query.
Algorithm~\ref{algo:EnumQueryContainment} performs this query signature comparison.

Basically, we need to invoke Algorithm~\ref{algo:QueryIntersectionSignatureDiff}  \textsc{\sc{ProduceCoveredAndNovelQueryCoordinates}} of section~\ref{sec:otherSign} but only if the pair of involved queries satisfy the conditions of Theorem \ref{theor:Same-level-containment}.

\begin{algorithm}[H]
\DontPrintSemicolon 
\KwIn{An old query $q$ containing a new query $q^n$, satisfying Theorem~\ref{theor:Same-level-containment} }
\KwOut{The subset of the coordinates of the old query $q$, say $q^{cov+}$, that are already part of the result of $q^{n}$, and its complement $q^{nov+}$}
\SetKwFunction{FCovNovSign}{ProduceCoveredAndNovelQueryCoordinates}
\Begin{
\If{ $q$ and  $q^n$  satisfy Theorem \ref{theor:Same-level-containment} } {
    \Return{$q^{cov+}$, $q^{nov+}$ $\gets$ \FCovNovSign($q^n$, $q$)}\; \tcp*{observe the order of param's}
    }
}
\caption{\sf{Containment Enumeration Of Common Cells Via Signature Comparison}}
\label{algo:EnumQueryContainment}
\end{algorithm}

\silence{
Another way is to do it via the selection conditions only 

\inlineRem{FIX!!! Not obligatory the homologous atoms are at the same level}
\begin{algorithm}[H]
\DontPrintSemicolon 
\KwIn{An old query $q$ containing a new query $q^n$, i.e., satisfying Theorem~\ref{theor:Same-level-containment}}
\KwOut{The subset of the coordinates of $q$, say $q^{cov}$ that are also part of the result of $q^n$, and its complement $q^{nov}$}
\Begin{
produce $\phi^{n^{+}}$ and $\phi^{+}$\;
$q^{cov+}$ $\gets$ $\phi^{n^{+}}~\bigcap~\phi^{+}$\;
$q^{nov+} \gets \phi^{n^{+}} - \phi^{+}$ \;
\Return{$q^{cov+}$, $q^{nov+}$ }
}
\caption{{Enumberate covered cells via signature comparison of selection conditions}}
\label{algo:EnumQueryContainmentSignatureDiff}
\end{algorithm}

\subsubsection{First computing the difference of atoms' signature}
It's a little bit tricky, because you take each dimension separately and compute the difference; the result of the difference is multiplied with the Cartesian Product of the \textit{entire} signatures of the rest $n-1$ dimensions (i.e., you do not just multiply the differences!!). \inlineRem{FIX once all else is done}

}

\silence{
\subsection{Cube Usability: computing a cube from another cube whose result is available}
What if we are not interested in checking whether the contents of the new query $q^n$ are contained with the results of a previous query $q$, but derivable from it? Practically, this means that the requirement that the schema is the same is violated.

For example, what if, all else being equal, the new query groups data by year and the previous query groups them by month? In this case, there is containment (in fact:  equivalence) at the most detailed level, but due to the fact that the condition on schema identity is violated, the two queries are not same-level comparable.

Thus, the decision problem for containment is whether the previous query can be exploited to compute the new one, \emph{by transforming} its results. And of course, we also need an algorithm for performing the computation. \inlineRem{TBA}

}
\newpage
\section{Query Intersection}\label{sec:intersect}

Can we affirm that the cells of a certain query, say $q^1$ intersect with the result cells of another query, say $q^2$? 

Much like containment, a first precondition is that the two queries have exactly the same schema and the same aggregate functions applied to the same detailed measures, to even begin discussing a potential overlap. If this is not met, then no extra check is necessary.

\result{Assume now that the above requirement is met and the schemata and aggregations of two queries $q^1$ and $q^2$ are identical, and the only difference the two queries have is in their selection conditions. The \underline{decision} problem at hand is: given the query $q^1$ with condition $\phi^1$ and the query $q^2$ with condition $\phi^2$, both defined at the same schema $[L_1, \ldots, L_n$, $M_1, \ldots, M_m]$, can we compute whether the result of $q^1$ intersects with the result of $q^2$, without using the extent of the cells of the two queries, and by using only the  selection conditions and the common schema of the queries?}

\result{In a similar vein, the respective \underline{enumeration} problem is: can we compute which cells of $q^1$ are also part of $q^2$, and which are not?}

\subsection{The decision problem of query intersection for same-level queries}
\begin{theorem}[Same-level-intersection]\label{theor:Same-level-intersection}
The query $q^1$ has a \result{same-level intersection}, or simply, \result{intersects}, with a query $q^2$ if the following conditions hold:
\begin{enumerate}
    \item both queries have exactly the same underlying detailed cube $\mathbf{DS}$, the same dimension levels in their schema and the same aggregate measures $agg_i(M_i^0)$, $i$ $\in$ 1 .. $m$ (implying a 1:1 mapping between their measures). To simplify notation, we will assume that the two queries have the same measure names, and thus, the same schema  $[L_1,~\ldots,~L_n,M_1,~\ldots,~M_m]$.
    \item both queries have simple selection conditions $\phi^1$ and $\phi^2$, respectively, with the following characteristics:
    \begin{enumerate}
    \item both queries have the same atoms for non-grouper dimensions
    \item grouper dimensions are perfectly rollable with respect to the combination of their grouper and filter, 
    \item for every grouper dimension $D$, which is commonly grouped at level $D.L$ in both queries, and the pair of homologous atoms $\alpha^1$: $D.L^{1^\phi}$ $\in$ $V^1$, and the respective, $\alpha^2$: $D.L^{2^\phi}$ $\in$ $V^2$ (with $L^{2^\phi}$ and $L^{2^\phi}$ being potentially different), their signatures, transformed at the grouper level $D.L$ intersect $\alpha^{1@L^+}$ $\bigcap$ $\alpha^{2@L^+}$ $\neq$ $\emptyset$ \sideNote{equivalently $q^{1^+}$ intersects with $q^{2^+}$}
    \end{enumerate}

\end{enumerate}
\end{theorem}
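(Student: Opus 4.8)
The plan is to mirror the two-part structure used for Theorem~\ref{theor:Same-level-containment}, adapting that containment argument to the weaker intersection requirement. By definition, $q^1$ intersects $q^2$ exactly when there is a cell $c$ with identical coordinates in both $q^1.cells$ and $q^2.cells$ carrying the same measure values. So I would first exhibit a common coordinate in the two signatures, and then prove that the cell sitting at that coordinate has identical measures in both results.

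\textbf{Part A (a common coordinate exists).} Recall from Section~\ref{sec:grouperDomain} that the query signature is the Cartesian product of the per-dimension grouper domains, $q^{+} = gdom(\alpha_1, D_1.L_1) \times \cdots \times gdom(\alpha_n, D_n.L_n)$. Two Cartesian products intersect if and only if each of their corresponding factors intersects, their intersection being $(A_1 \cap B_1) \times \cdots \times (A_n \cap B_n)$. For every non-grouper dimension, condition (2a) forces the two queries to share the same atom, hence the same factor $\{all\}$, which trivially self-intersects. For every grouper dimension, condition (2c) asserts precisely that the factors, transformed to the common grouper level $D.L$, have non-empty intersection, $\alpha^{1@L^+} \cap \alpha^{2@L^+} \neq \emptyset$. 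Therefore $q^{1^+} \cap q^{2^+} \neq \emptyset$, and I can fix a coordinate $\gamma = [v_1, \ldots, v_n]$ lying in both signatures, with $v_i = all$ on non-groupers and a shared grouper value elsewhere.

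\textbf{Part B (equal measures at $\gamma$).} Here I would reduce the claim to showing that the detailed cells aggregated to produce the $\gamma$-cell form exactly the same set for both queries. For a non-grouper dimension, identity of atoms (2a) gives identical detailed filters, so the two queries admit the same detailed values. For a grouper dimension, $v_i$ appears in the signature of each query, and perfect rollability (2b) guarantees that whenever a grouper value is produced it is produced from its \emph{entire} set of detailed descendants $desc_{L_i}^{L^0}(v_i)$; this is the same set for both queries regardless of the particular filter levels $L^{1^\phi}, L^{2^\phi}$. Consequently the detailed area feeding the $\gamma$-cell is the Cartesian product of these per-dimension detailed value sets, identical for $q^1$ and $q^2$; since both queries share the same detailed cube $\mathbf{DS}^{0}$ and the same aggregate function $agg_i(M_i^0)$, the aggregated measure value coincides. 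Hence the $\gamma$-cell belongs to both $q^1.cells$ and $q^2.cells$, establishing intersection.

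The hard part will be Part B, specifically the role of perfect rollability: without it, a common coordinate $\gamma$ could be produced from different detailed subsets in the two queries (e.g. a month filtered down to differing day ranges), yielding the same coordinates but different aggregate values, so that the coordinate-level overlap of Part A would fail to lift to a genuinely shared cell. I would therefore emphasize that condition (2b) is exactly what forces the detailed areas at $\gamma$ to be complete descendant sets, and hence identical across the two queries, closing the gap between coordinate overlap and actual cell overlap.
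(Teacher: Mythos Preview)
Your proposal is correct and follows essentially the same two-part structure as the paper's proof: first establishing that $q^{1^+} \cap q^{2^+} \neq \emptyset$ via the per-dimension factor intersections, and then using perfect rollability on groupers together with atom identity on non-groupers to conclude that cells sharing a coordinate $\gamma$ are computed from the same detailed area and hence carry the same measure. Your Part~A is in fact slightly more streamlined than the paper's, which detours through the detailed level $V^{1^0}, V^{2^0}$ before rolling back up to the grouper level, whereas you invoke the Cartesian-product intersection identity $(A_1 \cap B_1) \times \cdots \times (A_n \cap B_n)$ directly on the grouper domains; but this is a presentational economy, not a different route.
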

 
\begin{proof}
Without loss of generality, we assume a single measure $M$, in order to simplify notation.
Since the two queries have the same groupers, the cells at the result of both queries will be at the same level. Now, we must ensure that  there exists at least a single cell in the results of $q^1$, that also exists in the results of $q^2$, i.e., in both results there is a cell with the same coordinates and exactly the same area at the most detailed level at $C^0$.

\sideNote{Let's prove the intersection of result coordinates is not empty}
Without loss of generality, let's assume the most restrictive case where each pair of homologous atoms has a single common grouper value when transformed to the grouper level. Specifically:

\begin{enumerate}
    \item Since $\alpha^1$: $D.L^{1^\phi}$ $\in$ $V^1$, let $V^{1^0}$ be the set produced by taking the union of values $desc_{L^{1^\phi}}^{L^0}(v)$, for each value $v$ $\in$ $V^1$, $V^{1^0}$ = $\bigcup\limits_{i=1}^{k}$ $desc_{L^{1^\phi}}^{L^0}(v_i)$.The set $V^{2^0}$ is produced similarly.
    \item When the values of $V^{1^0}$ are rolled-up to the ancestor values at the grouper level $L$, we get the value set $V^{1@L}$ which gives the grouper values for the query, for this respective dimension: $V^{1@L}$ = $\bigcup\limits_{i=1}^{k'}$ $anc^L_{L^0}(v)$, $v~\in~V^{1^0}$. Similarly, we produce $V^{2@L}$ from $V^{2^0}$.
    \item Let us assume, now, without loss of generality that $V^{1@L}$ and $V^{2@L}$ have only a single value in common, say $\gamma$ = $V^{1@L}$ $\bigcap$ $V^{2@L}$. Assume that this holds for all grouper dimensions (for non-grouper dimensions, due to the identity of the respective atoms, the produced sets $V^{i@L}$ are identical for the two queries). The extension of the sequel of this proof to an intersection including more values is straightforward.
\end{enumerate}

Naturally, depending on the relative positions of $L$ with $L^{1^\phi}$ and $L^{2^\phi}$ there are faster ways to produce the sets $V^{1@L}$ and $V^{2@L}$. The steps 1 and 2 of the above process produce a result independently of these positions, however, thus we omit the potential optimizations. 

Assume we have $m\star$ grouper dimensions. Given the above, if one considers the intersection of the Cartesian Products

\begin{center}
$V_1^{1@L_1}$ $\times$ $\ldots$ $\times$ $V_{m\star}^{1@L_m\star}$ 
$\bigcap$ 
$V_1^{2@L_1}$ $\times$ $\ldots$ $\times$ $V_{m\star}^{2@L_m\star}$
\end{center}
the result is nonempty and specifically: [$\gamma_1, \ldots, \gamma_{m\star}$].

By generalizing the Cartesian Product to include the non-grouper dimensions, too, the intersection of the respective value sets is non-empty (with the exception of the trivial case where a dimension produces an empty value set, in which case both queries have an empty result):
\begin{center}
$V_1^{1@L_1}$ $\times$ $\ldots$ $\times$ $V_n^{1@L_n}$ 
$\bigcap$ 
$V_1^{2@L_1}$ $\times$ $\ldots$ $\times$ $V_n^{2@L_n}$
$\neq~\emptyset$\\
(equivalently: $q^{1^+}$ $\bigcap$ $q^{2^+}$ $\neq~\emptyset$ )
\end{center}
and its projection to the grouper dimensions is [$\gamma_1, \ldots, \gamma_{m\star}$].

\sideNote{Prove same measure}
Then, the only question that remains is: assume two cells $c^1$ and $c^2$ belonging to $q^1.cells$ and $q^2.cells$, respectively and having the same coordinates. Is it the same cell? I.e., do they have the same aggregate measure $M$? To the extent that both queries work with the same measure and aggregate function, the question is reduced to whether the detailed area of a cell $c^1$ is exactly the same with the detailed area of a cell $c^2$. 

Let's assume the cells with aggregate coordinates [$\gamma_1, \ldots, \gamma_n$]. Lets assume, without loss of generality that the first $m\star$ dimensions are the grouper ones and the rest are the non-groupers. Then, the detailed proxy for both cells $c^1$ and $c^2$ is produced by the Cartesian Product $\gamma_1^0 \times \ldots \times \gamma_n^0$ which is computed \emph{exactly} via the expression:
\begin{center}
$desc_{L_1}^{L_1^0}(\gamma_1), \times \ldots \times desc_{L_m\star}^{L_{m\star}^0}(\gamma_{m\star}) \times \Gamma^{(m\star+1)^0} \times \ldots \times \Gamma^{n^0}$ 
\end{center}

with the sets $\Gamma^{i}$ of the non-grouper dimensions being identical for both queries (remember that non-grouper filters are identical). 
 
Is it possible that there exists a detailed cell participating in the production of say $c_1$ and not in the production of $c_2$? The answer is negative: the grouper coordinates are produced by the entire set of descendants (and only them) --otherwise grouper dimensions are not rollable-- and the non-grouper dimensions have identical filters.

Thus, for the cells with the same coordinates, the two queries have identical detailed signatures (and as a result, detailed areas too), and, therefore, the resulting measure is the same. Consequently, there exists at least one common cell in the result of the two queries.

\end{proof}

\hrulefill
\begin{example}
Assume the following two queries with the same schema and different selection conditions.

\[q^o = \tuple{ \mathbf{DS}^{0},\ \phi^o,\ [D.Month, W.L_1, E.ALL, sumTaxPaid],\
[sum(TaxPaid)]\ }
\]

having
\[\phi^o = Year \in \{2019,2020\} \wedge W.L_2 \in \{with-pay\}
\]

and
\[q^n = \tuple{ \mathbf{DS}^{0},\ \phi^n,\ [D.Month, W.L_1, E.ALL, sumTaxPaid],\
[sum(TaxPaid)]\ }
\]

having
\[\phi^n = Year \in \{2018,2019\} \wedge W.ALL \in \{All\}
\]

Then, we can see that all the conditions of the theorem are held:
\begin{itemize}
    \item both queries have the same schema;
    \item all the atoms of the two selection conditions are in the form requested by the query (both queries imply an atom of the form $E.ALL \in \{All\}$ too);
    \item both queries have the same atom for the non-grouper dimension $Education$;
    \item all grouper dimensions are perfectly rollable with respect to their respective atoms, as the atoms are all defined at higher levels than their respective grouper levels;
    \item the detailed proxies of the query atoms intersect as the following table shows; this holds both for the $Date$ atoms $\alpha^o_D$ and $\alpha^n_D$ and for the $Workclass$ atoms, $\alpha^1_W$ and $\alpha^2_W$, respectively
\end{itemize}

\begin{center}
\begin{tabular}{ l l }
 $\alpha^{o^{0+}}_D$: $\{2019-01, \ldots, 2020-12\}$ & $\alpha^{o^{0+}}_W$: $\{priv., nonInc, inc, fed., loc., st.\}$  \\ 
 $\alpha^{n^{0+}}_D$: $\{2018-01, \dots, 2019-12\}$ & $\alpha^{n^{0+}}_W$: $\{priv., nonInc, inc, fed., loc., st.,WOPay\}$  \\   
\end{tabular}
\end{center}

Therefore, all the conditions for the Theorem hold, and thus, the two queries are guaranteed to have common cells.

\begin{figure}[h]
  \centering
    \includegraphics[width=0.9\textwidth]{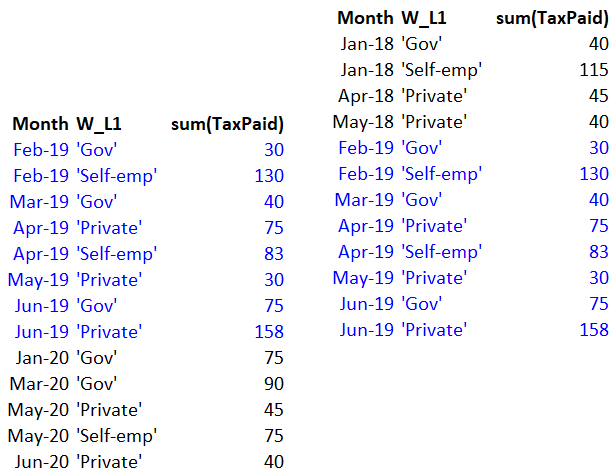}
      \caption{Intersection of the two queries }
      \label{fig:sml-intersect}
\end{figure}

\end{example}
\hrulefill

\subsection{Enumeration of the common cells of two queries} 
Assuming we contrast a query $q$ against a query $q^{\star}$, can we compute which cells of $q$ are contained in the other query?

The intuitive answer is to compute the Cartesian Product of result coordinates for both queries, say $q^{\star^{+}}$ and $q^{+}$ and compute their difference $\delta^{+}$ = $q^{+}$ - $q^{\star^{+}}$, containing the coordinates of the cells of the query $q$ not contained in the query $q^{\star}$. 
To the extent that the conditions of Theorem \ref{theor:Same-level-intersection} are respected, the cells with the same coordinates will have exactly the same measures; therefore, the result of the set difference, $\delta^{+}$, will indicate exactly which cells are not already part of the other query's results.

Basically, we need to invoke Algorithm~\ref{algo:QueryIntersectionSignatureDiff}  \textsc{\sc{ProduceCoveredAndNovelQueryCoordinates}} but only if the pair of involved queries satisfy the conditions of Theorem \ref{theor:Same-level-intersection}.

\begin{algorithm}[H]
\DontPrintSemicolon 
\KwIn{A query $q$ and a benchmark query $q^\star$, satisfying Theorem~\ref{theor:Same-level-intersection} }
\KwOut{The subset of the coordinates of $q$, say $q^{cov+}$ that are already part of the result of $q^{\star}$, and its complement $q^{nov+}$}
\SetKwFunction{FCovNovSign}{ProduceCoveredAndNovelQueryCoordinates}
\Begin{
\If{ $q$ and  $q^\star$  satisfy Theorem~\ref{theor:Same-level-intersection} } {
    \Return{$q^{cov+}$, $q^{nov+}$ $\gets$ \FCovNovSign($q$,  $q^\star$)}\;
    }
}
\caption{\sf{Enumerate Common Cells Via Signature Comparison}}
\label{algo:EnumQueryIntersectionSignatureDiff}
\end{algorithm}

A very similar check can be made at the extensional level, if the results of the queries are already available. The check on whether two cells are the same can depend only on the coordinates if the Theorem~\ref{theor:Same-level-intersection} holds.

\begin{algorithm}[H]
\DontPrintSemicolon 
\KwIn{A query $q$ and a benchmark query $q^\star$, satisfying Theorem~\ref{theor:Same-level-intersection} }
\KwOut{The subset of the result cells of $q$, say $q^{cov}$ that are already part of the result of $q^{\star}$, and its complement $q^{nov}$}
\Begin{
    \If{results are not available}{
        produce $q.result$ and $q^{\star}.result$ \;
    }
\tcp*{all identity checks can be coordinate-based}
$q^{cov}$ $\gets$ $q.result$~$\bigcap$~$q^{\star}.result$  \;
$q^{nov}$ $\gets$ $q.result$ - $q^{\star}.result$ \;

\Return{$q^{cov}$, $q^{nov}$ }\;
}
\caption{\sf{Enumerate Common Cells Via Result Comparison}}
\label{algo:EnumQueryIntersectionCellDiff}
\end{algorithm}

\subsection{Checking a query for containment over a query set}
Assume we have a list of queries previously issued by a user in the context of his history, say $Q$ = $\{ q_1, \ldots, q_k\}$. Then, a new query follows in the session, which for simplicity, we call simply $q^n$. We also assume that the underlying cube has not changed values in the context of the session. \emph{What we would like to be able to check is the subset of cells of $q^n$ that have previously been obtained via the previous queries, without checking the results of the queries of $Q$. Ideally, this should be achievable before issuing the query $q^n$ and obtaining its results}.

\subsubsection{Containment Syntactic check: a coarse approximation}

The simplest possible check is to see if a query is already contained in the list of previous queries. Let's assume that we test $q$ against each of the existing $q_i$ queries individually. We return $true$ if the definition of the query $q$ is found in $Q$ and $false$ otherwise.

\subsubsection{Intersection Syntactic Check}
Let's take the new query $q^n$ against an existing query $q$, $q$ $\in$ $Q$.  What if we can compare two queries pairwise and we can label the subset of the multidimensional space of $q^n$ already covered by $q$, as $visited$? Then, we can test the new query $q^n$ against each of the queries in $Q$ and stop once we exhaust them all, or, we have covered the entire space of $q$, whichever comes first. Algorithm~\ref{algo:EnumQueryIntersectionSignatureDiff} is the basis for this check, progressively updating the set of covered and uncovered cells.

\begin{algorithm}
\DontPrintSemicolon 
\KwIn{A list of queries $Q$ = $\{q_1, q_2, \ldots, q_n\}$ and a new query $q$}
\KwOut{The subset of the coordinates of $q$, say $q^{cov}$ that are already part of the results of $Q$ queries, and its complement $q^{nov}$}
\KwData{Interim query set $Q^\star$, interim set of coordinates $T$}
\Begin{
$q^{cov} \gets \emptyset $\;
$q^{nov} \gets q^{+} $ \tcp*{all the coordinates of q}
$Q^\star \gets$ all queries of $Q$ satisfying Theorem~\ref{theor:Same-level-intersection}\;
\ForAll {$q^{\star} \in Q^{\star}$}{
    produce $q^{\star^{+}}$ and intersect it with $q^{+}$, 
    $T$ = $q^{\star^{+}}~\bigcap~q^{+}$\;
    $q^{cov} \gets q^{cov}~\bigcup~T$; $q^{nov} \gets q^{nov}~-~T$ \;
}
\Return{$q^{cov}$, $q^{nov}$ }
}
\caption{\sf{Compute Partial Immediate (Same-level) Cube Coverage }}
\label{algo:partialImmediateCoverage}
\end{algorithm}

Algorithm~\ref{algo:partialImmediateCoverage} isolates all queries that are candidate for a same-level containment computation, and progressively identifies, which of their cells are also part of the new query's, $q$, result.

\ifdraft

\subsection{Derivable intersection}\label{sec:derivalbeIntersection}

\inlineRem{TBA - still pending} 

What if we are not interested in checking whether  a subset of the contents of the new query $q^n$ are immediately intersecting with the results of a previous query $q$, but derivable from it? Practically, this means that the requirement that the schema is the same is violated.

For example, what if, all else being equal, the new query groups data by year and the previous query groups them by month and the two queries range over time intervals that overlap in a rollable way? In this case, there is intersection at the most detailed level which is also rollable, but due to the fact that the condition on schema identity is violated, the two queries are not eligible for same-level intersection (let alone containment).

Thus, the \underline{decision problem} for the derivable intersection is whether the previous query can be exploited to compute parts of the new one, \emph{by transforming} its results.

Concerning the \underline{enumeration problem}, the question is, again, a derivable partial coverage one: which cells of $q$ can be computed by transforming the cells of the results of the queries in $Q$? Again, the result is a pair, comprising a set of coordinates for the cells covered, and a set of coordinates for the cells not covered.
\fi

\newpage
\section{Query Distance}\label{sec:queryDistance}

\result{How similar are two queries?}
Fundamentally, the distance of two queries is the weighted sum of the distances of their components (see \cite{DBLP:conf/icde/BaikousiRV11}, \cite{DBLP:journals/dss/GolfarelliT14}, \cite{DBLP:journals/kais/AligonGMRT14}). To support our discussion in the sequel we assume two queries \textit{over the same data set} in a multidimensional space of $n$ dimensions.

\[q^a = \tuple{ \mathbf{DS}^{0},\ \phi^a,\ [L_1^a,\ldots,L_n^a, M_1^a,\ldots,M_{m^a}^a],\ [agg^a_1(M^{a^0}_1),\ldots,agg^a_m(M^{a^0}_{m^a})]\ }\]
and
\[q^b = \tuple{\mathbf{DS}^{0},\ \phi^b,\ [L_1^b,\ldots,L_n^b, M_1^b,\ldots,M_{m^b}^b],\ [agg^b_1(M^{b^0}_1),\ldots,agg^b_m(M^{b^0}_{m^b})]\ }\]

As in all other cases, we will assume that the selection conditions are simple selection conditions, including $n$ atoms, each of the form $D.L~\in~\{v_1, \ldots, v_k\}$.

The distance of the two queries is expressed by the formula

\[ \delta(q^a, q^b) = w^\phi\delta^\phi(q^a, q^b) + w^L\delta^L(q^a, q^b) + w^M\delta^M(q^a, q^b) ,\]  

\noindent such that the sum of the weights $w^i$ adds up to 1. \\

\textbf{Weights}. We follow \cite{DBLP:journals/kais/AligonGMRT14} and recommend the following weights: $w^\phi$: 0.5, $w^L$: 0.35, $w^M$: 0.15.\\

\textbf{Distance of atoms and selection conditions}. Consider two atoms $\alpha^a: D.L^a~\in~\{v_1^a, \ldots, v_k^a\}$ and $\alpha^b: D.L^b~\in~\{v_1^b, \ldots, v_k^b\}$ over levels of the same dimension. Let $V^{a^0}$ be the detailed proxy of the value-set of the first atom and $V^{b^0}$ the respective value set for the second atom. Thus, the detailed proxies of the two atoms are $\alpha^{a^0}: D.L^0~\in~V^{a^0}$ and $\alpha^{b^0}: D.L^0~\in~V^{b^0}$. The similarity of the two atoms is then the Jaccard distance of the detailed proxies of the value sets and their distance is its complement.

\[ \delta^\phi(\alpha^a, \alpha^b) = 1- \frac{ V^{a^0} \bigcap V^{b^0} }{V^{a^0} \bigcup V^{b^0}}  \]

The above formula holds for any lattice of levels. For the special (but very frequent) case of (i) a chain (total order) of levels, which means that it is necessary from the definition that the level of one atom is a descendant of the level of the other atom -- without loss of generality say $L^a \preceq L^b$, and, (ii) dicing atoms where both atoms are of the form $L~=~v$, we can decide that if $v^b~\neq~anc^{L^b}_{L^a}(v^a)$, then the distance of the two atoms is 1.

Then, the distance of two simple selection conditions, each having a single atom per dimension, is simply:

\[ \delta^\phi(q^a, q^b) = \frac{1}{n} \sum_{i=1}^{n} \delta^\phi(\alpha^a_i, \alpha^b_i) \]

\textbf{Distance of levels}. To define the distance of the two schemata, first we need to define the distance of two levels in the same dimension. This can be obtained with a variety of metrics, however, we choose to keep a simple definition.

Assume a dimension which is a simple chain (total order) of levels, starting at the lowest possible level $L^0$ and ending at the highest possible $ALL$. We provide the following definitions.

\[ height(L) = \text{ the number of edges crossed from } L^0 \text{ to reach } L \]

\[ \delta^L(L_1, L_2) = \frac{|height(L_1) - height(L_2)|} {height(ALL)}  \]\\

The generalization of this is that the denominator can take the value $max~height$ (or even $max~height$ - $min~height$ if heights do not start from zero). 

Also, for a lattice of levels, with $L^0$ as the lowest and $ALL$ as the highest level, the definitions become:

\[ height(L) = \text{ the number of edges crossed for the maximum path from } L_0 \text{ to } L \]

\[ \delta^L(L_1, L_2) = \frac{ \text{number of edges crossed from } L_1 \text{ to } L_2 \text{ via the min path among their ancestors } } 
   {\text{number of edges of the maximum ancestor path for any two nodes in the lattice}}  \]\\  

\noindent where the main idea is that the involved paths, apart from the start and end nodes comprise only their ancestors.\\

Then, the distance of two schemata, each having a single level per dimension, is simply:

\[ \delta^L(q^a, q^b) = \frac{1}{n} \sum_{i=1}^{n} \delta^L(L^a_i, L^b_i) \]

\textbf{Distance of aggregate measures}. The formulae for the production of the aggregate measures over the detailed ones are fairly easy to define, as they are based on the identity of aggregate functions and detailed measures. However, since the sets of measures of the two queries can be of arbitrary cardinality and in arbitrary order, we need to handle this too in the measuring of the distance.

Before declaring the distance of aggregate measures, we need to accurate the mapping of \textit{homologous} aggregate measures with the same aggregate function and detailed measure, between the two queries. Specifically, 

\[ mapM(M^{x}_i) = \begin{cases}
        M^{y}_j & \text{if } M^{x^0}_i \equiv M^{y^0}_j \text{ and } agg^x_i \equiv agg^y_j \\   
        null & \text{otherwise}
     \end{cases}
\]

Thus, we map each aggregate measure to a homologous one in the other query, if such a measure exists, or null, otherwise.

\[ \delta^M(q^a, q^b) = \frac{{m^a}}{{m^a}+{m^b}} \sum_{i=1}^{{m^a}} isNull(mapM(M^{a}_i))\ 
+ \frac{{m^b}}{{m^a}+{m^b}} \sum_{i=1}^{{m^b}} isNull(mapM(M^{b}_i))
\]

A possible way to implement this is to insert the $agg^\star_i(M^{\star^0}_i)$ of both queries in the same hash map $<agg^\star_i(M^{\star^0}_i), counter>$ and count the occurrences of each $agg^\star_i(M^{\star^0}_i)$: those who are equal to two produce a distance of zero, and the rest a distance of one.\\

\ifdraft
\inlineRem{tpt zwgrafies? p. 1040}
\fi



\newpage
\section{Cube Usability: computing a cube from another cube whose result is available}\label{sec:cubeUsa}

What if we are not interested in checking whether the contents of the new query $q^n$ are contained with the results of a previous query $q$, but derivable from it? Practically, this means that the requirement that the schema is the same is violated.

For example, what if, all else being equal, the new query groups data by year and the previous query groups them by month? In this case, there is containment (in fact:  equivalence) at the most detailed level, but due to the fact that the condition on schema identity is violated, the two queries are not same-level comparable.

\result{The \underline{decision} problem for cube usability concerns the determination of whether a previous query $q^b$ can be exploited to compute the new one $q^n$, by transforming its results. Apart from deciding whether this is possible, however, we also need \underline{an algorithm} for performing the computation.} 

Before that, however, we will need a small digression, to introduce distributive aggregate functions.

\subsection{Distributive aggregate functions}
The main idea with any aggregate function, like e.g., sum or max, is that it takes a bag of a values (we assume real values) as input and it produces a result value as output. Think, for example, of the case when we are computing the measure of a specific cell of a query result, by applying the respective $agg$ function to the detailed cells that pertain to it. When it comes to distributive aggregate functions, the idea is that, alternatively to computing the  result over the original bag of values, it is possible to use as input pre-computed summaries which have been computed over a partition of the original bag to disjoint subsets of it. Think, for example, of the case that we would like to compute a sum for a specific year, and for some reason, we already possess the sum per month of this year. A distributive aggregate function comes with a guarantee than whenever the partitioning reflects an equivalence relation (all members of the original domain have found a group in the intermediate computation, and the groups are disjoint), then, it is possible to compute the requested result, not over the original domain, but over the (hopefully much smaller) domain of the already available summary. See Figure~\ref{fig:distributive} for an example.\\ 

Formally, assume a bag of measure values $M$ = $\{m_1, \dots, m_k \}$. These can be measures of detailed cells, or of cells at any aggregation level, that are going to be further aggregated. Observe that we have a bag and not a set of such values (as two cells can possibly have the same measure value). We assume that the members of $M$ are members of $\mathbb{R}$.

Assume also an aggregation function $agg$: $2^\mathbb{R}$ $\rightarrow$ $\mathbb{R}$ applied to $M$ and producing a single value $v$ = $agg(M)$. 

Assume, finally, an equivalence relation of $M$, reflecting a disjoint partitioning of $M$ to subgroups. Let $M^{g^U}$ be such a disjoint partition $M^{g^U}$ = $\{\mathbf{g_1} \ldots \mathbf{g_x}\}$, s.t., $\mathbf{g_i} \cap \mathbf{g_j} = \emptyset, i \neq j$ and $M = \cup_{i=1}^{x}\mathbf{g}_{i}$. We denote the bag of values $M^U$ as the result of applying $agg$ to each member of $M^{g^U}$.

\textbf{Distributive functions and their facilitators}. Then, $agg$ is a \emph{distributive function} if there exists another (possibly, but not necessarily different) \emph{facilitator} aggregation function $agg^{F}$, s.t., for any $M^{g^U}$  being an equivalence relation of  $M$,

\[
agg(M) = agg^{F}(M^U) = agg^{F}(agg(M^{g^U}(M))) 
\]

Typical examples of such distributive functions are $sum, max, min, count$ with their facilitators $agg^{F}$ being identical to $agg$ except for $count$, where $agg'$ = $sum$ (i.e., $count(M) = sum(M^U), M^U = count(M^{g^U}(M))$).

We denote the facilitator of an aggregate function $agg$ as $agg^{F}$.

\begin{figure}[tb]
  \centering
    \includegraphics[width=0.9\textwidth]{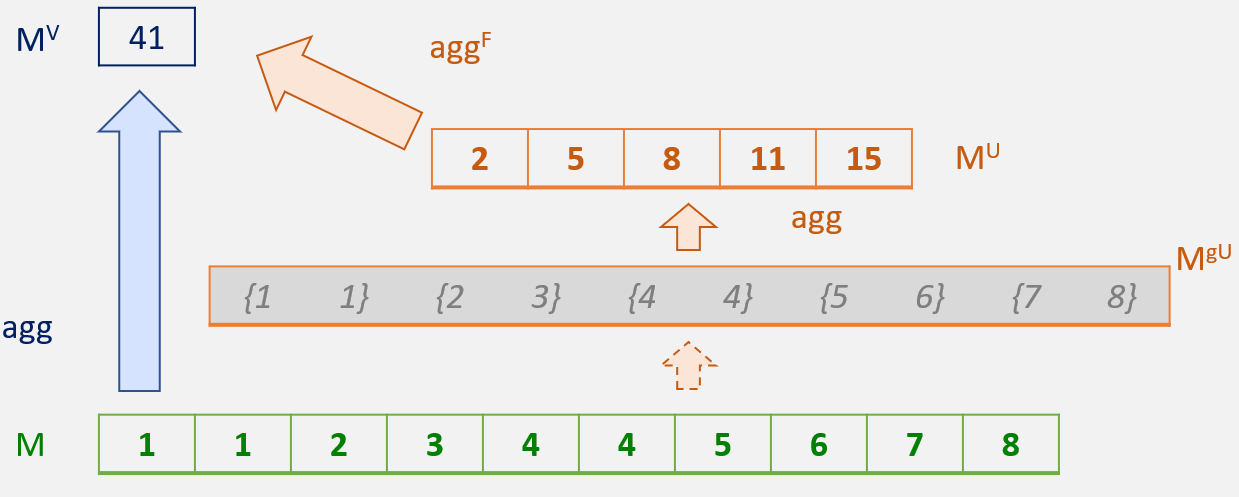}
      \caption{An example of the distributive function \textsf{sum}}
      \label{fig:distributive}
\end{figure}

\silence{
Assume now that we have a grouping criterion for the elements of $M$, which we will arbitrarily call $g^V$ and an aggregation function $agg^V$. What we want to do is to split the elements of $M$ into groups (bags of values) according to the equivalence relation criterion $g^V$ (which is a function from $M$ to all combinations of $M$, $g^V$: $M$ $\rightarrow$ $2^{M}$ ), forming a bag of bags, $M^{g^V}$. Finally, to each group of the result, apply the aggregation function $agg$, forming a new bag, $M^V$.

This is exactly what cube queries do: they apply a selection filter at detailed cells having a measure $M^0$; the survivor cells of the selection condition form the set of detailed cells whose measure values form $M$. Then, we group them by the grouper levels of the query schema (this is the criterion $g^V$), and, then we apply the $agg(M^0)$ to the resulting set to get the final result.

Now assume that we have already pre-aggregated the contents of $M$ into a set of aggregate values $M^U$. Specifically, we have already produced $M^U$ via the application of $agg$ to $M$, but with a different grouping via an equivalence relation $g^U$. The equivalence relation $g^U$ produces a bag of bags $M^{g^U}$, upon the members of which, $agg$ is applied to produce $M^U$.

Can we use the already available $M^U$ to obtain $M^V$ directly from $M^U$, instead of working with all the elements of $M$? 
To this end, we need a grouper function $g^{UV}$ splitting $M^U$ into equivalence classes forming a bag of bags $M^{g^U}$, and a (possibly different) aggregation function $agg'$ that will be applied to $M^U$.
We want the following condition to hold: $g^U(M)$ is an equivalence relation of $g^V(M)$, i.e., (a) every bag of $g^V(M)$ is the union of one or more bags of $g^U(M)$; and (b) every bag of $g^U(M)$ is a subset of exactly one bag of $g^V(M)$.

\[
M^V = agg(g^V(M)) = agg'(g^{UV}(M^U)) = agg(g^{UV}(agg(g^{U}(M))))
\]

}

\subsection{Preliminaries}
To support our discussion, in the sequel, we assume two queries, to which we refer to as $q^b$ (with the hidden implication of "broad" in terms of selection condition, "below" in terms of the level of the grouping, and, "before" in terms of creation) and $q^n$ (with the hidden implication of "narrow" in terms of selection condition, "not lower" in terms of the level of the grouping, and, "new" in terms of creation). As in all other cases, we will assume that the selection conditions are simple selection conditions, including $n$ atoms, each of the form $D.L~\in~\{v_1, \ldots, v_k\}$. We also assume that all aggregate functions are distributive. To simplify the presentation even more, we assume that there is a one-to-one mapping between the measures of the two queries and the respective aggregate functions that produce them (thus, the two queries differ only with respect to the levels of their schema and their selection conditions).

\subsection{Cube usability theorem and rewriting algorithm}

\begin{theorem}[Cube Usability]\label{theor:cubeUsability}
Assume the following two queries:

\[q^n = \tuple{ \mathbf{DS}^{0},\ \phi^n,\ [L_1^n,\ldots,L_n^n, M_1,\ldots,M_m],\ [agg_1(M_1^0),\ldots,agg_m(M_{m}^0)]\ }\]
and
\[q^b = \tuple{\mathbf{DS}^{0},\ \phi^b,\ [L_1^b,\ldots,L_n^b, M_1,\ldots,M_m],\ [agg_1(M_1^0),\ldots,agg_m(M_{m}^0)] }\]

The query $q^b$ is \result{usable for computing}, or simply, \result{usable for} query $q^n$, meaning that Algorithm~\ref{algo:CubeUsabilityRewriting} correctly computes $q^n.cells$ from $q^b.cells$, if the following conditions hold:
\begin{enumerate}
    \item both queries have exactly the same underlying detailed cube $\mathbf{DS}$, 
    \item both queries have exactly the same dimensions in their schema and the same aggregate measures $agg_i(M_i^0)$, $i$ $\in$ 1 .. $m$ (implying a 1:1 mapping between their measures), with all $agg_i$ belonging to a set of known distributive functions. To simplify notation, we will assume that the two queries have the same measure names, 
    \item both queries have exactly one atom per dimension in their selection condition, of the form $D.L~\in~\{v_1, \ldots, v_k\}$ and selection conditions are conjunctions of such atoms,
    \item both queries have schemata that are perfectly rollable with respect to their selection conditions, which means that grouper levels are perfectly rollable with respect to the respective atom of their dimension, 
    \begin{itemize}
        \item (for convenience) for both queries, for all dimensions $D$ having $D.L^g$ as a grouper level and $D.L^{\phi}$ as the level involved in the selection condition's atom for $D$, we assume $D.L^{g}$ $\preceq$ $D.L^{\phi}$, i.e., the selection condition is defined at a higher level than the grouping \rem{see next subsection for the other case too}
    \end{itemize}
    \item all schema levels of query $q^n$ are ancestors (i.e, equal or higher) of the respective levels of $q^b$, i.e., $D.L^{b}$ $\preceq$ $D.L^{n}$, for all dimensions $D$, and,
    \item for every atom of $\phi^n$, say $\alpha^n$, if (i) we obtain $\alpha^{n@L^{b}}$ (i.e., its detailed equivalent at the respective schema level of the previous query $q^b$, $L^b$) to which we simply refer as $\alpha^{n@b}$, and, (ii) compute its signature $\alpha^{{n@b}^{+}}$, then (iii) this signature is a subset of the grouper domain of the respective dimension at $q^b$ (which involves the respective atom $\alpha^b$ and the grouper level $L^b$), i.e., $\alpha^{{n@b}^{+}}$ $\subseteq$ $gdom(\alpha^b, L^{b})$.
\end{enumerate}
\end{theorem}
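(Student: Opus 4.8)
The plan is to prove the theorem by showing that Algorithm~\ref{algo:CubeUsabilityRewriting} reproduces $q^n.cells$ exactly, arguing separately about (i) which coordinates appear in the output and (ii) the measure value attached to each. Being a rewriting over $q^b.cells$, the algorithm can only select cells of $q^b$, re-map their $L^b$-coordinates to the coarser grouper levels $L^n$ via $anc_{L^b}^{L^n}(\cdot)$, and re-aggregate the grouped measure values with the facilitator functions $agg_i^F$ (recall that for a distributive $agg_i$ such a facilitator exists, e.g. $count^F = sum$). Hence the whole burden is to show that this select–regroup–refacilitate pipeline sees \emph{exactly} the detailed cells that each target cell of $q^n$ aggregates, no more and no less.

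First I would settle \textbf{availability}. Each target coordinate of $q^n$ lives at the levels $L^n$ and is the $anc_{L^b}^{L^n}$-image of one or more $L^b$-coordinates; condition 5 ($L^b \preceq L^n$) guarantees this roll-up is well defined, and condition 6 ($\alpha^{{n@b}^{+}} \subseteq gdom(\alpha^b, L^{b})$ per dimension) guarantees that every $L^b$-coordinate selected by $\phi^n$ actually occurs in $q^b.cells$. Taking the Cartesian product of these per-dimension containments (legitimate because selection conditions are conjunctions with one atom per dimension, condition 3, and a query signature equals the Cartesian product of its per-dimension grouper domains) yields $q^{{n@b}^{+}} \subseteq q^{b^{+}}$, so no needed $q^b$-cell is absent.

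Next I would prove the \textbf{detailed-area decomposition}, where perfect rollability does the work. From conditions 4 and 5 the level chain $L^b \preceq L^n \preceq L^{\sigma}$ holds for every dimension, where $L^{\sigma}$ is the level of the $\phi^n$-atom; consequently the selection level of $\phi^n$ is coarser than $L^b$, so every $L^b$-coordinate is \emph{homogeneous} with respect to $\phi^n$ — all of its detailed descendants either satisfy $\phi^{n^0}$ or none do. Perfect rollability of $q^b$ then makes each cell $c^b$ at coordinate $\gamma^b$ stand for its \emph{full} block $desc_{L^b}^{L^0}(\gamma^b)$ of detailed cells, and distinct $\gamma^b$ give disjoint blocks (the equivalence-class property of the ancestor functions). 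Combining homogeneity, availability, and perfect rollability of $q^n$, I would show that the union of the blocks of the $q^b$-cells that $\phi^n$ selects and that roll up to a fixed $\gamma^n$ equals precisely the detailed area of the $q^n$-cell at $\gamma^n$: the forward inclusion follows from homogeneity, and the reverse from condition 6 together with the fact that every detailed descendant of $\gamma^n$ selected by $\phi^n$ rolls up to some selected, present $\gamma^b$.

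Finally, \textbf{correctness of the measure} follows from distributivity: for each $\gamma^n$ the detailed area of $q^n$ is partitioned into the disjoint blocks carried by the contributing $q^b$-cells, each such cell already holding $agg_i$ applied to its block, so by the defining identity $agg_i(M) = agg_i^F(agg_i(M^{g^U}))$ the facilitator applied to the grouped $q^b$ measures returns exactly $agg_i$ over the whole detailed area, i.e. the true value of the $q^n$-cell. I expect the main obstacle to be the detailed-area equality of the third step: one must rule out any $q^b$-cell being only \emph{partially} selected by $\phi^n$ (which would corrupt the partition and break distributivity), and this no-partial-selection guarantee is exactly what the ordering $L^b \preceq L^n \preceq L^{\sigma}$ plus perfect rollability supply — so the delicate part is verifying that these hypotheses genuinely force block boundaries to align, including the degenerate non-grouper case where $L^n = ALL$ collapses a dimension while its $\phi^n$-atom must still match the corresponding restriction already applied inside $q^b$.
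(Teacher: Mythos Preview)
Your proposal is correct and follows essentially the same route as the paper's own proof: both derive the level chain $L^b \preceq L^n \preceq L^{\phi^n}$ from conditions~4 and~5, use it to ensure that $\phi^{n@b}$ never splits an $L^b$-block (what you call \emph{homogeneity}), invoke perfect rollability of $q^b$ together with condition~6 to show that the selected $q^b$-cells partition exactly the detailed area of each target $q^n$-cell, and finish with the facilitator identity for distributive functions. Your three-part organisation (availability, detailed-area decomposition, measure correctness) is a cleaner packaging of the same argument, and your explicit naming of the no-partial-selection property is a welcome clarification that the paper leaves implicit.
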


\begin{algorithm}[thb]
\DontPrintSemicolon 
\KwIn{A new query expression $q^n$ and a previously computed query $q^b$ along with its result $q^b.cells$ }
\KwOut{The result of $q^n$, $q^n.cells$}
\Begin{
$q^n.cells$ $\gets$ compute $q^{n^+}$ and for every coordinate, create a new cell with all measures initialized to $\varnothing$ \;

\If {$q^{b}$ and $q^{n}$ satisfy  all conditions of Theorem~\ref{theor:cubeUsability}}{
    \ForAll {dimensions $D_i$}{
        $\alpha^{n@b}_i$ $\gets$ the transformed atom of the new query at the schema grouper level $L^b_i$ of $q^b$ \;
    }
    $\phi^{n@b}$ = $\wedge$ $\alpha^{n@b}_i$ \;
    $q^{n@b}.cells$ $\gets$ apply $\phi^{n@b}$ to $q^b.cells$ \;
    $q^{n^G}$ = group the cells of $q^{n@b}.cells$ according to $q^{n^+}$ \;
    \ForAll {measures $M_j$}{
        $q^n.cells.M_j$ $\gets$ apply $agg^F_j$ to the j-th measure of the members of the groups of $q^{n^G}$ \;
    }
}

\Return{$q^n.cells$}
}

\caption{\sf{Answer Cube Query from a Pre-Existing Query Result}}
\label{algo:CubeUsabilityRewriting}
\end{algorithm} 
 
\begin{proof}
We can prove that Algorithm~\ref{algo:CubeUsabilityRewriting} correctly computes the result of $q^n$. We will do this by going through the steps of the algorithm, and discussing their properties.
First we compute an empty result set with "illegal", null measure values to serve as a placeholder for the result.
Then we check for the conditions of Theorem~\ref{theor:cubeUsability} and if they do not hold, we return the aforementioned illegal result.
The essence of the proof begins once we are inside the body of the \textsf{if} statement in Line 3.
By now, for every dimension $D$ we have the following conditions satisfied:
\begin{itemize}
    \item $D.L^{g^b}$ $\preceq$ $D.L^{\phi^b}$
    \item $D.L^{g^n}$ $\preceq$ $D.L^{\phi^n}$
    \item $D.L^{g^b}$ $\preceq$ $D.L^{g^n}$
    \item and therefore, $D.L^{g^b}$ $\preceq$ $D.L^{\phi^n}$ 
\end{itemize}

This means that every atom of the new selection condition $\phi^n$ is expressed over a level that is an ancestor (again: equal or higher) than the respective level of the schema of the previous query result. Therefore, the selection condition of the new query is directly translatable and applicable to the pre-computed cells of the old cube. Therefore, we are entitled to compute $\alpha^{n@b}$ which is the respective detailed equivalent of $\alpha^{n}$  at the schema level $D.L^{g^{b}}$. The conjunction of all these atoms produces a new selection condition $\phi^{n@b}$ which is the detailed equivalent of the selection condition of $q^n$ expressed at the levels of the schema of $q^b$.

It is very important to note the above property: as the selection condition $\phi^{n@b}$ is the detailed equivalent of $\phi^{n}$ it defines \underline{exactly} the same detailed area of coordinates at the most detailed level of the data set $DS$. Clearly, applying it to $q^b$ cannot involve more detailed cells than the ones of the query $q^n$. However, could it be possible that it involves less than the ones desired by $q^n$?

The trick is in the combination of perfect rollability and property 6 of the Theorem. Since both cubes are perfectly rollable with respect to their selection conditions, this means that every (aggregate) cell that appears in the result of any of the two cubes is produced by the entirety of the detailed cells that pertain to its (aggregate) coordinates. Now observe that Condition 6 imposes that the signature of each atom of the transformed selection condition $\phi^{n@b}$ is a subset of the grouper domain of the old cube: this means, that the cells of the previous cube are (a) a superset of the respective ones of the new cube (at the lower-level schema of the old $q^b$) and (b) computed from the entirety of the detailed cells that have to produce them. Therefore, it is not possible that a certain part of the detailed equivalent of $q^n$ is not covered by $q^b$. 

So, when at Line 8 of the algorithm $\phi^{n@b}$ is applied to the cells of $q^b$, the detailed area covered by the result of this selection as the data set $q^{n@b}$ is \underline{exactly} the area of the new query, correctly aggregated at the levels of the previous query, without any omissions or surplus.

Then, all we need to do is roll-up the new data set correctly to produce $q^{n}$. To this end, we exploit the monotonicity of the dimensions and the facilitators of the distributed aggregate functions. The step in Line 9 constructs an equivalence relation: since all the cells of $q^{n@b}$ are at lower or equal levels with respect to $q^{n}$, we can group them according to the values of $q^{n^{+}}$. The monotonicity of the ancestor functions guarantees a correct mapping. 
Is it possible that a cell of $q^{n@b}$ does not find an ancestor signature at $q^{n^{+}}$? No, because $\phi^{n@b}$ produces exactly the same area of detailed cells with $\phi^{n}$, and thus no matter which the levels involved are, for every $\gamma~\in~\phi^{n@b^{+}}$ there is obligatorily a $\gamma^n~\in~\phi^{n^{+}}$ which is at an ancestor level for all dimensions. Is it possible that a cell is missing? No, for the exact same reason.

The computation of the measure values (Lines 10 - 12) is possible due to the property of the facilitators of the distributive functions: we have an equivalence relation, with classes dictated by the signatures of the query result, and the facilitator function invoked for each measure. By the definition of distributive functions, the result of each cell is produced exactly by the cells of $q^{n@b}$ that pertain to it, via the application of the appropriate facilitator aggregate function. Therefore, the result is the same with $q^n.cells$.

\end{proof}

\textbf{Example}. An example of cube usability is show in Figure~\ref{fig:exUsable}, diving in the details of a subset of Figure~\ref{fig:ex}. The new query $q3$ is derivable from the previous query  $q2$ immediately. The two tables in the figure show why and how this is done. The table at the middle of the Figure allows us to easily check the conditions 1-5 of the Theorem. The table at the lower part of the figure helps checking the 6th condition of the Theorem. Following is the detailed check of all conditions of the Theorem:
\begin{enumerate}
    \item Both queries have the same detailed data set.
    \item Both queries have the same dimensions and aggregate measure with a distributive function, $sum$.
    \item Both queries have selection conditions as conjunctions of one atom per dimension of the form $L \in \ \{v_1 \ldots v_k\}$.
    \item Both queries have perfectly rollable schemata with respect to their dimensions, with selection condition atoms being defined at ancestor levels of the respective groupers
    \item All levels of $q3$ are ancestors of the respective levels of $q2$
    \item Once we convert $\phi_3$ at the level of the schema of $q_2$ (grey band at the table in the lower part of the figure), the produced selection condition $\phi_3@2$ has grouper domains that are subsets of the respective grouper domains of $q2$
\end{enumerate}

To compute the new cube from the previous one, one has simply (a) to apply $\phi_3@2$ to $q2.cells$, and (b) to group the cells surviving the filter, according to their ancestors in $q3^+$. The facilitator of $sum$ is also $sum$, as far as the aggregate measure is concerned.

\begin{figure}[tb]
  \centering
    \includegraphics[width=0.9\textwidth]{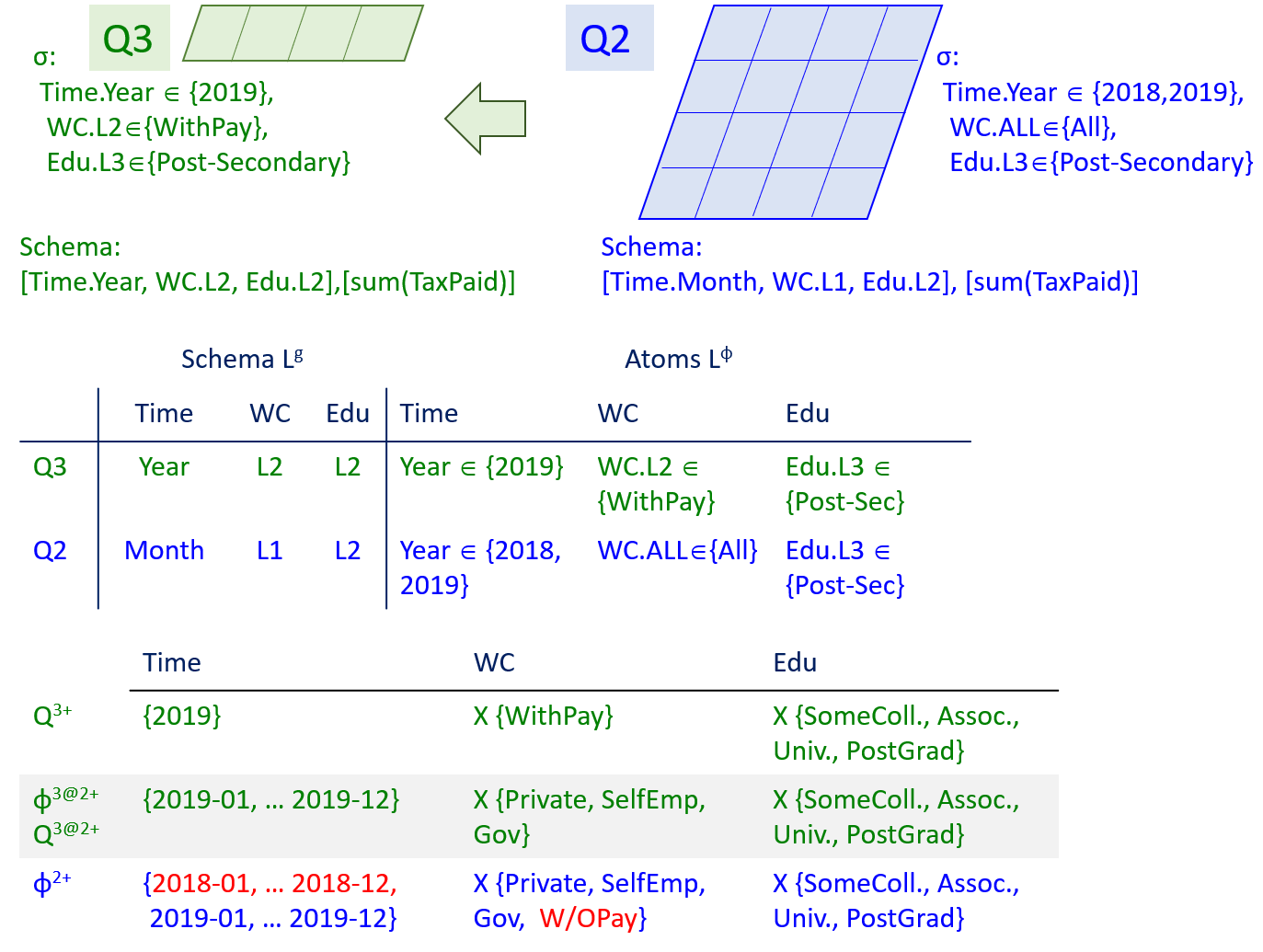}
      \caption{An example of cube usability}
      \label{fig:exUsable}
\end{figure}

\subsubsection{Comments and remarks}
There are several comments on variations of the usability theorem that can be done. We follow with the most basic ones.

\hrulefill\newline

\textbf{Measure incompatibilities}. A case we have not covered has to do with the possibility of the previous cube having more measures than the new one (the inverse case clearly collapses). The only requirement is to be able to derive a 1:1 mapping between the measures of the new and the previous query (which is done via the aggregation expressions, of course, and not via their names).

\hrulefill\newline

\textbf{Selections lower than groupers}. A case not covered by the previous proof has to do with the case where the selection condition of one of the queries has even one atom expressed at lower level than the grouper level. This makes the handling of such cases hard.

The salvation here comes from the perfect rollability property: since both queries come with a perfect rollability, this means that the low-level atom of the selection condition can be replaced by an absolutely equivalent atom at a higher level that is defined at a level at least as high as the one of the schema.
The automation of this task is left as an exercise for the reader.

\subsubsection{Easy Variants}
There are cases of very common operations in OLAP sessions where usability is inherently guaranteed. In other words, there are cases where there is no need to perform any check to decide whether the previous cube is usable or not.

\textbf{Roll-up}. When the selection conditions are identical and a roll-up is performed, if the new level of the roll-up is still equal or lower than the level of the selection condition, and the aggregate functions are all distributive, then all the properties are held by default, without any need for further checks. The only concern has to do with the case where the new query rolls-up higher than the selection condition. In this case, perfect rollability has to be checked.

\textbf{Extra filter}. Applying a new atom (or a conjunction of new atoms) to the previous cube, can potentially be handled directly, without resorting to the detailed cube. This can happen when the new atom involves the same level and a subset of the values of the previous query. To the extent that the atom of the new query is by definition expressed in a level higher than the one of its schema, and thus of the schema of the old query, (a) rollability is not affected, due to the height of the involved levels, and, (b) it is possible to transform the new atom over the schema (and thus the cell coordinates) of the previous cube. Practically, this means that we can immediately proceed in filtering out the unnecessary cells from the result of the new cube. 
The application of the new condition over the cells of the previous cube is straightforward: we compute the descendants of the values of the previous atom that are filtered out, and remove  from the result of the new query the cells having them as coordinates.

\textbf{Not applicable}. When applying drill down or drill across operations, it is impossible to exploit the previous cube; this, requires the computation of the query result from the underlying data set.

\silence{
\subsubsection{The absorption principle in logic}
The general idea of absorption in logic is:\\

If $\phi$ $\models$ $y$, then $\phi \wedge y$ $\models$ $\phi$.

Assume now that $\phi^{new}$ $\models$ $\phi^{old}$, in other words, the area of the multidimensional space that is covered by that $\phi^{new}$ is a subset of the respective area covered by that $\phi^{old}$. 

Then, ...
\subsubsection{Rollability inference for different levels of granularity}
Then, ...
}
\newpage
\section{Conclusions}\label{sec:concl}
In this paper we have provided a comprehensive model for multidimensional hierarchical spaces, selection conditions, cubes and cube queries, that can facilitate cube queries as well as the typical OLAP operators, as well as algorithms and checks for tasks involving comparing or computing cubes. We base the querying model on the hierarchical nature of the dimensions of data and require that all query semantics are defined with respect to the most detailed level of aggregation in the hierarchical space. This allows us to define equivalent expressions at different levels of granularity. Based on these premises, we provide checks and algorithms for deciding and computing results for the following problems:
\begin{itemize}
    \item Foundational containment, i.e.,  whether the area that pertains to a cube $c^n$ is a subset of the respective area that pertains to another cube $c^b$.
    \item Same-level containment, where we want to determine whether and how one cube's cells are a subset of the cells of another cube, based only on the syntactic expression, for cubes that are defined at the same level of abstraction with respect to their dimensions.
    \item Cube-intersection, i.e., deciding whether, and to what extent, the results of two cube queries overlap
    \item Query distance, i.e., given the syntactic definition of two cubes, being able to assess how similar they are.
    \item Usability, i.e., the possibility of computing a new cube from a previous one, defined at a different level of abstraction. 
\end{itemize}

Future work can primarily target more sophisticated operators. This can include comparing cubes for intrinsic properties of their cells (e.g., hidden correlations, predictions, classifications) that have to be decided via the application of knowledge extraction operators to the results, or the detailed areas, of the contrasted cubes. 
\newpage
\bibliographystyle{alpha}
\bibliography{biblio, cubeJ}
\end{document}

\newpage
\section{Examples examples examples}
Unnumbered theorem-like environments are also possible.

\begin{remark}
This statement is true, I guess.
\end{remark}

And the next is a somewhat informal definition

\theoremstyle{definition}
\begin{definition}[Fibration]
A fibration is a mapping between two topological spaces that has the homotopy lifting property for every space $X$.
\end{definition}

\begin{lemma}
Given two line segments whose lengths are $a$ and $b$ respectively there 
is a real number $r$ such that $b=ra$.
\end{lemma}

\begin{proof}
To prove it by contradiction try and assume that the statement is false,
proceed from there and at some point you will arrive to a contradiction.
\end{proof}

\begin{definition}[xaxa]
xouxa
\end{definition}

\rem{silenced Adult example}

\silence{
\begin{example}\label{ex:cubeAdultDS}
	Consider the detailed data set $DS$ displayed in Figure~\ref{fig:adultDS}, coming from the well known Adult (a.k.a census income) dataset referring to data from 1994 USA census. There are 8 dimensions (Age, Native Country, Education, Occupation, Marital status, Work class,  Race and Gender) in the  data set and a single measure, Hours per Week. Each dimension comes with a lowest possible level, which we denote as $L_0$. Being a multidimensional data set, immediately makes $DS$ a detailed cube, so in the subsequent discussions, $DS$ will also be referred to as $C^0$. This detailed data set will be the basis  of our running example. 
	
	\begin{figure*}[tbh]
		\centering
		\includegraphics[width=\linewidth]{adultDS}
		\caption{A subset of the detailed data set $DS$ (equiv., $C^0$), with its 8 dimensions at the lowest possible level of detail and its single measure (depicted in the last column).}\label{fig:adultDS}
	\end{figure*}
	
\end{example}
}